\newenvironment{proof}{{\bf Proof:  }}{\hfill\rule{2mm}{2mm}}
\numberwithin{figure}{section}
\numberwithin{equation}{section}
\newtheorem{theorem}{Theorem}[section]
\newtheorem{definition}[theorem]{Definition}
\newtheorem{corollary}[theorem]{Corollary}
\newtheorem{lemma}[theorem]{Lemma}
\newtheorem{proposition}{\hskip\parindent Proposition}[section]
\newcommand{\R}{\mathbb{R}}
\newcounter{note}[section]
\renewcommand{\thenote}{\thesection.\arabic{note}}
\newcommand{\initOneLiners}{%
    \setlength{\itemsep}{0pt}
    \setlength{\parsep }{0pt}
    \setlength{\topsep }{0pt}
}
\newcommand*\samethanks[1][\value{footnote}]{\footnotemark[#1]}
\newcommand{\shortv}[1]{}
\newcommand{\Diag}{{\sf Diag}}
\newcommand{\one}{\ensuremath{\mathbf{1}}} 
\newcommand{\mcal}{\mathcal}
\newcommand{\err}{\mathsf{err}}
\newcommand{\NPhard}{{\ensuremath{\mathbf{NP}}-hard}\xspace}
\newcommand{\ignore}[1]{}
\newcommand{\hide}[1]{}
\newcommand{\hubert}[1]{\refstepcounter{note}$\ll${\sf Hubert's
        Comment~\thenote:} {\sf \textcolor{blue}{#1}}$\gg$\marginpar{\tiny\bf HC~\thenote}}
\newtheorem{lm}{Lemma}
\theoremstyle{definition}
\newtheorem{qq}[lm]{Problem}
\title{Opinion Dynamics with Varying Susceptibility to Persuasion via Non-Convex Local Search}
\author{
Rediet Abebe\thanks{UC Berkeley} \and
T-H. Hubert Chan\thanks{The University of Hong Kong} \and
Jon Kleinberg\thanks{Cornell University} \and
Zhibin Liang\samethanks[2] \and
David Parkes\thanks{Harvard University} \and
Mauro Sozio\thanks{LTCI, T\'el\'ecom ParisTech University} \and
Charalampos E. Tsourakakis\thanks{Boston University and ISI Foundation}
}
\date{}
\begin{document}

\begin{titlepage}

\maketitle

\begin{abstract}
A long line of work in social psychology has studied variations in people's susceptibility to persuasion -- the extent to which they are willing to modify their opinions on a topic.  This body of literature suggests an interesting perspective on theoretical models of opinion formation by interacting parties in a network: in addition to considering interventions that directly modify people's intrinsic opinions, it is also natural to consider interventions that modify people's susceptibility to persuasion.  

In this work, motivated by this fact we propose a new framework for social influence.  Specifically, we adopt a popular model for social opinion dynamics,  where each agent has some fixed innate opinion, and a resistance that measures the importance it places on its innate opinion; agents influence one another's opinions through an iterative process. Under non-trivial conditions, this iterative process converges to some equilibrium opinion vector.  For the unbudgeted variant of the problem, the goal is to select the resistance of each agent (from some given range) such that the sum of the equilibrium opinions is minimized.

We prove that the objective function is in general non-convex. Hence, formulating the problem as a convex program as in an early version of this work (Abebe et al., KDD'18) might have potential correctness issues.  We instead analyze the structure of the objective function, and show that any local optimum is also a global optimum, which is somehow surprising as the objective function might not be convex. Furthermore, we combine the iterative process and the local search paradigm to design very efficient algorithms that can solve the unbudgeted variant of the problem optimally on large-scale graphs containing millions of nodes. Finally, we propose and evaluate experimentally a family of heuristics for the budgeted variation of the problem. 
\end{abstract}

\thispagestyle{empty}
\end{titlepage}

\section{Introduction}
\label{sec:intro}

A rich line of empirical work in development and social psychology has studied people's susceptibility to persuasion. This property measures the extent to which individuals are willing to modify their opinions in reaction to the opinions expressed by those around them,
and it is distinct from the opinions they express. Research in the area has ranged from adolescent susceptibility to peer pressure related to risky and antisocial behavior \cite{allen2006,dielman1992susceptibility,evans1992measuring,Steinberg2007,
WakefieldFNG2003} to the role of susceptibility to persuasion in politics \cite{Fransen2015,murphy2004persuasion,Neiheisel2015}. Individuals' susceptibility to persuasion 
can be affected by specific strategies and framings aimed at increasing susceptibility \cite{Cialdini1993,Cialdini2001,Crowley1994,Kaptein2009,mcguire1964inducing,
tormala2004source, tormala2002doesn}. For 
instance, if it is known that an individual is receptive to \emph{persuasion by authority}, one can adopt a strategy that utilizes arguments from official sources and authority figures 
to increase that individuals' susceptibility to persuasion with respect to a particular topic.  

Modifying network opinions has far-reaching implications including product marketing, public health campaigns, the success of political candidates, and public opinions on issues of global 
interest. In recent years, there has also been work in Human Computer Interaction focusing on 
{\em persuasive technologies}, which are designed with the goal of changing a person's attitude or behavior \cite{Fogg2002,IJsselsteijn2006,Kaptein2009}. This work has shown that not only do people differ in their susceptibility to persuasion, but that persuasive technologies can also be adapted 
to each individual to change their susceptibility to persuasion.  Despite the long line of empirical work emphasizing the importance of
individuals' susceptibility to persuasion, to our knowledge
theoretical studies of opinion formation models have not focused on
interventions at the level of susceptibility.  Social influence studies have considered interventions that  directly act on the opinions themselves, both in discrete models (e.g., 
\cite{Domingos2001,Kempe2003,AbebeAK18,Berger2007,Hu2014,JoshiRZ16})
and more recently in   continuous models \cite{GionisTT13,Musco2018}.

In this work, we adopt an opinion formation model inspired by the work of DeGroot \cite{DeGroot1974} and Friedkin and Johnsen \cite{Friedkin1999}, and we initiate a study of the impact of interventions at the level of susceptibility. In this model, each agent $i$ is endowed with an {\em innate opinion} $s_i \in [0,1]$ and a parameter representing susceptibility to persuasion, which we will call 
the {\em resistance parameter}, $\alpha_i \in (0,1]$. 
The innate opinion $s_i$ reflects the intrinsic position of agent $i$
on a certain topic, while  $\alpha_i$ reflects the agent's willingness, 
or lack thereof, to conform with the opinions of neighbors in the social 
network.We term $\alpha_i$ the agent's ``resistance'' because a high value of $\alpha_i$ corresponds to a lower tendency 
to conform with neighboring opinions. According to the opinion dynamics model, the final opinion of each
agent $i$ is a function of the social network, the set of innate
opinions, and the resistance parameters, determined by computing
the {\em equilibrium} state of a dynamic process of opinion updating. We study the following natural question:

\begin{tcolorbox}
\begin{qq}
\label{prob1} 
Given an opinion dynamics model, and a set of agents, each of whom has an innate opinion that reflects the agent's intrinsic position on a topic, and a range for the resistance parameter measuring the agent's propensity for changing their opinion, how should we set the agents' resistance parameter in order to minimize the total sum of opinions at equilibrium?
\end{qq}
\end{tcolorbox}

Observe that the problem is trivial if the resistance of each
agent can be picked from the closed interval~$[0,1]$. For minimizing the equilibrium opinions, it suffices to make the agent with the minimum innate opinion the most resistant (setting its resistance to 1)
and everyone else totally compliant (setting its resistance to 0). Similarly, for the maximization problem it suffices to make the agent with the maximum innate opinion the most resistant, and the rest of the nodes totally compliant.  The problem is non-trivial if
the resistance $\alpha_i$ of each agent~$i$ can take value from some interval $[l_i, u_i]$, where $0 < l_i < u_i < 1$.   We discuss the model and Problem~\ref{prob1} in greater detail in Section~\ref{sec:model}.

\noindent \textbf{Our Contributions.}  In this work, we make the following key contributions. 

\begin{itemize}
\item \emph{Opinion Dynamics with Varying Susceptibility to Persuasion.}  We introduce a novel framework for social influence that focuses on interventions at the level of susceptibility.  

\item \emph{Analysis of the unbudgeted problem structure.}  We prove that the objective function is in general neither convex nor concave.  We analyze the mathematical structure of the problem in Section~\ref{sec:structure}.
Perhaps the most important technical insight in this paper is that we show (in Lemma~\ref{le:alpha_descend}) that
if the current vector solution is not optimal, then there exists a coordinate that can be flipped
such that the objective will be strictly improved.  This shows that an optimal vector can be found by a simple local search algorithm.

\item \emph{Local search with irrevocable updates.}  In general, local search could still take exponential time to find an optimal solution, for instance, the simplex algorithm for linear programming.
For minimizing the sum of equilibrium opinions, we show (in Lemma~\ref{lemma:once}) that starting from the upper bound resistance vector,
then the local search algorithm will flip each coordinate at most once, which implies that an optimal vector can be found in polynomial time.

\item \emph{Efficient Local Search on Large-Scale Graphs.}  Typically, in local search, the objective function needs to be evaluated
at the current solution in each step.  However, since the objective function involves matrix inverse,
its evaluation will be too expensive when the dimension of the matrix is in the order of millions.
Instead, we use the iterative process of the opinion dynamics model itself to approximate the equilibrium vector.
We have developed several update strategies for local search.  For \emph{conservative} or \emph{opportunistic} updates,
one always makes sure that the error of the estimated equilibrium vector is small enough before any coordinate of the resistance vector is flipped.  For \emph{optimistic} update, one might flip a coordinate of the resistance vector even before the estimated equilibrium vector is accurate enough.  However, this might introduce mistakes which need to be corrected later.  Nevertheless,
experiments show that mistakes are rarely made by the optimistic update strategy. In any case,
for all three update strategies, an optimal vector will be returned when the local search terminates.

Our approaches are scalable and can run on networks with millions of nodes. We report the experimental results in Section~\ref{sec:experiment}. In particular, using multiple number of threads, the optimistic update strategy can solve the problem optimally on networks with up to around 65 million nodes.

\item  \emph{Scalable Heuristics for the Budgeted problem.} We provide a family of efficient heuristics for the budgeted version of our problem, and a detailed experimental evaluation on large-scale real-world networks.
\end{itemize}

\noindent \textbf{Comparison with Previous Versions.}  A preliminary version~\cite{AbebeKPT18} of this work presented 
the problem, but it was overlooked that the objective function might not be convex or concave.
A subsequent work~\cite{DBLP:conf/www/ChanLS19} rectified this issue, and showed that local search
can be performed efficiently to reach the optimal solution, even if the objective function is non-convex.
The current presentation combines results from the aforementioned two works~\cite{AbebeKPT18,DBLP:conf/www/ChanLS19}.  We have also included a more detailed Section~\ref{sec:heuristic} on heuristic algorithms for the budgeted version of the problem,
and the related experiments in Section~\ref{sec:experiment_budgeted}.
 
\section{Related work} 
\label{sec:related}

To our knowledge, we are the first to consider an optimization framework based on opinion dynamics with varying susceptibility to persuasion. In the following we review briefly some work that lies close to ours. 

\noindent \emph{Susceptibility to Persuasion.} Asch's conformity experiments
are perhaps the most famous study on the impact of agents'
susceptibility to change their opinions \cite{Asch1955}. This study shows
how agents have different propensities for conforming with others.
These propensities are modeled in our context by the set of parameters
$\alpha$. Since the work of Asch, there have been various theories on
peoples' susceptibility to persuasion and how these can be affected. A
notable example is Cialdini's Six Principles of Persuasion, which
highlight reciprocity, commitment and consistency, social proof,
authority, liking, and scarcity, as key principles which can be
utilized to alter peoples' susceptibility to persuasion
\cite{Cialdini1993,Cialdini2001}. This framework, and others, have been
discussed in the context of altering susceptibility to persuasion in a
variety of contexts.  Crowley and Hoyer \cite{Crowley1994}, and McGuire
\cite{mcguire1964inducing} discuss the `optimal arousal theory', i.e., how novel
stimuli can be  utilized for persuasion when discussing arguments.

\noindent \emph{Opinion Dynamics Models.}  Opinion dynamics model social learning processes. DeGroot introduced a continuous opinion dynamics model in his seminal work on consensus formation \cite{DeGroot1974}. A set of $n$ individuals in society start with initial opinions on a subject. Individual opinions are updated using the average of the neighborhood of a fixed social network. Friedkin and Johnsen \cite{Friedkin1999} extended the DeGroot model to include both disagreement and consensus by mixing each individual's \emph{innate belief} with some weight into the averaging process. This has inspired a lot of follow up work, including \cite{AcemogluO11,BindelKO15,DasGM2014,GhaderiS13,GionisTT13}.

\noindent \emph{Optimization and Opinion Dynamics.}  Bindel et al. use the Friedkin-Johnsen model as a framework for understanding  the price of
anarchy in society when individuals selfishly update their opinions in
order to minimize the stress they experience \cite{BindelKO15}. They also
consider network design questions: given a budget of $k$ edges, and a
node $u$, how should we add those $k$ edges to $u$ to optimize an
objective related to the stress?   Gionis, Terzi, and Tsaparas \cite{GionisTT13} use the
same model to identify a set of target nodes whose innate opinions can
be modified to optimize the sum of expressed opinions.
Musco, Musco, and Tsourakakis adopt the same model to understand
which graph topologies minimize the sum of disagreement and
polarization \cite{Musco2018} .

\noindent \emph{Inferring opinions and conformity parameters.} While the
expressed opinion of an agent is readily observable in a social network,
both the agent's innate opinion and conformity parameter are hidden, and
this leads to the question of inferring them. Such
inference problems have been  studied by Das et al. \cite{DasGKP2014,Das2013}.
Specifically, Das et al. give a near-optimal sampling algorithm for
estimating the true average innate opinion of the social network and
justify the algorithm both analytically and experimentally \cite{Das2013}.
Das et al.  view the problem of susceptibility parameter estimation as
a problem in constrained optimization and give efficient algorithms,
which they validate on real-world data \cite{DasGKP2014}.

\noindent \emph{Non-Convex Optimization.}  In general, optimizing a non-convex function under non-convex constraints is NP-hard.  However, in many cases, one can exploit the structure of the objective function or constraints to devise polynomial-time algorithms; see the survey by Jain and Kar~\cite{DBLP:journals/ftml/JainK17} on non-convex optimization algorithms encountered in machine learning.
Indeed, variants of the gradient descent have been investigated
to escape saddle points by Jin et al.~\cite{Jin0NKJ17},
who also gave examples of problems where all local optima are also global optima;
some examples are 
tensor decomposition~\cite{GeHJY15}, dictionary learning~\cite{SunQW17}, phase retrieval~\cite{SunQW16}, matrix sensing~\cite{BhojanapalliNS16,ParkKCS17} and matrix completion~\cite{GeLM16}.  However,
all these problems involve some quadratic loss functions,
whose structures are totally different from our objective functions which involve matrix inverse.

Hartman~\cite{hartman1959} considered the special case that the objective function is the difference of
two convex functions. Strekalovsky devised a local search method to optimize such objective functions.  Even though the objective functions in our problem are somewhere convex and somewhere concave (see Figure~\ref{fig:example}),
it is not immediately clear if they can be expressed as differences of convex functions.

\section{Model} 
\label{sec:model}

Let $G = (V,E)$ be a simple, undirected graph, where $V = [n]$ is the set of agents and $E$ is the set of edges. Each agent $i \in V$ is associated with an \emph{innate opinion} $s_i \in [0,1]$, where higher values correspond to more favorable opinions towards a given topic and a parameter measuring an agent's susceptibility to persuasion $\alpha_i \in (0,1]$, where higher values signify agents who are less susceptible to changing their opinion. We call $\alpha_i$ the \emph{resistance parameter}. The opinion dynamics evolve in discrete time according to the following model, inspired by the work of \cite{DeGroot1974,Friedkin1999}: 

\begin{equation}
\label{eq:dynamics}
z_i(t+1) = \alpha_i s_i + (1-\alpha_i) \frac{ \sum\limits_{j \in N(i)}  z_j(t) }{\text{deg}(i)}.
\end{equation}

\noindent   Here, $N(i) = \{ j : (i, j) \in E \}$ is the set of neighbors of $i$, and $\text{deg(i)}=|N(i)|$ is the degree of node $i$.  Equivalently, by defining $A = \Diag(\alpha)$ to be the diagonal  matrix with $A_{ii} = \alpha_i$, $I$ is the identity matrix, and $P \in [0,1]^{V \times V}$ to be the row stochastic matrix (i.e., each entry of $P$ is non-negative
and every row sums to 1) that captures agents interactions, we can rewrite Equation~\ref{eq:dynamics} as  $z^{(t+1)} := A s + (I - A) P z^{(t)}$.   Equating $z^{(t)}$ with $z^{(t+1)}$, one can see that the equilibrium opinion vector is given by $z = [I - (I-A)P]^{-1} As$, which exists under non-trivial conditions such as every $\alpha_i > 0$.
In the rest of this paper, we always call $P$ the interaction matrix.

We quantify Problem~\ref{prob1} as follows. The objective is to choose a resistance vector $\alpha$ to minimize the sum of equilibrium opinions $\langle \one, z \rangle = \one^\top z$. Observe that one can also consider maximizing the sum of equilibrium opinions; however, since the techniques are essentially the same, we will focus on
the minimization variant of the problem.

\begin{definition}[Opinion Susceptibility Problem]
\label{defn:osp}
Given a set $V$ of agents
with innate opinions $s \in [0,1]^V$
and interaction matrix $P \in [0,1]^{V \times V}$,
suppose for each $i \in V$, its resistance is restricted
to some interval $\mcal{I}_i := [l_i, u_i] \subseteq [0,1]$
where we assume that $0 < l_i < u_i < 1$.

The objective is to choose $\alpha \in \mcal{I}_V := \times_{i \in V} \mcal{I}_i \subseteq [0,1]^V$
such that the following objective function is minimized:
$$f(\alpha) := \one^\top [I-(I-A)P]^{-1}As,$$
where $A=\Diag(\alpha)$ is the diagonal matrix with
$A_{ii}=\alpha_i$.  Observe that the assumption $\alpha > 0$ ensures that the above inverse exists.
\end{definition}

\noindent \emph{Unbudgeted vs Budgeted Variants.} In Definition~\ref{defn:osp},
we are allowed to modify the resistance of any agent, and this is known as the
unbudgeted variant. We also consider the budgeted variant:  given some initial resistance vector and a budget~$k$, the resistance of at most $k$ agents can be changed. In this paper, we focus on efficient algorithms that optimally solve the
unbudgeted variant. In Section~\ref{sec:hardness}, we prove that the budgeted variant is NP-hard, and we propose efficient heuristics that scale to large networks. Designing  algorithms with solid approximation guarantees for the budgeted variant is an interesting open problem.

\noindent \textbf{Technical Assumption.}  To simplify our proofs, we assume that the interaction matrix $P$ corresponds to an irreducible random walk.  Irreducibility is satisfied if $P$ arises from a connected graph.
\section{NP-hard Budgeted Opinion Susceptibility Problem}
\label{sec:hardness}

We now consider the setting where there is a constraint on the size of the target-set. That is, we want to identify a set $T \subseteq V$ of size $k$ 
such that changing the resistance parameters of agents in $T$ optimally maximizes (resp. minimizes) the sum of equilibrium opinions. 
We use $\alpha^{(0)}$ to denote the given initial resistance vector.
For $T \subseteq V$,
we define $F(T) := \max\{ f(\alpha): \forall i \notin T, \alpha_i = \alpha^{(0)}_i \}$;
observe that $F$ is defined with respect to the
initial resistance vector $\alpha^{(0)}$. The budgeted opinion optimization problem is to maximize $F(T)$ subject to the budget constraint $|T| \leq k$.

\noindent \begin{theorem}
The budgeted opinion optimization problem is \NPhard. 
\end{theorem}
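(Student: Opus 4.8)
The plan is to prove NP-hardness by a polynomial-time reduction from a known NP-hard problem. Since the budgeted opinion optimization problem asks us to select a size-$k$ subset $T$ of agents whose resistance parameters we may modify in order to maximize $F(T)$, the combinatorial structure of choosing a best $k$-subset strongly suggests reducing from a subset-selection problem such as \textbf{Vertex Cover}, \textbf{Maximum Coverage}, or \textbf{Independent Set}. My first step is to fix the target problem; I expect \textbf{Vertex Cover} (or its maximization cousin, the densest/maximum-coverage selection) to be the most natural source, because the objective $F(T)$ depends on how the chosen agents interact through the matrix $P$, and vertex-cover-type instances let us encode ``which edges/neighbors are influenced'' directly into the graph $G$ underlying the interaction matrix.

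\textbf{Second,} I would design a gadget that translates a graph instance $H = (V_H, E_H)$ of Vertex Cover into an instance of the opinion susceptibility problem. The key is to choose the innate opinions $s$, the initial resistance vector $\alpha^{(0)}$, the feasible intervals $\mcal{I}_i$, and the interaction matrix $P$ so that changing the resistance of an agent $i$ (moving $\alpha_i$ within $[l_i,u_i]$ toward its extreme value) contributes to the objective $f(\alpha) = \one^\top [I-(I-A)P]^{-1} A s$ exactly when the corresponding vertex is placed in the cover. Because the equilibrium map is the matrix inverse $[I-(I-A)P]^{-1}$, the technical heart of the reduction is controlling how a single coordinate flip in $A$ propagates through this inverse. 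I would try to make the contributions nearly \emph{separable} — for instance by using a sparse or nearly block-diagonal $P$, or by attaching pendant ``opinion sink'' nodes to each vertex — so that the global objective decomposes into a sum of per-edge or per-vertex terms, with the budget $k$ playing the role of the cover size. The goal is an equivalence of the form: there is a vertex cover of size $k$ in $H$ if and only if $F(T) \ge \tau$ for some explicitly computed threshold $\tau$.

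\textbf{The main obstacle} I anticipate is precisely this propagation through the matrix inverse: unlike additive objectives, $f(\alpha)$ couples all coordinates through $[I-(I-A)P]^{-1}$, so flipping one resistance parameter changes the equilibrium opinion of \emph{every} agent, not just the flipped one. To keep the reduction clean I would need either (i) a careful gadget in which off-gadget interactions are provably negligible (e.g. by scaling edge weights so cross-terms are lower-order), or (ii) an exact algebraic identity — such as the Sherman–Morrison formula for the rank-one update induced by changing a single $\alpha_i$ — that lets me compute the marginal change in $f$ in closed form and verify the threshold condition exactly. Establishing that the chosen thresholds separate ``yes'' from ``no'' instances, despite these global couplings, is the step I expect to require the most care.

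\textbf{Finally,} I would verify that the reduction runs in polynomial time: the instance size (number of agents, the support of $P$, and the bit-complexity of $s$, $l_i$, $u_i$) must be polynomial in the size of $H$, and the threshold $\tau$ must be computable in polynomial time. Since the earlier sections establish that the unbudgeted objective is well-defined whenever $\alpha > 0$ (and in particular under the irreducibility assumption on $P$), I would ensure the constructed $P$ and intervals $[l_i, u_i] \subseteq (0,1)$ satisfy these conditions so that $F(T)$ is well-defined for every feasible $T$. Concluding the equivalence then gives NP-hardness of the budgeted problem, completing the proof.
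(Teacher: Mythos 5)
Your high-level strategy --- a reduction from Vertex Cover with pendant gadgets attached to each vertex and a threshold separating yes- from no-instances --- is exactly the route the paper takes, and you correctly identify the central difficulty (the global coupling through $[I-(I-A)P]^{-1}$). However, the proposal stops at the plan and leaves unresolved precisely the steps where the proof's content lives. The concrete gadget matters: the paper restricts to $d$-regular graphs (so the threshold $\theta = (\sqrt{d}+1)k + (\sqrt{d}+1)(n-k)\frac{d}{d+\sqrt{d}}$ is uniform over vertices), and attaches to each vertex $i$ both $\sqrt{d}$ degree-one \emph{flexible} nodes with resistance $0$ and $\sqrt{d}$ large \emph{stubborn} cliques of size $\sigma = 2n^2(\sqrt{d}+1)$ with resistance $1$ and innate opinion $0$. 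With resistances at the extremes $0$ and $1$, the equilibrium is computed exactly from the local fixed-point equations (a covered vertex gets opinion $1$, an uncovered one gets $\frac{d}{d+\sqrt{d}}$, and an uncovered vertex with an uncovered neighbor gets at most $\frac{d-1}{d+\sqrt{d}-1} < \frac{d}{d+\sqrt{d}}$); neither Sherman--Morrison nor a negligible-cross-term estimate is what actually closes the argument.

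More importantly, your plan does not address the hardest part of the backward direction: showing that an adversarially chosen budget set $T$ gains nothing by spending budget on gadget nodes rather than on original vertices. The paper handles this with an explicit exchange argument --- replacing a flexible or stubborn gadget node in $T$ by an original vertex never decreases $F$, where the clique size $\sigma$ is chosen large enough that the loss incurred in the stubborn clique (less than $n^2/\sigma$) is dominated by the gain at the swapped-in vertex (at least $\frac{\frac{1}{2}\sqrt{d}}{d+\sqrt{d}} = n^2/\sigma$). Without this step the equivalence ``a vertex cover of size $k$ exists iff $F(T)\geq\theta$ for some $|T|\leq k$'' does not follow, since the optimizer is free to place budget anywhere in the constructed graph. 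Finally, since the problem requires resistance intervals bounded away from $0$, one also needs the closing continuity argument that replacing the $0$ resistances by a sufficiently small $\epsilon>0$ preserves the gap $\gamma = \theta-\theta'$ between the two cases; your plan notes the well-definedness issue but not this quantitative perturbation step.
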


\begin{proof}
We give a reduction from the vertex cover problem for regular graphs.
Given a $d$-regular graph $G=(V, E)$ and an integer $K$, the vertex
cover problem asks whether there exists a set $S$ of nodes with size at
most $K$ such that $S$ is a vertex cover, i.e., every edge in $E$ is
incident to at least one node in $S$.
For simplicity, we assume that $\sqrt{d}$ is an integer.

We give a reduction from the above vertex
cover problem to
the decision version of the opinion optimization problem.
In addition to a given graph $G'$, the
innate opinion vector $s$, the
initial resistance vector $\alpha^{(0)}$
and the budget $k$,
an instance of the decision
version of opinion maximization 
also has some threshold $\theta$.
The instance is ``yes'' \emph{iff}
there exists some node set $T$ in $G'$ with
size at most $k$ such that $F(T) \geq \theta$. To illustrate our ideas, we first give a reduction in which each agent's resistance parameter is in the range $[0,1]$.
Then, we show how to restrict the resistance to the range $[\epsilon, 1]$
for some small enough $\epsilon > 0$.

\noindent \textbf{Reduction Construction.}
Suppose we are given an instance of the vertex cover problem for
regular graphs. We construct an instance of the decision version of
the opinion optimization problem. Define $G' = (V \cup V', E
\cup E')$, where $V$ and $E$ come from original vertex cover problem.
For $i \in V$, $s_i = 1$ and $\alpha^{(0)} = 0$;
for $i' \in V'$, $s_{i'} = 0$, and
we give more details on their initial resistance parameters.
The additional nodes $V'$ and edges $E'$ are added as follows.
Let $\sigma = 2 n^2 (\sqrt{d}+1)$ (specified later).

For each $i \in V$, we add $(\sigma + 1) \sqrt{d}$ additional nodes $V_i$
into $V'$:

\begin{enumerate}[(i)]

\item $\sqrt{d}$ \emph{flexible} nodes. Each such node has degree 1 and
is connected only to node~$i$; its initial resistance parameter in $\alpha^{(0)}$
is 0.

\item $\sigma \sqrt{d}$ \emph{stubborn} nodes.  These nodes
form $\sqrt{d}$ cliques, each of which has size $\sigma$.
In each clique, exactly one node is connected to $i$.
All the stubborn nodes have initial resistance parameters in $\alpha^{(0)}$
being 1.

\end{enumerate}

 Observe that in $G'$, the degree of each node in $V$ is $d+2\sqrt{d}$.
Finally,
we set the budget
$k = K$ and the threshold $\theta = (\sqrt{d} + 1)k + (\sqrt{d}+1) (n -k) \frac{d }{d +
\sqrt{d}}$.
To complete the reduction proof,
we show that there exists a vertex cover of size $k$ in $G$
\emph{iff} there exists some $T \subseteq V \cup V'$ of size $k$
such that $F(T) \geq \theta$.


\ignore{
Observe that the degree of $i$ is $d+2\sqrt{d}$ and the degree of each node in a clique is $\sigma-1$ except for the one connecting to $i$ of which the degree is $\sigma$.
The set $E'$ is the edges of the form
$(i, i')$ joining nodes in $V$ to their corresponding duplicate nodes as described above.
We set $s_i = 1$ and $\alpha_i = 0$ for all $i \in V$. We also set
$s_{i'} = 0$ for all $i' \in V'$.
For the set of duplicate nodes with degree one, we
set the resistance value of them to be $0$
and the others to be $1$. We call the former the \emph{flexible}
duplicate nodes, and the latter the \emph{stubborn} duplicate nodes. 
Observe that under this setting, the equilibrium opinion of each node in $V\cup V'$ can be at most 1.
We set
$k = K$ and $\theta = (\sqrt{d} + 1)k + (\sqrt{d}+1) (n -k) \frac{d }{d +
\sqrt{d}}$. This gives us a decision version of each of the opinion
maximization problem.
}

\noindent \textbf{Forward Direction.}  Suppose in $G$, there is some vertex cover $T \subset V$
with size $k$.  We show that in $G'$, $F(T) \geq \theta$;
we set $\alpha_i = 1$ for each $i \in T$,
while the resistance parameters of all other nodes remain the same as in $\alpha^{(0)}$.
We next analyze the equilibrium opinion of each node.
Observe that all stubborn nodes in $V'$ have equilibrium opinion 0.

For $i \in T$, node~$i$ has equilibrium opinion 1;
moreover, all its $\sqrt{d}$ flexible neighbors in $V'$ will also
have equilibrium opinion 1.

For $j \in V \setminus T$,
we compute its equilibrium opinion $z_j$.
Since $T$ is a vertex cover,
all $d$ neighbors of $j$ in $V$ are in $T$
and have equilibrium opinion 1.
All $\sqrt{d}$ flexible neighbors of $j$ in $V'$
have the same equilibrium opinion $z_j$,
while the $\sqrt{d}$ stubborn neighbors have opinion 0.
Therefore, $z_j$ satisfies the equation
$z_j = \frac{d}{d + 2 \sqrt{d}} \cdot 1 + 
\frac{\sqrt{d}}{d + 2 \sqrt{d}} \cdot z_j + 
\frac{\sqrt{d}}{d + 2 \sqrt{d}} \cdot 0$,
which gives $z_j = \frac{d}{d + \sqrt{d}}$.

Therefore,
we have
$F(T) \geq (\sqrt{d} +1 ) k + (\sqrt{d}+1) (n - k) \frac{d}{d + \sqrt{d}} = \theta$.

\noindent \textbf{Backward Direction.}
Suppose there is some $T \subseteq V \cup V'$ such that $|T| = k$ and
$F(T) \geq \theta$.  The goal is to show that
there is a vertex cover with size $k$ in $G$.
Observe that the innate opinions of nodes in $V$ are 1; hence,
if we are allowed to change the resistance of a node  $i \in V$,
we should set $\alpha_i = 1$ to maximize the total equilibrium opinion.

We consider the following two cases.

\begin{enumerate}

\item[1.] Case $T \subseteq V$.
i.e. all vertices in $T$ are from $V$.
We prove that $T$ is a vertex cover in $G$ by contradiction.

Assume that there exists an edge $\{i, j\} \in E$ such that both $i, j \notin T$. 
We derive an upper bound $z$ for the equilibrium opinion of $i$ and $j$.
Observe that for node~$i$,
at most $(d-1)$ of its neighbors are in $T$.
Hence, we
have $z \leq 
\frac{d-1}{d + 2 \sqrt{d}} \cdot 1 + 
\frac{\sqrt{d}+1}{d + 2 \sqrt{d}} \cdot z + 
\frac{\sqrt{d}}{d + 2 \sqrt{d}} \cdot 0$,
which gives $z_j \leq \frac{d-1}{d + \sqrt{d}-1}$;
the important fact is that 
$\frac{d-1}{d + \sqrt{d}-1} < \frac{d}{d + \sqrt{d}}$.

Observe that for any node in $V \setminus T$,
its equilibrium opinion is maximized when all its neighbors
in $V$ are in $T$.

Hence,
we have $F(T) \leq \theta'$,
where $\theta' :=
(\sqrt{d}+1)k + (\sqrt{d}+1) (n - k - 2) \frac{d}{d
+ \sqrt{d}} + (\sqrt{d}+1)2 \frac{d-1}{d + \sqrt{d}-1} < \theta$,
achieving the desired contradiction.

\item[2.] Case $T \setminus V \not= \emptyset$.
In this case,
we choose $T$ of size $k$ such that $F(T)$ is maximized;
if there is more than one such $T$,
we arbitrarily pick one such that $|T \cap V|$ is maximized.
For contradiction's sake,
we assume that $T \setminus V \neq \emptyset$
and $T \cap V$ is not a vertex cover of $G$.
(We actually just need the weaker condition
that $V \setminus T$ is non-empty.)
We further consider the following cases.

\begin{enumerate}[(i)]

\item There is some flexible node $u$ in $T \setminus V$.
Suppose the degree-1 node $u$ is connected to $i \in V$.
If $i \notin T$,
then one can consider $T' := T - u + i$;
if $i \in T$, then pick any $j \in V \setminus T$
and consider $T' := T - u + j$.

In either case, it follows that $F(T') \geq F(T)$ and
$|T' \cap V| > |T \cap V|$, achieving
the desired contradiction.

\item There is some stubborn node $u$ in $T \setminus V$.
Suppose $u$ is in the clique associated with $i \in T$.
Observe that at most $k$ nodes in the clique are in $T$.
Hence, it follows that the equilibrium opinion
of any stubborn node is at most $\frac{k+1}{\sigma}$.

Hence, for any $j \in V \setminus T$,
its equilibrium opinion satisfies
$z_j \leq
\frac{d}{d + 2 \sqrt{d}} \cdot 1 + 
\frac{\sqrt{d}}{d + 2 \sqrt{d}} \cdot z_j + 
\frac{\sqrt{d}}{d + 2 \sqrt{d}} \cdot \frac{k+1}{\sigma}$.
Since $\frac{k+1}{\sigma} \leq \frac{1}{2}$,
we have $z_j \leq \frac{d + \frac{1}{2} \sqrt{d}}{d + \sqrt{d}}$.

Next, if $i \notin T$, then set $j := i$;
otherwise, just pick any $j \in V \setminus T$;
consider $T' := T - u + j$.

Note that the equilibrium opinion
of stubborn nodes in the clique of $u$ can drop by at most
$\frac{k(k+1)}{\sigma} < \frac{n^2}{\sigma}$.

However, the equilibrium opinion of other nodes cannot decrease
and that of $j$
increases by at least $\frac{\frac{1}{2} \sqrt{d}}{d + \sqrt{d}}
= \frac{n^2}{\sigma}$, by the choice of $\sigma$.

Hence, we have the contradiction $F(T') > F(T)$.
\end{enumerate}
\end{enumerate}

\ignore{
i.e. there is at least one vertex in $T$ that is from $V'$.
We want to prove the following proposition so that we should not waste our budget on duplicate nodes.

\begin{proposition}
Given any node set $T$ of size at most $k$ such that $T\setminus V \not= \emptyset$,
there always exists another node set $T^*\subseteq V$ of the same size satisfying $f(T^*) > f(T)$.
\end{proposition}

\begin{proof}
First, it is easy to see that for any flexible duplicate node, increasing its resistance value can only decrease the sum of expressed opinions.
Thus, the resistance value of any flexible duplicate node in $T$ should be set to 0 to maximize $f(T)$,
which means we should not waste our budget on flexible duplicate nodes.

To simplify the notations, 
we consider each node in $V$ and its $\sqrt{d}$ flexible duplicate nodes and $\sqrt{d}$ cliques together as a $\emph{tree}$.
We call a tree is \emph{empty} if none of its nodes are in $T$.
A tree is \emph{bad} if any of its duplicate nodes is in $T$,
otherwise it is \emph{good}.
For any $i\in V$, we also use $D(i)$ to denote the set of duplicate nodes of the  tree corresponding to $i$.

Since $k\leq n$, if we spend $b$ budget on duplicate nodes,
then there will be at least $b$ empty tree.
Let $E(q)$ denote the set of any $q$ nodes in $V$ whose corresponding trees are empty.
The idea of the remaining proof is to show that for a bad tree corresponding to a node $i\in V$ with $q$ of its nodes in $T$,
there exists another node set $T'=T\setminus D(i)\cup \{i\} \cup E(q-1)$ satisfying $f(T')>f(T)$.
Accordingly, we can transform bad trees one by one into good trees until no bad tree remains.

For any $i\in V$, we call the stubborn duplicate nodes in its cliques that connect to $i$ as \emph{roots} of the corresponding tree and those that do not connect to $i$ as \emph{leaves}.
Let $z_l$ denote the maximal equilibrium opinion of leaves in the bad tree and
$z_r$ denote the maximal equilibrium opinion of roots in the bad tree.
Now we consider a bad tree corresponding to a node $i\in V$.
Let $N_r$ be the number of the bad tree's roots in $T$ and
$N_l$ be the number of its leaves in $T$.
Notice that $N_r\geq1$ (otherwise $z_r=0$ and $z_l=0$) and thus $N_l\leq k-1\leq n-1$.
Then, we have
$$z_l\leq\frac{z_r+(N_l-1)\cdot z_l}{\sigma-1}
\text{ and }
z_r\leq\frac{z_i+N_l\cdot z_l}{\sigma}.$$
Recall that $\sigma\geq n^2$ and $x_j\leq 1$ for each $j\in V\cup V'$.
After rearrangement, we have
$$z_l\leq\frac{z_r}{\sigma -N_l}\leq \frac{1}{n^2-n+1}<\frac{1}{n(n-1)}$$
which is also less than 1.
Similarly, we can obtain
$$z_r<\frac{1+(n-1)\cdot 1}{n^2}=\frac{1}{n}.$$
Now, we consider the following two cases.
\begin{enumerate}
\item If $i\in T$, then $T'$ is constructed by removing the $N_r$ roots and the $N_l$ leaves from $T$, and adding back $(N_r+N_l)$ nodes in $V$ whose corresponding trees are empty.
For an empty tree corresponding to $v\in V$, we have
$$z_v=\frac{\sum_{j\in N(v)}z_j+\sqrt{d}\cdot z_v}{d+2\sqrt{d}}
\text{ and thus }
z_v=\frac{\sum_{j\in N(v)}z_j}{d+\sqrt{d}}\leq \frac{d}{d+\sqrt{d}}.
$$
If we add $v$ into $T'$, $z_v$ will become 1 and the increase of the equilibrium opinions is at least $(\sqrt{d}+1)(1-\frac{d}{d+\sqrt{d}})=1$.
Thus, we have
$$f(T')-f(T)>(N_r+N_l) - (\frac{N_r}{n}+\frac{N_l}{n(n-1)})>0$$
with $N_r\geq 1$.

\item If $i\not\in T$, then $T'$ is constructed by removing the $N_r$ roots and the $N_l$ leaves from $T$, and adding back $i$ and $(N_r+N_l-1)$ nodes in $V$ whose corresponding trees are empty.
Before adding back $i$, we have
$$z_i=\frac{\sum_{j\in N(i)}z_j+\sqrt{d}\cdot z_i+\sum_{r\in R(i)}z_r}{d+2\sqrt{d}},$$
where $N(i)$ is the set of $i$'s neighbors in $V$
and $R(i)$ is the set of roots in the tree corresponding to $i$.
Thus,
$$z_i=\frac{\sum_{j\in N(i)}z_j+\sum_{r\in R(i)}z_r}{d+\sqrt{d}}
<\frac{d+\frac{N_r}{n}}{d+\sqrt{d}}.$$
If we add $i$ into $T'$, $z_i$ will become 1 and the increase of the equilibrium opinions is at least $(\sqrt{d}+1)(1-\frac{d+\frac{N_r}{n}}{d+\sqrt{d}})
=1-\frac{N_r}{n\sqrt{d}}$.
Then similar to the above analysis, we have
$$f(T')-f(T)>1-\frac{N_r}{n\sqrt{d}}+(N_r+N_l-1)- (\frac{N_r}{n}+\frac{N_l}{n(n-1)})>0$$
with $N_r\geq 1$ and $n\geq2$.
\end{enumerate}
Therefore, we can transform all bad trees one by one into good trees and obtain a node set $T^*\subseteq V$ of the same size of $T$ satisfying $f(T^*) > f(T)$.
\end{proof}
\end{enumerate}
}

\ignore{
If $i'$ is a flexible duplicate node, then it is clear that the expressed opinions of both $i'$ and $i$ will be $1$, if we include $i$ in the target-set instead. (Or simply remove $i'$ from the target-set if $i$ is already in $T^*$.)

Suppose $i'$ is a stubborn duplicate node.
If $i'$ does not connect to $i$, then
First, assume that $i \in T^*$.
Consider $T' = T^* \backslash\{i'\} \cup \{ j \}$, where $j \notin T^*$. The expressed opinion of $j$, and therefore of all its corresponding flexible duplicate nodes, with $T^*$ as the target-set is at most $\frac{d}{d + \sqrt{d}}$.
With $j$ in the target-set, the expressed opinion goes to $1$.
We therefore need to show that the loss in opinion from removing $i'$ from the target-set, is at most the gain by adding $j$ to the target-set.
Indeed, it is straightforward to verify that $1  \leq \left({1 - \frac{d}{d + \sqrt{d}} } \right) \left( \sqrt{d} + 1\right) $.

Now, consider $i'$ is a stubborn duplicate node and $i \notin T^*$. We compare $T$ to $T' = T \backslash \{i'\} \cup \{i\}$. Suppose there are $h$ nodes adjacent to $i$ in $V'$ with resistance parameter $0$. (In particular, $h \geq \sqrt{d} + 1$.) The expressed opinion of $i$ with $T$ as the target-set is at most $\frac{d}{d + 2\sqrt{d} - h}$, and is shared by all of the above $h$ duplicate nodes. Under $T'$, $i$ will have an expressed opinion of $1$, as will all $h$ of the nodes with $\alpha = 0$, except for node $i'$ which will now have resistance value of $1$ and expressed opinion of $0$. Thus, we need to show that  $$\frac{d}{d + 2 \sqrt{d} - h} \leq h \bigg( 1 - \frac{d}{d + 2\sqrt{d} - h}  \bigg).$$ This  holds since $h \geq 1$. 
}

This completes the reduction proof
for which the resistance parameter is chosen in the range $[0,1]$.
We next show to 
how to modify the proof for the case in which
the resistance value is chosen
in the interval $[\epsilon, 1]$
for some sufficiently small $\epsilon > 0$.

The key point is that when we view $F(T)$ as a function of the 
resistance parameters in the network $G'$ constructed
in the reduction, it is a continuous function.
Define $\gamma := \theta - \theta' > 0$, where
$\theta$ and $\theta'$ are defined as above.

One can choose $\epsilon > 0$ small enough such that the following holds.
In the above proof, if we replace any 0 resistance value with $\epsilon$,
then we have
(i) if $T$ is a vertex cover of size $k$ in $G$,
then $F(T) \geq \theta - \frac{\gamma}{3}$;
(ii) if $G$ does not have a vertex cover of size $k$,
then for any $T \subseteq V \cup V'$ of size $k$,
$F(T) \leq \theta' + \frac{\gamma}{3}$.
This completes the proof.

\end{proof}

\section{Structural Properties of Objective Function}
\label{sec:structure}

In this section, we investigate the properties of the objective
function $f$ in Definition~\ref{defn:osp};
we assume that the interaction matrix~$P$ and the innate opinion vector $s$ are fixed,
and $f$ is a function on the resistance vector~$\alpha$.
 
\noindent \textbf{Non-convex Objective.}  Contrary to the claim in a preliminary version of this work (see \cite{AbebeKPT18}), the objective $f$ in Definition~\ref{defn:osp} is in general not a convex function of $\alpha$.  In fact, the following example shows that it might be neither convex nor concave.  Consider three vertices $V = \{1,2,3\}$,
where the innate vector $s$ and the interaction matrix $P$ are given by:
$
s=
\begin{bmatrix}
1    \\
0.5  \\
0
\end{bmatrix}
$
and
$
P=
\begin{bmatrix}
    0     & 0.5  & 0.5 \\
    0.5   & 0    & 0.5 \\
    0.5   & 0.5  & 0
\end{bmatrix}
$.

Suppose we fix $\alpha_2 = \alpha_3 = 0.1$ and consider
the objective as a function of $\alpha_1$ as
$g(\alpha_1) = \one^\top [I-(I-A)P]^{-1}As$,
where $A = \Diag(\alpha_1, \alpha_2, \alpha_3)$. Then, the plot of~$g$ in
Figure~\ref{fig:example} (a) shows that it is not convex.
Moreover, suppose this time we fix $\alpha_1 = \alpha_2 = 0.1$ and consider
the objective as a function of $\alpha_3$ as
$h(\alpha_3) = \one^\top [I-(I-A)P]^{-1}As$.
Then, the plot of~$h$ in
Figure \ref{fig:example} (b) shows that it is not concave.
\begin{figure}[h]
\centering
\subfigure[]{
\includegraphics[width=1.55in]{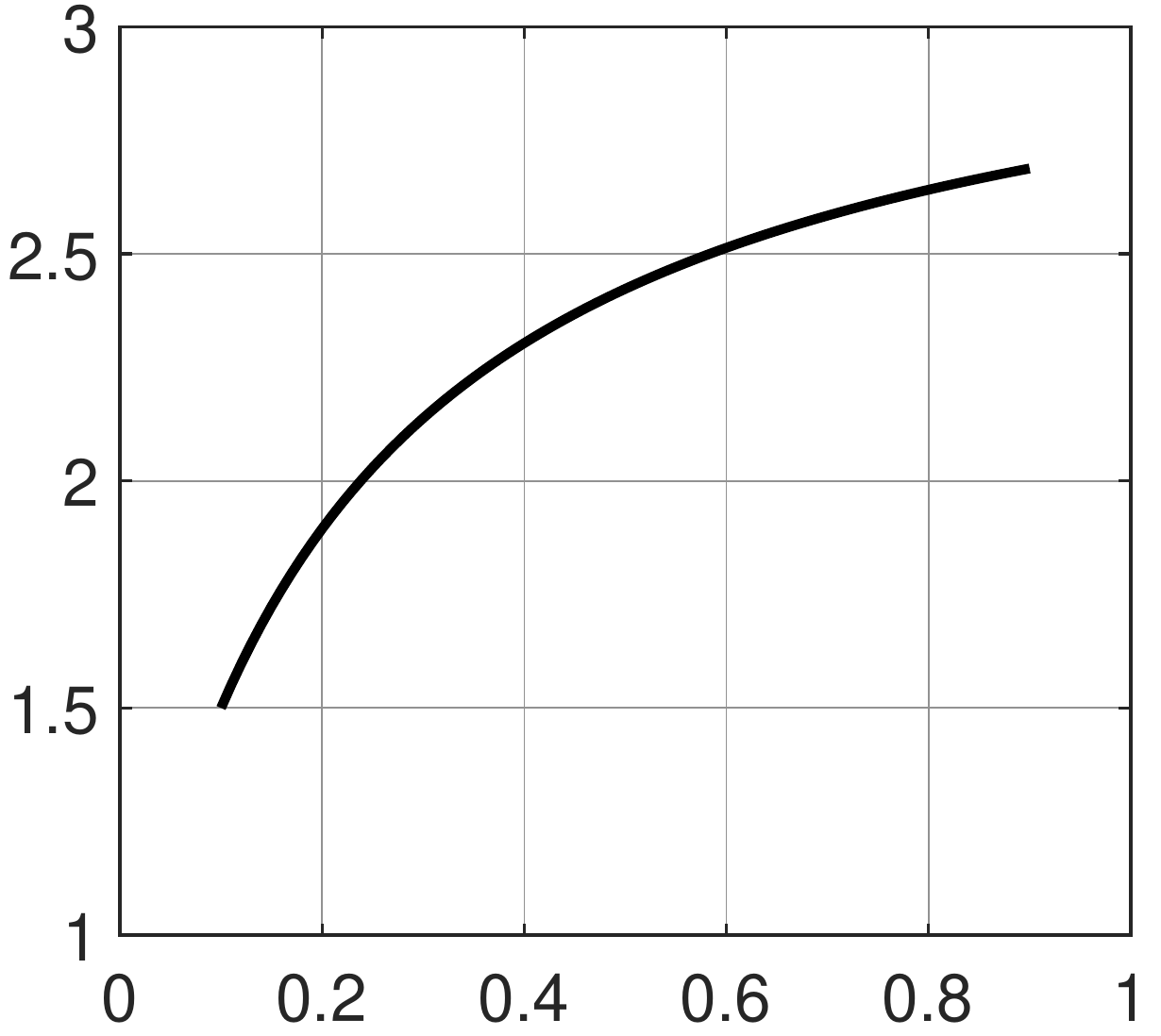}
}
\subfigure[]{
\includegraphics[width=1.55in]{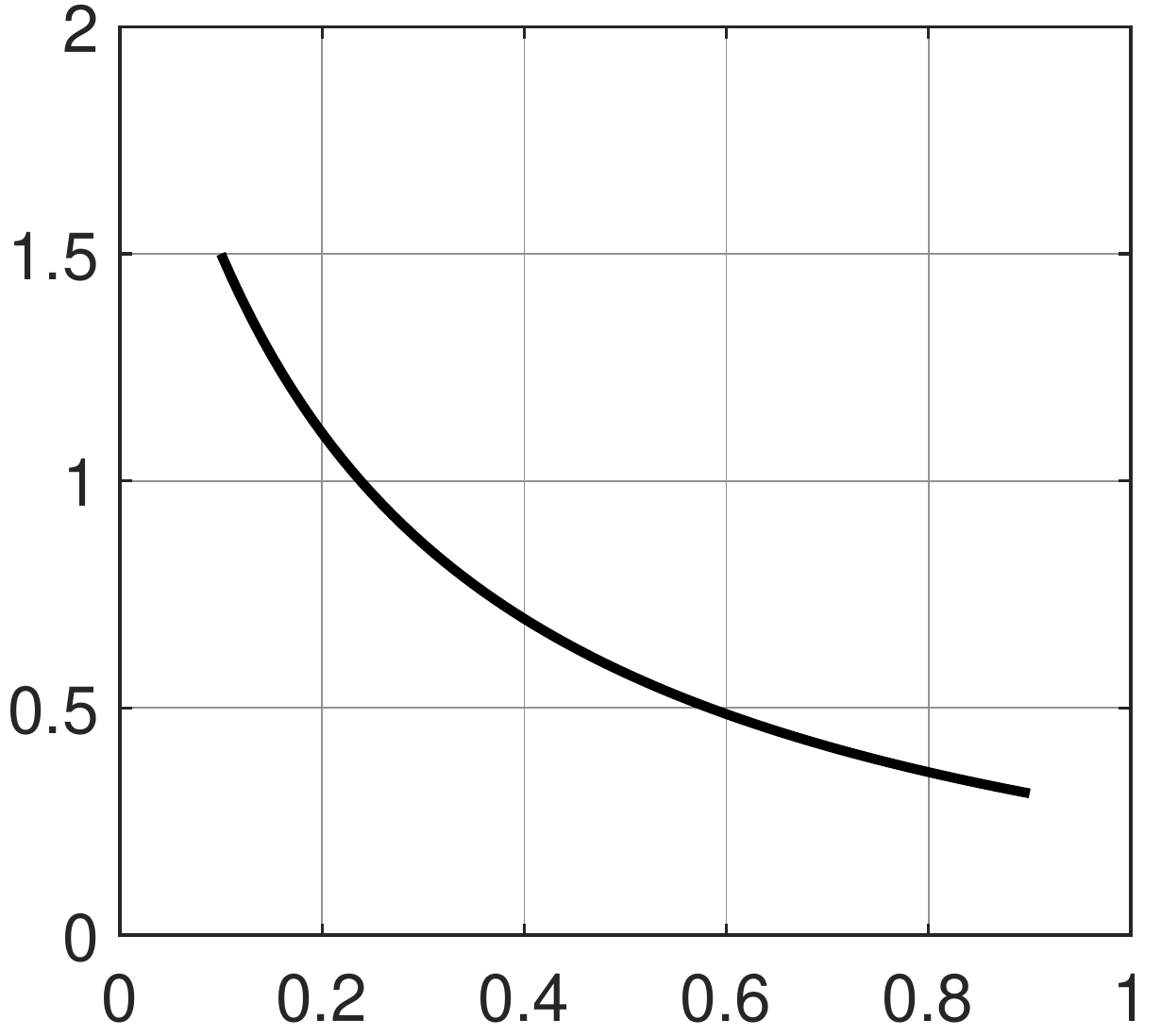}
}
\caption{Cross-Sections of Objective Function}
\label{fig:example}
\end{figure}

Fortunately, we can still exploit some properties of the function.  As we shall see,
even when the function is not convex, every local optimum (which will be defined formally) is a global optimum.  This enables us to use variants of the local search method to solve the problem optimally.

\subsection{Marginal Monotonicity}
\label{sec:monotone}

As in~\cite{AbebeKPT18}, we show that
when one chooses
the resistance $\alpha_i$ for each agent~$i \in V$,
it suffices to consider the extreme points $\{l_i, u_i\}$.
Our approach explicitly analyzes the partial
derivative $\frac{\partial f(\alpha)}{\partial \alpha_i}$
which plays a major role in the local search algorithm that we later develop.

\noindent \textbf{Intuition:
Guidance by Current Equilibrium Vector.}
Observe that given the
innate opinion vector
$s$ and irreducible interaction matrix~$P$,
for some resistance vector $\alpha \in (0,1)^V$,
the equilibrium opinion vector is given by
$z(\alpha) = [I-(I- A)P]^{-1}  A s$, where $A = \Diag(\alpha)$.
For some $i \in V$, if the innate opinion $s_i$ 
is larger than its equilibrium $z_i(\alpha)$, this suggests that
by being more stubborn, agent~$i$ should be able to increase its
equilibrium opinion.  In other words, one would expect $\frac{\partial z_i(\alpha)}{\partial \alpha_i}$
and $s_i - z_i(\alpha)$ to have the same sign.
However, what is surprising is that in Lemma~\ref{le:same_sign},
we shall see that even for any $j \in V$, 
$\frac{\partial z_j(\alpha)}{\partial \alpha_i}$ and 
$s_i - z_i(\alpha)$ have the same sign.

\textbf{Notation.}  
For any $\alpha \in \R^V$ and  $K \subseteq V$, let
$\alpha_{-K} \in \R^V$ denote the vector
such that $\alpha_{-K}(i) = \alpha(i)$ if $i \notin K$,
and $\alpha_{-K}(i) = 0$ if $i \in K$, i.e.,
the coordinates $K$ of $\alpha$ are replaced with 0.
Similarly, given $\alpha \in \R^V$,
we denote $A_{-K} =\Diag(\alpha_{-K})$.

In Definition~\ref{defn:osp}, observe that
the inverse $[I-(I-A)P]^{-1}$ is involved in the 
objective function $f(\alpha)$, where $A = \Diag(\alpha)$.
Since we wish to analyze the effect on $f(\alpha)$ of changing
only a subset of coordinates in $\alpha$,
the next lemma will be used for simplifying matrix arithmetic involving
the computation of inverses.
Its proof is deferred to Section~\ref{sec:proofs}.

\begin{lemma}[Inverse Arithmetic]
    Given ${K}\subsetneq V$ and $\alpha \in (0,1)^V$, let $A := \Diag(\alpha)$
		and recall that $P$ is the irreducible interaction matrix.
		Then, the inverse $M=[I-(I-A_{-K})P]^{-1}$ exists,
    and every entry of $M$ is positive.
		 Moreover, for each $k \in V$, 
			define $a_k = 0$ if $k \in {K}$, otherwise $a_k = \alpha_k$.
		Then, we have:
		\begin{enumerate}
        \item $(PM)_{kk}=\frac{M_{kk}-1}{1-a_k} > 0$;
        \item $(PM)_{kj}=\frac{M_{kj}}{1-a_k} > 0$, for each $j\not=k$.
    \end{enumerate}
\ignore{
    \begin{enumerate}
        \item $M_{kk}=(1-a)(PM)_{kk}+1 \geq 1$,
        \item $M_{kj}=(1-a)(PM)_{kj}$ for each $j\not=k$.
    \end{enumerate}
}
    \label{le:PM_M}
\end{lemma}
\ignore{
\begin{proof}
Observe that $P$ corresponds to an irreducible random walk.
Hence,
$(I-A_{-K})P$ represents a diluted random walk, where at the beginning of each step,
the measure at nodes $i \notin K$ will suffer a factor of $1 - \alpha_i\in(0,1)$.
The irreducibility of the random walk $P$ means that every state is reachable from any state.
Hence, starting from any measure vector, eventually the measure at every node will tend to 0.
This means that $(I-A_{-K})P$ has eigenvalues with magnitude strictly less than 1.
Therefore, we can consider the following Neumann series of a matrix:
$$M=[I-(I-A_{-K})P]^{-1}=I + \sum_{k=1}^{\infty}[(I-A_{-K})P]^k,$$
which implies that the inverse $M$ exists, and every entry of $M$ is positive;
in particular, for every $k \in V$, $M_{kk} > 1$.

By the definition of $M$, we have $[I-(I-A_{-{K}})P]M=I$.
We fix some $k \in V$.
By considering the $(k,k)$-the entry, i.e.,
the dot product between
the $k$-th row of $[I-(I-A_{-{K}})P]$
    and the $k$-th column of $M$,
		we have $$
    M_{kk}-\sum_{i \in V}(1-a_k)P_{ki}M_{ik}=1
    $$
where $a_k=0$ if $k\in {K}$, otherwise $a_k=\alpha_k<1$.
Hence, we have $
    (PM)_{kk}=\sum_{i \in V} P_{ki} M_{ik}=\frac{M_{kk}-1}{1-a_k}.
    $

Similarly, for  $j\not=k$, by considering the dot product between
    the $k$-th row of $[I-(I-A_{-{K}})P]$
    and the $j$-th column of $M$,
    we have
    $$
    M_{kj}-\sum_{i \in V} (1-a_k)P_{ki}M_{ij}=0.
    $$
		
		Hence, we have for $j\not=k$,
    $$
    (PM)_{kj}=\sum_{i \in V}P_{ki}M_{ij}=\frac{M_{kj}}{1-a_k},
    $$
as required.
\end{proof}
}

\begin{lemma}[Sign of Partial Derivative]
        In the Opinion Susceptibility Problem in Definition~\ref{defn:osp},
				given the
                innate opinion vector
                $s$ and irreducible interaction matrix~$P$,
        recall that $z(\alpha) := [I-(I-A)P]^{-1}As$, where $A = \Diag(\alpha)$.
		Then, for any $\alpha \in (0,1)^V$ and any $i, k \in V$, 
		the two values 
		$\frac{\partial z_k(\alpha)}{\partial \alpha_i}$ and
		        $s_i-z_i(\alpha)$
				have exactly the same sign in $\{-,0,+\}$.
				
				In particular,
				this implies that $\frac{\partial f(\alpha)}{\partial \alpha_i} = \sum_{k \in V} \frac{\partial z_k(\alpha)}{\partial \alpha_i}$ also has the same sign as $s_i-z_i(\alpha)$.
				\label{le:same_sign}
\end{lemma}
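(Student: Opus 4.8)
The plan is to differentiate $z(\alpha) = MAs$ directly, where I write $M := [I-(I-A)P]^{-1}$, and to show that every coordinate of the vector $\partial z/\partial\alpha_i$ is a \emph{positive} scalar multiple of $s_i - z_i(\alpha)$. First I would set $N := I-(I-A)P = I - P + AP$, so that $MN = I$ and $\partial N/\partial\alpha_i = E_{ii}P$, where $E_{ii}$ is the matrix with a single $1$ in position $(i,i)$ and zeros elsewhere; note that $E_{ii}$ acts by extracting the $i$-th row (equivalently, $E_{ii}v = v_i e_i$ for the standard basis vector $e_i$). Differentiating the identity $MN = I$ with respect to $\alpha_i$ and solving gives
\[
\frac{\partial M}{\partial\alpha_i} = -M\,\frac{\partial N}{\partial\alpha_i}\,M = -M\,E_{ii}\,P\,M .
\]

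Next I would apply the product rule to $z = MAs$, using $\partial A/\partial\alpha_i = E_{ii}$ and the fact that $MAs = z$, hence $PMAs = Pz$. This yields
\[
\frac{\partial z}{\partial\alpha_i} = -M\,E_{ii}\,P\,M\,A s + M\,E_{ii}\,s = M\,E_{ii}\,(s - Pz) = \bigl(s_i - (Pz)_i\bigr)\,M e_i .
\]
Reading off the $k$-th coordinate and recalling that $M e_i$ is the $i$-th column of $M$, this says $\partial z_k/\partial\alpha_i = M_{ki}\,\bigl(s_i - (Pz)_i\bigr)$.

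The crucial simplification is to replace $(Pz)_i$ by $z_i$. From the equilibrium relation $z = As + (I-A)Pz$ read at coordinate $i$ we get $z_i = \alpha_i s_i + (1-\alpha_i)(Pz)_i$, and therefore $s_i - (Pz)_i = (s_i - z_i)/(1-\alpha_i)$. Substituting,
\[
\frac{\partial z_k(\alpha)}{\partial\alpha_i} = \frac{M_{ki}}{1-\alpha_i}\,\bigl(s_i - z_i(\alpha)\bigr).
\]
Since $\alpha_i \in (0,1)$ the factor $1-\alpha_i$ is strictly positive, and by Lemma~\ref{le:PM_M} (applied with $K = \emptyset$) every entry $M_{ki}$ is strictly positive. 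Hence $\partial z_k/\partial\alpha_i$ and $s_i - z_i(\alpha)$ share the same sign in $\{-,0,+\}$ for every $k$. Because this common sign does not depend on $k$, summing over $k$ preserves it: $\partial f/\partial\alpha_i = \sum_{k\in V}\partial z_k/\partial\alpha_i$ is a sum of terms all of the same sign, so it too has the sign of $s_i - z_i(\alpha)$, which is the ``in particular'' claim.

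The only delicate point is the passage from the sign of $s_i - (Pz)_i$, which drops out immediately from the differentiation, to the desired sign of $s_i - z_i$; everything hinges on the positivity of the scalar $1/(1-\alpha_i)$ and on the uniform positivity of the column $(M_{ki})_k$ supplied by the earlier lemma. I expect the matrix-calculus bookkeeping to be the main place where care is needed — in particular keeping track that $E_{ii}P$ affects only row $i$, and correctly substituting $PMAs = Pz$ — but no genuine obstacle arises beyond this careful substitution.
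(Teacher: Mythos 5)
Your proposal is correct and follows essentially the same route as the paper: differentiate $M=[I-(I-A)P]^{-1}$ via $\partial M/\partial\alpha_i=-M e_ie_i^\top PM$, obtain $\partial z_k/\partial\alpha_i=M_{ki}\bigl(s_i-(Pz)_i\bigr)$, and convert $s_i-(Pz)_i$ into $\frac{1}{1-\alpha_i}(s_i-z_i)$ before invoking the positivity of $M$ from Lemma~\ref{le:PM_M}. The only cosmetic difference is that you derive the conversion from the coordinate-wise fixed-point relation $z_i=\alpha_i s_i+(1-\alpha_i)(Pz)_i$, whereas the paper derives the same identity by the matrix manipulation $Pz=(I-A)^{-1}[z-As]$; these are equivalent.
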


\begin{proof}
By the definition of the inverse of a matrix $B$,
we have 
$
BB^{-1}=I.
$
The partial derivative with respect to a variable $t$ is:
$
\frac{\partial B}{\partial t}B^{-1}
+B\frac{\partial B^{-1}}{\partial t}=0.
$
Hence, we have
$
\frac{\partial B^{-1}}{\partial t}
=-B^{-1}\frac{\partial B}{\partial t}B^{-1}.
$
Applying the above result
with $B = I-(I-A)P$ and $t=\alpha_i$
and denoting
$M = [I-(I-A)P]^{-1}$,
we get
$
\frac{\partial M}{\partial \alpha_i}
=-M\mathbf{e}_i\mathbf{e}_i^\top PM.
$
Considering $z(\alpha) = MAs$, we have
$
\frac{\partial z(\alpha)}{\partial \alpha_i}
=\frac{\partial M}{\partial \alpha_i}As
+M\mathbf{e}_i\mathbf{e}_i^\top s.
$
Replacing $\frac{\partial M}{\partial \alpha_i}$,
we obtain for any $i,k \in V$:
$$
\frac{\partial z_k(\alpha)}{\partial \alpha_i}=- \mathbf{e}_k^\top  M \mathbf{e}_i\mathbf{e}_i^\top PM As
+ \mathbf{e}_k^\top M \mathbf{e}_i\mathbf{e}_i^\top s
= M_{ki} \cdot [s_i-\mathbf{e}_i^\top Pz(\alpha)].
$$

\ignore{
\begin{align}
\begin{split}
\frac{\partial z_k(\alpha)}{\partial \alpha_i}
=&- \mathbf{e}_k^\top  M \mathbf{e}_i\mathbf{e}_i^\top PM As\\
&+ \mathbf{e}_k^\top M \mathbf{e}_i\mathbf{e}_i^\tops\\
=&\mathbf{e}_k^\top M\mathbf{e}_i[s_i-\mathbf{e}_i^\top Pz(\alpha)]
\end{split}
\nonumber
\end{align}
}
By Lemma \ref{le:PM_M} with ${K}=\emptyset$,
we know that every entry of $M$ is positive.
Thus, 
the sign of $\frac{\partial z_k(\alpha)}{\partial \alpha_i}$
is the same as that of the scalar $s_i-\mathbf{e}_i^\top Pz(\alpha)$.

Recalling $M=[I-(I-A)P]^{-1}$,
we have 
$
[I-(I-A)P]M=I
\Rightarrow
(I-A)PM=M-I
\Rightarrow
PM=(I-A)^{-1}(M-I)
$
where $(I-A)^{-1}$ exists since $\alpha_j<1$ for each $j\in V$.

Next, since $z(\alpha) = MAs$,
we have:
$$
P z(\alpha)=PMAs=(I-A)^{-1}(M-I)As
=(I-A)^{-1}[z(\alpha)-As].
$$

Finally, replacing $Pz(\alpha)$, we have
\begin{align}
\begin{split}
s_i-\mathbf{e}_i^\top Pz(\alpha)
=&s_i-\mathbf{e}_i^\top(I-A)^{-1}[z(\alpha)-As]\\
=&s_i-\frac{1}{1-\alpha_i}[z_i(\alpha)-\alpha_is_i]\\
=&\frac{1}{1-\alpha_i}[s_i-z_i(\alpha)].
\end{split}
\nonumber
\end{align}

Since $1-\alpha_i>0$,
it follows that
$\frac{\partial z_k(\alpha)}{\partial \alpha_i}$
and $s_i-z_i(\alpha)$ have exactly the same sign in $\{-,0,+\}$,
as required.
\end{proof}

\ignore{
\begin{corollary}
    \label{cor:all_decrease}
    In Definition~\ref{defn:osp},
    let $z(\alpha) := [I-(I-A)P]^{-1}As$.
    For some $\alpha$, let $\mathcal{L}$ denote the set of coordinates $k$'s
    such that $s_k-z_k(\alpha)>0$.
    Then for any $i\in \mathcal{L}$,
    if $\alpha_i$ decreases,
    $s_k-z_k(\alpha)>0$ still holds for each $k\in \mathcal{L}$.
\end{corollary}
}

The next lemma shows that the sign of the partial
derivatives with respect to coordinate~$i$
is actually independent of the current value $\alpha_i$.
Its proof is deferred to Section~\ref{sec:proofs}.

\begin{lemma}[Sign of Partial Derivative Independent of Coordinate Value]
Referring to Lemma~\ref{le:same_sign}
.
For any $\alpha \in (0,1)^V$ and any $i \in V$,
denote $M=[I-(I-A_{-\{i\}})P]^{-1}$.
Then, $\frac{\partial f(\alpha)}{\partial \alpha_i}$ has the same sign in $\{-, 0, +\}$
as $s_i -\sum_{j\not=i}M_{ij}\alpha_j s_j$,
which is independent of $\alpha_i$.
    \label{le:mar_mon}
\end{lemma}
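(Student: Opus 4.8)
The plan is to combine the sign criterion from Lemma~\ref{le:same_sign} with a rank-one (Sherman--Morrison) update that isolates the dependence of the equilibrium vector on the single coordinate $\alpha_i$. By Lemma~\ref{le:same_sign}, $\frac{\partial f(\alpha)}{\partial \alpha_i}$ already has the same sign as $s_i - z_i(\alpha)$, so it suffices to prove that $s_i - z_i(\alpha)$ has the same sign as $s_i - \sum_{j \neq i} M_{ij}\alpha_j s_j$. First I would reinterpret the target quantity: since $M = [I-(I-A_{-\{i\}})P]^{-1}$ and $(A_{-\{i\}}s)_i = 0$, the sum $\sum_{j\neq i} M_{ij}\alpha_j s_j$ equals $\hat{z}_i := (M A_{-\{i\}} s)_i$, i.e.\ exactly the value of $z_i(\alpha)$ obtained when $\alpha_i$ is set to $0$ with all other coordinates fixed. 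Because both $M$ and this sum involve only $A_{-\{i\}}$ and indices $j\neq i$, the quantity $s_i - \hat z_i$ is manifestly independent of $\alpha_i$; this already delivers the ``independent of $\alpha_i$'' half of the statement, leaving only the sign claim.

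For the sign claim I would write the full operator as a rank-one perturbation of $M^{-1}$. Using $A = A_{-\{i\}} + \alpha_i \mathbf{e}_i\mathbf{e}_i^\top$ one gets $I-(I-A)P = M^{-1} + \alpha_i \mathbf{e}_i \mathbf{e}_i^\top P$, so the Sherman--Morrison formula expresses $N := [I-(I-A)P]^{-1}$ in terms of $M$. The crucial simplification comes from Lemma~\ref{le:PM_M} applied with $K=\{i\}$ (so $a_i = 0$): it gives $(PM)_{ii} = M_{ii}-1$ and $(PM)_{ij}=M_{ij}$ for $j\neq i$, and in particular $M_{ii}>1$. This collapses the Sherman--Morrison denominator to $1+\alpha_i(M_{ii}-1)$. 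Writing $As = A_{-\{i\}}s + \alpha_i s_i \mathbf{e}_i$ and evaluating $z_i = \mathbf{e}_i^\top N A s$ then reduces to the two scalars $\mathbf{e}_i^\top M A s = M_{ii}\alpha_i s_i + \hat z_i$ and $\mathbf{e}_i^\top P M A s = (M_{ii}-1)\alpha_i s_i + \hat z_i$.

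After the cancellations I expect the closed form $z_i(\alpha) = \frac{M_{ii}\alpha_i s_i + (1-\alpha_i)\hat z_i}{1+\alpha_i(M_{ii}-1)}$, and hence $s_i - z_i(\alpha) = \frac{1-\alpha_i}{1+\alpha_i(M_{ii}-1)}\,(s_i - \hat z_i)$. Since $0<\alpha_i<1$ and $M_{ii}>1$, the scalar prefactor is strictly positive, so $s_i - z_i(\alpha)$ has exactly the same sign as $s_i - \hat z_i = s_i - \sum_{j\neq i}M_{ij}\alpha_j s_j$. Chaining this with Lemma~\ref{le:same_sign} finishes the proof.

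The main obstacle I anticipate is purely the bookkeeping in the Sherman--Morrison cancellation: an exact algebraic identity is being claimed, so a single mislaid factor of $1-\alpha_i$ or $\alpha_i$ spoils it, and the cleanest safeguard is to keep the $\alpha_i$-dependence segregated from the start through the decomposition $As = A_{-\{i\}}s + \alpha_i s_i\mathbf{e}_i$. As a cross-check (and an alternative that avoids the explicit formula), one can argue via an ODE: the proof of Lemma~\ref{le:same_sign} yields $\frac{\partial z_i}{\partial \alpha_i} = \frac{N_{ii}}{1-\alpha_i}(s_i - z_i)$ with $N_{ii}>0$, so $\phi(\alpha_i) := s_i - z_i(\alpha)$ satisfies $\phi' = -c(\alpha_i)\phi$ for a strictly positive $c$; integrating gives $\phi(\alpha_i) = \phi(0)\exp(-\int_0^{\alpha_i}c)$, whose sign is constant and equal to $\sgn(s_i-\hat z_i)$, matching the exact computation.
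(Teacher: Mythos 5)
Your proposal is correct and follows essentially the same route as the paper's proof: reduce via Lemma~\ref{le:same_sign} to the sign of $s_i - z_i(\alpha)$, apply Sherman--Morrison to $I-(I-A)P = M^{-1}+\alpha_i\mathbf{e}_i\mathbf{e}_i^\top P$, simplify with Lemma~\ref{le:PM_M} for $K=\{i\}$, and arrive at the identical identity $s_i - z_i(\alpha) = \frac{(1-\alpha_i)(s_i-\sum_{j\neq i}M_{ij}\alpha_j s_j)}{1-\alpha_i+\alpha_i M_{ii}}$ with a strictly positive prefactor. The ODE cross-check you sketch is a pleasant independent sanity check but is not needed and does not change the argument.
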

\ignore{
\begin{proof}
    Using the Sherman-Morrison formula, we consider
    \begin{align}
    \begin{split}
    X:=&[I-(I-A)P]^{-1}
    =[I-(I-A_{-\{i\}}-\alpha_i\mathbf{e}_i\mathbf{e}_i^\top)P]^{-1}
    \\
    =&[I-(I-A_{-\{i\}})P+\alpha_i\mathbf{e}_i\mathbf{e}_i^\top P]^{-1}
    \\
    =&
    M-\frac{\alpha_i}{1+\alpha_i\mathbf{e}_i^\top PM\mathbf{e}_i}
    M\mathbf{e}_i\mathbf{e}_i^\top PM.
    \end{split}
    \nonumber
    \end{align}
		
    Observe that
    $\mathbf{e}_i^\top PM\mathbf{e}_i=(PM)_{ii}$ and
    $(M\mathbf{e}_i\mathbf{e}_i^\top PM)_{ij}=M_{ii}(PM)_{ij}$
    for each $j\in V$.
    Then, by Lemma \ref{le:PM_M} with ${K}=\{i\}$, we have 
    \begin{align}
    \begin{split}
    X_{ii}=&
    M_{ii}-\frac{\alpha_iM_{ii}(PM)_{ii}}{1+\alpha_i(PM)_{ii}}
    =M_{ii}-\frac{\alpha_iM_{ii}(M_{ii}-1)}{1+\alpha_i(M_{ii}-1)}
    \\
    =&
    \frac{(1-\alpha_i+\alpha_iM_{ii}-\alpha_iM_{ii}+\alpha_i)M_{ii}}{1-\alpha_i+\alpha_iM_{ii}}
    =\frac{M_{ii}}{1-\alpha_i+\alpha_iM_{ii}};
    \end{split}
    \nonumber
    \end{align}
    and for $j\not=i$, 
    \begin{align}
    \begin{split}
    X_{ij}=&
    M_{ij}-\frac{\alpha_iM_{ii}(PM)_{ij}}{1+\alpha_i(PM)_{ii}}
    =M_{ij}-\frac{\alpha_iM_{ii}M_{ij}}{1+\alpha_i(M_{ii}-1)}
    \\
    =&\frac{(1-\alpha_i+\alpha_iM_{ii}-\alpha_iM_{ii})M_{ij}}{1-\alpha_i+\alpha_iM_{ii}}
    =\frac{(1-\alpha_i)M_{ij}}{1-\alpha_i+\alpha_iM_{ii}}.
    \end{split}
    \nonumber
    \end{align}
    Hence, we have
    \begin{align}
    \begin{split}
	s_i - z_i(\alpha)
	=&s_i-\sum_{j\in V}X_{ij}\alpha_j s_j
    \\
    =&s_i-\frac{M_{ii}\alpha_i s_i}{1-\alpha_i+\alpha_iM_{ii}}
    -\sum_{j\not= i}
    \frac{(1-\alpha_i)M_{ij}\alpha_j s_j}{1-\alpha_i+\alpha_iM_{ii}}
    \\
    =&
    \frac{s_i(1-\alpha_i+\alpha_iM_{ii}-M_{ii}\alpha_i)}{1-\alpha_i+\alpha_iM_{ii}}
    -
    \frac{(1-\alpha_i)\sum_{j\not= i}M_{ij}\alpha_j s_j}{1-\alpha_i+\alpha_iM_{ii}}
    \\
    =&
    \frac{s_i(1-\alpha_i)}{1-\alpha_i+\alpha_iM_{ii}}
    -
    \frac{(1-\alpha_i)\sum_{j\not= i}M_{ij}\alpha_j s_j}{1-\alpha_i+\alpha_iM_{ii}}
    \\
    =&
    \frac{(1-\alpha_i)(s_i-\sum_{j\not= i}M_{ij}\alpha_js_j)}
    {1-\alpha_i+\alpha_iM_{ii}}.
    \end{split}
    \nonumber
    \end{align}
		
    Since $\alpha_i\in (0,1)$, we conclude that
    $\frac{(1-\alpha_i)}{1-\alpha_i+\alpha_iM_{ii}}>0$,
		and so the result follows.
		%

\end{proof}
}

\begin{corollary}[Extreme Points are Sufficient]
\label{cor:ext}
In Definition \ref{defn:osp}, for any $i \in V$,
fixing the resistance values of all other agents except $i$,
the objective $f(\alpha)$ is a monotone function in $\alpha_i$.
This implies that to minimize $f$,		
it suffices to consider the extreme points $\alpha_i \in \{l_i, u_i\}$, for each $i \in V$.
\end{corollary}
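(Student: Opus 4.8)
The plan is to derive the whole statement from Lemma~\ref{le:mar_mon}, which already contains all the analytic content. Fix $i \in V$ and fix the values $\alpha_j$ for all $j \neq i$; regard $f$ as a function of the single variable $\alpha_i$ ranging over $[l_i, u_i]$. Lemma~\ref{le:mar_mon} tells us that $\frac{\partial f(\alpha)}{\partial \alpha_i}$ has the same sign in $\{-,0,+\}$ as the scalar $s_i - \sum_{j \neq i} M_{ij}\alpha_j s_j$, where $M = [I-(I-A_{-\{i\}})P]^{-1}$. The key observation is that neither $M$ (which uses $A_{-\{i\}}$, i.e.\ the $i$-th coordinate zeroed out) nor the sum $\sum_{j\neq i} M_{ij}\alpha_j s_j$ depends on $\alpha_i$. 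Hence this sign is constant as $\alpha_i$ varies over $(0,1)$, and in particular over $[l_i, u_i]$.

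First I would conclude monotonicity: a differentiable single-variable function whose derivative keeps a constant sign on an interval is monotone there (nonincreasing if the sign is $-$ or $0$, nondecreasing if the sign is $+$ or $0$). Therefore $f$, as a function of $\alpha_i$ alone, is monotone on $[l_i, u_i]$, and a monotone function on a closed interval attains its minimum at one of the two endpoints $l_i$ or $u_i$. Second, to upgrade this per-coordinate statement to ``extreme points suffice'' over the whole box $\mcal{I}_V$, I would use a coordinate-by-coordinate descent: starting from any $\alpha \in \mcal{I}_V$, process the coordinates $i = 1, \dots, n$ in turn, and at step $i$ move $\alpha_i$ to whichever endpoint of $[l_i,u_i]$ gives the smaller value of $f$ (either works in case of the zero sign). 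By the monotonicity just established, each such move does not increase $f$; moreover, since Lemma~\ref{le:mar_mon} applies for \emph{every} $\alpha \in (0,1)^V$, the monotonicity of the next coordinate is unaffected by having already pushed earlier coordinates to their endpoints. After all $n$ steps we reach a point whose every coordinate lies in $\{l_i, u_i\}$ and whose objective value is no larger than that of the original $\alpha$. Taking $\alpha$ to be any global minimizer shows that the minimum of $f$ over $\mcal{I}_V$ is attained at such an extreme point.

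I do not expect a genuine obstacle here, since Lemma~\ref{le:mar_mon} supplies the essential fact that the partial-derivative sign is independent of $\alpha_i$; the remaining work is a standard ``monotone-in-each-coordinate implies vertex minimizer on a box'' argument. The only points requiring a little care are the handling of the boundary case where the derivative is identically zero (so $f$ is constant in $\alpha_i$ and either endpoint is an equally good choice), and making explicit that the greedy sweep is valid because the monotonicity conclusion of Lemma~\ref{le:mar_mon} holds uniformly over all resistance vectors, so fixing earlier coordinates at their extremes does not disturb the monotone behavior of the coordinate currently being optimized.
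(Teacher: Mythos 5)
Your proposal is correct and follows exactly the route the paper intends: the corollary is stated as an immediate consequence of Lemma~\ref{le:mar_mon} (the sign of $\frac{\partial f}{\partial \alpha_i}$ is independent of $\alpha_i$, hence $f$ is monotone in each coordinate and attains its minimum at an endpoint), and your coordinate-by-coordinate sweep to lift this to the whole box is the standard completion of that argument. No gaps.
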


\subsection{Local vs Global Optimum}
\label{sec:local}

As shown in Corollary~\ref{cor:ext},
it suffices to choose the resistance vector $\alpha$
from the extreme points in Definition~\ref{defn:osp}.
Lemma~\ref{le:same_sign} readily gives a method to decide,
given a current choice of $\alpha$,
whether the objective $f$ can be decreased by changing
the resistance of some agent.
In Lemma~\ref{le:alpha_descend}, we show that if $\alpha$ is not a global
minimum, then such an agent must exist.
As we shall see, this implies that a local search method can find
a global minimum.

Given $\alpha$ and $\alpha' \in \R^V$,
denote $\Delta(\alpha, \alpha') := \{ i \in V: \alpha_i \neq \alpha_i'\}$
as the set of coordinates at which the vectors differ.

\begin{definition}[Local Minimizer]
\label{defn:local}
Given an objective function $f : \mcal{I}_V \rightarrow \R$,
a vector $\alpha \in \mcal{I}_V$ is a local minimizer of $f$,
if for all $\alpha' \in \mcal{I}_V$ such that $\Delta(\alpha, \alpha') = 1$,
$f(\alpha) \leq f(\alpha')$.
\end{definition}

\noindent \textbf{Notation.}  When we wish to
consider the effect of changing the resistance
of only 2 agents $i \neq k \in V$,
we write $f(\alpha) = f(\alpha_i, \alpha_k)$,
assuming that $\alpha_{-\{i,k\}}$ is fixed.


Lemmas~\ref{le:M2R} and \ref{le:col_max}
give some technical results
involving changing the resistance
of two agents.  Their proofs are deferred to Section~\ref{sec:proofs}.

\begin{lemma}
For any $i,k\in V$ such that $i\not= k$, let $M=[I-(I-A_{-\{i\}})P]^{-1}$
and $R=[I-(I-A_{-\{i,k\}})P]^{-1}$. Then for any $j \in V$,
we have
\begin{enumerate}
\item $M_{jk}=\frac{R_{jk}}{1+\alpha_kR_{kk}-\alpha_k}$,
\item $M_{jh}=R_{jh}-\frac{\alpha_{k}R_{jk}R_{kh}}{1+\alpha_kR_{kk}-\alpha_k}$, for $h\not=k$.
\end{enumerate}

In particular, the quantity in Lemma~\ref{le:mar_mon} can be rewritten as follows:
$$
s_i-\sum_{j\not=i}M_{ij}\alpha_js_j
=s_i-\sum_{j\not=i,k}R_{ij}\alpha_js_j
-\frac{\alpha_kR_{ik}}{1+\alpha_kR_{kk}-\alpha_k}
(s_k-\sum_{j\not=i,k}R_{kj}\alpha_js_j).
$$

\label{le:M2R}
\end{lemma}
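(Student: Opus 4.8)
The plan is to recognize $M^{-1}$ as a rank-one perturbation of $R^{-1}$ and then apply the Sherman--Morrison formula, reading off the needed entries of $PR$ from Lemma~\ref{le:PM_M}. First I would note that $A_{-\{i\}}$ and $A_{-\{i,k\}}$ agree except in their $k$-th diagonal entry, so $A_{-\{i\}} = A_{-\{i,k\}} + \alpha_k \mathbf{e}_k \mathbf{e}_k^\top$. Hence
$$M^{-1} = I - (I - A_{-\{i\}})P = R^{-1} + \alpha_k \mathbf{e}_k (\mathbf{e}_k^\top P),$$
which exhibits $M^{-1}$ as a rank-one update of $R^{-1}$ with $u = \alpha_k \mathbf{e}_k$ and $v^\top = \mathbf{e}_k^\top P$. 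Sherman--Morrison then gives
$$M = R - \frac{\alpha_k\, R\mathbf{e}_k \mathbf{e}_k^\top P R}{1 + \alpha_k\, \mathbf{e}_k^\top P R \mathbf{e}_k}.$$

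Next I would evaluate the entries of $PR$ that appear, using Lemma~\ref{le:PM_M}. The crucial point is that $R$ corresponds to the zeroed-out set $K = \{i,k\}$, so for the index $k \in K$ the lemma applies with $a_k = 0$, yielding $(PR)_{kk} = R_{kk} - 1$ and $(PR)_{kh} = R_{kh}$ for $h \neq k$. Substituting $\mathbf{e}_k^\top P R \mathbf{e}_k = (PR)_{kk} = R_{kk} - 1$ turns the denominator into $1 + \alpha_k R_{kk} - \alpha_k$. Since $(R\mathbf{e}_k \mathbf{e}_k^\top P R)_{jh} = R_{jk}(PR)_{kh}$, the general entry is $M_{jh} = R_{jh} - \frac{\alpha_k R_{jk}(PR)_{kh}}{1 + \alpha_k R_{kk} - \alpha_k}$. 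For $h \neq k$ this is immediately part~2, and for $h = k$, substituting $(PR)_{kk} = R_{kk} - 1$ and combining over the common denominator collapses the numerator to $R_{jk}$, giving part~1.

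For the rewriting of the quantity from Lemma~\ref{le:mar_mon}, I would split $\sum_{j \neq i} M_{ij}\alpha_j s_j$ into the single term $j = k$ and the remaining terms $j \neq i,k$. Using part~1 (with its free index equal to $i$) for the $j = k$ term and part~2 for the rest, then collecting all contributions proportional to $\frac{\alpha_k R_{ik}}{1 + \alpha_k R_{kk} - \alpha_k}$ and factoring out $\big(s_k - \sum_{j \neq i,k} R_{kj}\alpha_j s_j\big)$, recovers the stated identity.

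The computation is routine once the rank-one structure is spotted; the only place requiring care is the application of Lemma~\ref{le:PM_M} --- specifically remembering that $k$ lies in the zeroed set $K = \{i,k\}$ so that $a_k = 0$ --- together with the cancellation in the $h = k$ case that reduces the combined fraction to the single term $R_{jk}$.
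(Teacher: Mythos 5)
Your proposal is correct and follows essentially the same route as the paper's proof: write $M^{-1}=R^{-1}+\alpha_k\mathbf{e}_k\mathbf{e}_k^\top P$, apply Sherman--Morrison, and simplify $(PR)_{kk}=R_{kk}-1$ and $(PR)_{kh}=R_{kh}$ via Lemma~\ref{le:PM_M} with $K=\{i,k\}$ (so $a_k=0$). The final rewriting by splitting the sum into the $j=k$ term and the $j\neq i,k$ terms is also exactly the intended computation.
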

\ignore{
\begin{proof}
Using the Sherman-Morrison formula, we have
\begin{align}
\begin{split}
M&=[I-(I-A_{-\{i,k\}})P+\alpha_k\mathbf{e}_k\mathbf{e}_k^\top P]^{-1}
\\&
=R-\frac{\alpha_k}{1+\alpha_k\mathbf{e}_k^\top PR\mathbf{e}_k}
R\mathbf{e}_k\mathbf{e}_k^\top PR
\\&
=R-\frac{\alpha_k}{1+\alpha_k(PR)_{kk}}
R\mathbf{e}_k\mathbf{e}_k^\top PR.
\end{split}
\nonumber
\end{align}
We can compute that
$(R\mathbf{e}_k\mathbf{e}_k^\top PR)_{jh}=R_{jk}(PR)_{kh}$ for $j,h\in V$.
Then we have
$$
M_{jh}=R_{jh}-\frac{\alpha_kR_{jk}(PR)_{kh}}{1+\alpha_k(PR)_{kk}}.
$$
By Lemma \ref{le:PM_M}, we obtain
$$
M_{jh}=R_{jh}-\frac{\alpha_{k}R_{jk}R_{kh}}{1+\alpha_kR_{kk}-\alpha_k}
\text{  for $j,h\in V$ and $h\not=k$,}
$$
and
$$M_{jk}
=R_{jk}-\frac{\alpha_{k}R_{jk}(R_{kk}-1)}{1+\alpha_kR_{kk}-\alpha_k}
=\frac{R_{jk}}{1+\alpha_kR_{kk}-\alpha_k}
\text{   for $j\in V$.}
$$
as required.
\end{proof}
}

\begin{lemma}[Diagonal Entry]
Suppose $\alpha \in (0,1)^V$,
recall that $A_{-\{i,k\}} := \Diag(\alpha_{-\{i,k\}})$,
and $P$ corresponds to an irreducible interaction matrix.
For any $i,k\in V$ such that $i\not=k$, let $R=[I-(I-A_{-\{i,k\}})P]^{-1}$,
then $R_{ii}=\max_{j\in V}R_{ji}$.
Moreover, $R_{ii}=R_{ki}$ if and only if
$P_{kk} + P_{ki}=1$.
\label{le:col_max}
\end{lemma}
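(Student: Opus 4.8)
The plan is to work directly with the $i$-th column of $R$, which I denote $r := R\mathbf{e}_i$, and to prove both claims via a discrete maximum principle. Writing $Q := (I-A_{-\{i,k\}})P$, so that $Q_{lm} = (1-a_l)P_{lm}$ with $a_l = 0$ for $l \in \{i,k\}$ and $a_l = \alpha_l \in (0,1)$ otherwise, the identity $(I-Q)R = I$ gives the fixed-point relation $r = \mathbf{e}_i + Qr$. Componentwise, for each $l \in V$,
$$r_l = [l=i] + (1-a_l)\sum_{m\in V} P_{lm}\, r_m.$$
By Lemma~\ref{le:PM_M} (applied with ${K}=\{i,k\}$) the matrix $R$ exists and all its entries are positive, so in particular every $r_l > 0$ and the maximum $M^{\ast} := \max_{l\in V} r_l$ is positive.

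First I would locate where the maximum is attained. If $l \notin \{i,k\}$, then $a_l = \alpha_l > 0$ and, using that $P$ is row-stochastic so $\sum_m P_{lm} r_m \le M^{\ast}$, the relation above gives $r_l \le (1-\alpha_l)M^{\ast} < M^{\ast}$; hence the maximum can only be attained at $i$ or $k$. To rule out the possibility that it is attained at $k$ but not at $i$, suppose $r_i < M^{\ast}$, which forces $r_k = M^{\ast}$. From $r_k = \sum_m P_{km}r_m \le P_{ki}r_i + (1-P_{ki})M^{\ast}$ I obtain $P_{ki}M^{\ast} \le P_{ki}r_i$, so $P_{ki}>0$ already yields the contradiction $M^{\ast} \le r_i$. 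In the remaining case $P_{ki}=0$, the equality $r_k = \sum_{m} P_{km} r_m = M^{\ast}$ forces every $P$-neighbor $m$ of $k$ (that is, every $m$ with $P_{km}>0$) to satisfy $r_m = M^{\ast}$; since no index outside $\{i,k\}$ attains $M^{\ast}$ and $P_{ki}=0$, this means $P_{kk}=1$, contradicting irreducibility of $P$. Therefore $r_i = M^{\ast} = \max_{j} R_{ji}$, which is the first claim $R_{ii} = \max_{j\in V} R_{ji}$.

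For the equality characterization I would examine the equation for $r_k$ at the point where $r_i = M^{\ast}$ is already known to be the maximum. Writing $r_k = P_{ki}r_i + P_{kk}r_k + \sum_{m\neq i,k} P_{km} r_m$ and substituting $r_i = M^{\ast}$, the condition $r_k = M^{\ast}$ becomes $\sum_{m\neq i,k} P_{km}(M^{\ast} - r_m) = 0$. Since every $m \notin \{i,k\}$ has $r_m < M^{\ast}$ strictly (shown above), this holds if and only if $P_{km} = 0$ for all $m \neq i,k$, i.e.\ $P_{ki}+P_{kk}=1$. Conversely, if $P_{ki}+P_{kk}=1$ then the equation for $r_k$ reduces to $r_k(1-P_{kk}) = P_{ki} r_i$, i.e.\ $P_{ki}r_k = P_{ki}r_i$; irreducibility rules out $P_{kk}=1$, so $P_{ki}>0$ and hence $r_k = r_i = R_{ii}$. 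This establishes $R_{ii}=R_{ki}$ iff $P_{kk}+P_{ki}=1$.

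I expect the only delicate point to be the $P_{ki}=0$ sub-case in the first part, where the maximum-principle inequality degenerates and one must fall back on irreducibility to exclude an absorbing state at $k$; everything else is a routine first-step computation. As a sanity check and source of intuition, the whole statement also follows from the Green's-function identity $R_{li} = f_{li}\,R_{ii}$, where $f_{li}$ is the probability that the killed random walk with one-step matrix $Q$ (which never dies at $i$ or $k$) ever reaches $i$ starting from $l$: since $f_{li}\le 1$ one gets $R_{li}\le R_{ii}$, and $f_{ki}=1$ exactly when the walk from $k$ stays within $\{i,k\}$ until hitting $i$, i.e.\ when $P_{kk}+P_{ki}=1$.
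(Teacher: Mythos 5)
Your proposal is correct and follows essentially the same route as the paper's proof: both write out the componentwise equations for the $i$-th column of $R$ from $[I-(I-A_{-\{i,k\}})P]R=I$, use the damping factors $(1-\alpha_j)$ to show entries outside $\{i,k\}$ are strictly below the maximum, and then apply the convex-combination (maximum-principle) argument at row $k$ together with irreducibility to exclude $P_{kk}=1$. The only differences are cosmetic — your explicit split into the $P_{ki}>0$ and $P_{ki}=0$ subcases versus the paper's direct "unique maximum forces $P_{kk}=1$" step, plus your optional probabilistic interpretation.
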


\ignore{
\begin{proof}
We have $[I-(I-A_{-\{i,k\}})P]R=I$.
By considering the dot product between
each row of $[I-(I-A_{-\{i,k\}})P]$ and column $i$ of $R$,
we have
$$
R_{ii}-\sum_{h \in V}P_{ih}R_{hi}=1,
$$
$$
R_{ki}-\sum_{h \in V}P_{kh}R_{hi}=0,
$$
$$
R_{ji}-\sum_{h \in V}(1-\alpha_j)P_{jh}R_{hi}=0,
\text{   for } j\not=i,k.
$$
After rearranging,
we have
$$
R_{ii}=1+\sum_{h \in V}P_{ih}R_{hi},
$$
$$
R_{ki}=\sum_{h \in V}P_{kh}R_{hi},
$$
$$
R_{ji}=(1-\alpha_j)\sum_{h \in V}P_{jh}R_{hi},
\text{   for } j\not=i,k.
$$
Now it suffices to show that for $j \neq i,k$, the above $R_{ji}$
cannot be the maximum, and $R_{ki}$ cannot be greater than $R_{ii}$.

First, we show that $R_{ji}$
cannot be the maximum.
Since $\sum_{h \in V}P_{jh}=1$ and $\alpha_j> 0$,
we have
$$
R_{ji}=(1-\alpha_j)\sum_{h \in V} P_{jh}R_{hi}
\leq(1-\alpha_j)\max_{h \in V} R_{hi}
<\max_{h \in V}R_{hi}.
$$
Thus, $R_{ji}$ cannot be the maximum.

Next, we show that $R_{ki}$ cannot be greater than $R_{ii}$ by contradiction.
Suppose $R_{ki}$ is greater than $R_{ii}$,
then $R_{ki}$ is the unique maximum in the $i$-th column of $R$.
Since $\sum_{h \in V} P_{kh} = 1$ and
$R_{ki}=\sum_{h \in V}P_{kh}R_{hi}$,
it must be the case that $P_{kk} = 1$.
This means $P$ corresponds to a random walk with absorbing state~$k$,
which cannot be irreducible.
Therefore, we have $R_{ki}\leq R_{ii}$, and hence $R_{ii}=\max_{h\in V}R_{hi}$.

Observe that we already know that $R_{ji} < R_{ii}$ for $j \neq i,k$,
and $R_{ki}=\sum_{h \in V}P_{kh}R_{hi}$.
Hence, $R_{ki} = R_{ii}$ implies that $P_{kk} + P_{ki} = 1$.

Conversely, $P_{kk} + P_{ki} = 1$ implies that
$R_{ki} = P_{kk} R_{ki} + P_{ki} R_{ii}$.
As argued above, we must have $P_{kk} \neq 1$, which implies
that $R_{ki} = R_{ii}$.
\end{proof}
}

The following lemma gives the key insight for why local search works.
Intuitively, it shows that there does not exist any discrete ``saddle point''.
Even though its proof is technical, we still include it here because of its importance.

\begin{lemma}[Switching Lemma]
Recall that $f$ is defined in Definition \ref{defn:osp}
with an irreducible interaction matrix $P$,
and assume $|V| \geq 3$.
Suppose $\alpha,\beta\in (0,1)^{V}$
such that
$\Delta(\alpha,\beta)=\{i,k\}$
for some $i \neq k$.
Moreover, suppose further that
$$\min\{f(\alpha_i,\alpha_k), f(\beta_i,\beta_k)\}
< \min \{f(\alpha_i, \beta_k), f(\beta_i, \alpha_k)\}.$$
Then, we have
$$\max\{f(\alpha_i,\alpha_k), f(\beta_i,\beta_k)\}
> \min \{f(\alpha_i, \beta_k), f(\beta_i, \alpha_k)\}.$$
%
%
%
%
%

\label{le:alpha_beta_2d}
\end{lemma}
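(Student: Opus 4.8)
The plan is to prove the statement by contradiction. Assuming the hypothesis, suppose the conclusion fails, i.e. $\max\{f(\alpha_i,\alpha_k),f(\beta_i,\beta_k)\}\le\min\{f(\alpha_i,\beta_k),f(\beta_i,\alpha_k)\}$. Write $A=(\alpha_i,\alpha_k)$, $C=(\beta_i,\beta_k)$ for the two ``pure'' corners and $B=(\beta_i,\alpha_k)$, $D=(\alpha_i,\beta_k)$ for the two ``mixed'' corners of the rectangle in the $(\alpha_i,\alpha_k)$-plane. The failed conclusion says $f(A),f(C)\le f(B),f(D)$ (all four inequalities), while the hypothesis $\min\{f(A),f(C)\}<\min\{f(B),f(D)\}$ means the minimizing pure corner lies strictly below both mixed corners. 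I will show that such a ``discrete saddle'', with both pure corners weakly below both mixed corners, cannot occur.

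The engine is a pair of closed-form signs for the two relevant partial derivatives. Fixing all coordinates except $i,k$ and letting $R=[I-(I-A_{-\{i,k\}})P]^{-1}$, Lemma~\ref{le:mar_mon} combined with the rewriting in Lemma~\ref{le:M2R} (and its mirror under $i\leftrightarrow k$, which involves the \emph{same} matrix $R$) gives
\[
\sgn\frac{\partial f}{\partial\alpha_i}=\sgn\!\big(X_i-g_k(\alpha_k)X_k\big),\qquad
\sgn\frac{\partial f}{\partial\alpha_k}=\sgn\!\big(X_k-g_i(\alpha_i)X_i\big),
\]
where $X_i:=s_i-\sum_{j\ne i,k}R_{ij}\alpha_js_j$ and $X_k:=s_k-\sum_{j\ne i,k}R_{kj}\alpha_js_j$ are independent of $\alpha_i,\alpha_k$, and $g_k(t):=\frac{tR_{ik}}{1+t(R_{kk}-1)}$, $g_i(t):=\frac{tR_{ki}}{1+t(R_{ii}-1)}$. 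The crucial structural input is that $g_i,g_k$ are strictly increasing on $(0,1)$ with $g_k(t)<g_k(1)=R_{ik}/R_{kk}\le 1$ and $g_i(t)<g_i(1)=R_{ki}/R_{ii}\le 1$; the bounds $R_{ik}\le R_{kk}$ and $R_{ki}\le R_{ii}$ are exactly the statement that the diagonal entry dominates its column, guaranteed by Lemma~\ref{le:col_max} (applied to columns $k$ and $i$; irreducibility of $P$ and $|V|\ge3$ ensure $\{i,k\}\subsetneq V$, so $R$ exists).

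Since $f$ is monotone along each edge with the sign above constant along it (Corollary~\ref{cor:ext} and Lemma~\ref{le:mar_mon}), I convert the four corner inequalities into inequalities in $X_i,X_k$. Writing $\epsilon_i=\sgn(\beta_i-\alpha_i)$, $\epsilon_k=\sgn(\beta_k-\alpha_k)$, the coordinate-$i$ pair $f(A)\le f(B)$, $f(C)\le f(D)$ reads $\epsilon_i(X_i-g_k(\alpha_k)X_k)\ge0$ and $\epsilon_i(X_i-g_k(\beta_k)X_k)\le0$; subtracting these and using that $g_k$ is increasing (so $\sgn(g_k(\beta_k)-g_k(\alpha_k))=\epsilon_k$) yields $\epsilon_i\epsilon_kX_k\ge0$, and symmetrically $\epsilon_i\epsilon_kX_i\ge0$. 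Because the minimizing pure corner lies strictly below both mixed corners, it contributes a strict inequality in each coordinate direction, so both $X_i$ and $X_k$ acquire strict signs and $\rho:=X_i/X_k>0$ is well defined. Finally, in each sign pattern of $(\epsilon_i,\epsilon_k)$ two of the four inequalities specialize to $\rho\le g_k(\cdot)<1$ and $1/\rho\le g_i(\cdot)<1$ (in the representative case $\epsilon_i=\epsilon_k=+$ these are $f(C)\le f(D)$, giving $\rho\le g_k(\beta_k)$, and $f(C)\le f(B)$, giving $1/\rho\le g_i(\beta_i)$). The first forces $\rho<1$ and the second $\rho>1$, the desired contradiction.

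The main obstacle is bookkeeping rather than a new idea: one must verify that the two sign formulas share the single matrix $R$, carry the edge orientations $\epsilon_i,\epsilon_k$ correctly through all sign patterns of $(\epsilon_i,\epsilon_k)$ (the $\alpha\leftrightarrow\beta$ symmetry of the statement halves these to two genuine cases), and confirm that strictness survives to yield a \emph{strict} contradiction. That last point, and indeed the whole argument, hinges on $g_i,g_k<1$ on $(0,1)$: were the diagonal of $R$ not column-dominant (Lemma~\ref{le:col_max}), the product bound $\rho<1<\rho$ could fail and a genuine discrete saddle could exist.
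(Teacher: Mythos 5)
Your proof is correct and takes essentially the same route as the paper's: both negate the conclusion, convert the four corner inequalities into sign conditions on $c_i-g_k(\cdot)c_k$ and $c_k-g_i(\cdot)c_i$ via Lemmas~\ref{le:mar_mon} and~\ref{le:M2R}, show the two constants are nonzero with a common sign, and derive the contradiction $\rho<1<\rho$ from the strict monotonicity of $g_i,g_k$ together with the bounds $g_k(1)=R_{ik}/R_{kk}\le 1$ and $g_i(1)=R_{ki}/R_{ii}\le 1$ supplied by Lemma~\ref{le:col_max}. The only difference is bookkeeping: the paper multiplies pairs of inequalities to sandwich $c_i/c_k$ between $g_k(\alpha_k)$ and $g_k(\beta_k)$ without splitting into sign patterns, whereas you subtract pairs and track the orientations $\epsilon_i,\epsilon_k$ explicitly; both are valid.
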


\begin{proof}
We prove the lemma
by contradiction.
Suppose
$$\max\{f(\alpha_i,\alpha_k), f(\beta_i,\beta_k)\}
\leq \min \{f(\alpha_i, \beta_k), f(\beta_i, \alpha_k)\}.$$
Without loss of generality,
suppose further that $f(\alpha_i, \alpha_k) \geq f(\beta_i, \beta_k)$.
Then, we have
\begin{equation}
f(\alpha_i,\alpha_k)\leq
\begin{cases}
f(\beta_i,\alpha_k)\\
f(\alpha_i,\beta_k)
\end{cases}
\text{and   }
f(\beta_i,\beta_k) <
\begin{cases}
f(\beta_i,\alpha_k)\\
f(\alpha_i,\beta_k).
\end{cases}
\nonumber
\end{equation}

We remark that it is important to distinguish between the strict and non-strict inequality.
We use the notation $f'_i$ to denote the partial derivative with respect to coordinate~$i$.

From $f(\alpha_i, \alpha_k) \leq f(\beta_i, \alpha_k)$ and
the fact that $f$ is marginally monotone (Lemma~\ref{le:mar_mon}) and $f'_i(x, \alpha_k)$
has the same sign in $\{-,0,+\}$ for $x \in (0,1)$,
we have
\begin{equation}
f'_i(\_, \alpha_k) \cdot
(\alpha_i-\beta_i)\leq 0.
\label{eq:1st}
\end{equation}

On the other hand, from the strict inequality $f(\beta_i, \beta_k) < f(\beta_i, \alpha_k)$,
we know the partial derivative $f'_k(\beta_i, y)$ must have the same non-zero sign in $\{-, +\}$,
again from Lemma~\ref{le:mar_mon}.  Therefore, we have:
\begin{equation}
f'_k(\beta_i, \_) \cdot
(\alpha_k-\beta_k)  > 0.
\label{eq:2nd}
\end{equation}

Similarly, $f(\alpha_i, \alpha_k) \leq f(\alpha_i,\beta_k)$
and $f(\beta_i, \beta_k) < f(\alpha_i,\beta_k)$ give the following:
\begin{equation}
f'_k(\alpha_i, \_) \cdot (\alpha_k-\beta_k) \leq 0,
\label{eq:3rd}
\end{equation}
\begin{equation}
f'_i(\_, \beta_k) \cdot (\alpha_i - \beta_i) > 0.
\label{eq:4th}
\end{equation}

\ignore{

Assume that $f(\alpha_i,\beta_k)>f(\beta_i,\beta_k)$.
Then, combined with the three given inequalities, we have
\begin{align}
\begin{split}
\frac{\partial f(\alpha_i,\alpha_k)}{\partial\alpha_i}
(\alpha_i-\beta_i)\leq 0,
\\
\frac{\partial f(\beta_i,\beta_k)}{\partial\beta_k}
(\alpha_k-\beta_k)>0,
\\
\frac{\partial f(\alpha_i,\alpha_k)}{\partial\alpha_k}
(\alpha_k-\beta_k)\leq 0,
\\
\frac{\partial f(\beta_i,\beta_k)}{\partial\beta_i}
(\alpha_i-\beta_i)>0.
\end{split}
\nonumber
\end{align}

}

Next, using Lemma \ref{le:mar_mon} and $R$ as defined in Lemma \ref{le:M2R},
the above inequalities (\ref{eq:1st}) to (\ref{eq:4th}) become:
\begin{align}
\begin{split}
&[s_i-\sum_{j\not=i,k}R_{ij}\alpha_js_j
-\frac{\alpha_kR_{ik}}{1+\alpha_kR_{kk}-\alpha_k}
(s_k-\sum_{j\not=i,k}R_{kj}\alpha_js_j)]
(\alpha_i-\beta_i)\leq 0,
\\
&[s_k-\sum_{j\not=i,k}R_{kj}\beta_js_j
-\frac{\beta_iR_{ki}}{1+\beta_iR_{ii}-\beta_i}
(s_i-\sum_{j\not=i,k}R_{ij}\beta_js_j)]
(\alpha_k-\beta_k)>0,
\\
&[s_k-\sum_{j\not=i,k}R_{kj}\alpha_js_j
-\frac{\alpha_iR_{ki}}{1+\alpha_iR_{ii}-\alpha_i}
(s_i-\sum_{j\not=i,k}R_{ij}\alpha_js_j)]
(\alpha_k-\beta_k)\leq 0,
\\
&[s_i-\sum_{j\not=i,k}R_{ij}\beta_js_j
-\frac{\beta_kR_{ik}}{1+\beta_kR_{kk}-\beta_k}
(s_k-\sum_{j\not=i,k}R_{kj}\beta_js_j)]
(\alpha_i-\beta_i)>0.
\end{split}
\nonumber
\end{align}

Recall that $\alpha_j=\beta_j$ for $j\not=i,k$.
Hence, we denote:
$$c_i :=s_i-\sum_{j\not=i,k}R_{ij}\alpha_js_j
=s_i-\sum_{j\not=i,k}R_{ij}\beta_js_j,$$
$$c_k :=s_k-\sum_{j\not=i,k}R_{kj}\alpha_js_j
=s_k-\sum_{j\not=i,k}R_{kj}\beta_js_j,$$
$$g_i(x) :=\frac{xR_{ki}}{1+xR_{ii}-x}
\text{ and }
g_k(x) :=\frac{xR_{ik}}{1+xR_{kk}-x}
.$$
Then, we have 
\begin{align}
[c_i-g_k(\alpha_k)c_k]
(\alpha_i-\beta_i)\leq&0,
\label{eq:gk_l_0}
\\
[c_k-g_i (\beta_i)c_i]
(\alpha_k-\beta_k)>&0,
\label{eq:gi_g_0}
\\
[c_k-g_i(\alpha_i)c_i]
(\alpha_k-\beta_k)\leq&0,
\label{eq:gi_l_0}
\\
[c_i-g_k(\beta_k)c_k]
(\alpha_i-\beta_i)>&0.
\label{eq:gk_g_0}
\end{align}

Observe that $c_i \neq 0$, otherwise
(\ref{eq:gi_g_0}) and (\ref{eq:gi_l_0}) contradict each other.
Similarly, $c_k \neq 0$,
otherwise (\ref{eq:gk_l_0}) and (\ref{eq:gk_g_0}) contradict each other.
We next argue that $c_i c_k > 0$.

From (\ref{eq:gk_l_0}) and (\ref{eq:gk_g_0})
we have
\begin{equation}
[c_i-g_k(\alpha_k)c_k][c_i-g_k(\beta_k)c_k]\leq 0.
\label{eq:both_gk_l_0}
\end{equation}

If $c_i c_k < 0$, then the above expression will be positive,
because $g_k(\cdot) \geq 0$ (we shall see that later).   Hence, we conclude that $c_i$ and $c_k$ have the same sign.

From (\ref{eq:gi_g_0}) and (\ref{eq:gi_l_0})
we have
\begin{equation}
[c_k-g_i (\beta_i)c_i][c_k-g_i(\alpha_i)c_i]\leq 0.
\label{eq:both_gi_l_0}
\end{equation}

Rearranging (\ref{eq:both_gk_l_0}) and (\ref{eq:both_gi_l_0}),
we have:
$$[\frac{c_i}{c_k}-g_k(\alpha_k)][\frac{c_i}{c_k}-g_k(\beta_k)]\leq 0,$$
and
$$[\frac{c_k}{c_i}-g_i (\beta_i)][\frac{c_k}{c_i}-g_i(\alpha_i)]\leq 0.$$

\ignore{

Now we are going to argue that $c_i$ and $c_k$ have the same sign
and $c_i,c_k\not=0$.
Observe that $g_i(x),g_k(x)>0$ for $x\in(0,1)$.
If $c_i$ and $c_k$ have the opposite signs,
then (\ref{eq:both_gk_l_0}) and (\ref{eq:both_gi_l_0}) should be greater than zero, which is a contradiction.
If $c_i=0$,
then (\ref{eq:gi_g_0}) and (\ref{eq:gi_l_0}) contradict each other.
If $c_k=0$,
then (\ref{eq:gk_l_0}) and (\ref{eq:gk_g_0}) contradict each other.
Thus, we conclude that $c_i$ and $c_k$ have the same sign
and $c_i,c_k\not=0$.
}


Note that every entry of $R$ is positive by Lemma \ref{le:PM_M} and we can easily prove $g_i(\cdot)$ and $g_k(\cdot)$ are both
strictly increasing functions in $[0,1]$.
Since $\alpha,\beta\in (0,1)^{V}$,
the above two inequalities imply that
$$0=g_k(0) < g_k(\min\{\alpha_k,\beta_k\}) \leq \frac{c_i}{c_k},$$
$$\frac{c_i}{c_k}\leq g_k(\max\{\alpha_k,\beta_k\})< g_k(1)=\frac{R_{ik}}{R_{kk}}\leq1,$$
and
$$0=g_i(0)< g_i(\min\{\alpha_i,\beta_i\})\leq\frac{c_k}{c_i},$$
$$\frac{c_k}{c_i}\leq g_i(\max\{\alpha_i,\beta_i\})< g_i(1)=\frac{R_{ki}}{R_{ii}}\leq1,$$
where $\frac{R_{ik}}{R_{kk}}\leq1$ and $\frac{R_{ki}}{R_{ii}}\leq1$ are from Lemma \ref{le:col_max}.

Notice that we get $0<\frac{c_i}{c_k} < 1$ and $0<\frac{c_k}{c_i} < 1$,
which is a contradiction.
Hence, the proof is completed.
\ignore{
This implies that $\frac{c_k}{c_i} = 1$, which means that
both $\frac{R_{ik}}{R_{kk}}$ and $\frac{R_{ki}}{R_{ii}}$ are equal to 1.
By Lemma~\ref{le:col_max},
this means that $P_{kk} + P_{ki} = P_{ik} + P_{ii} = 1$.
This implies that in the random walk corresponding to $P$,
the states $i$ and $k$ are isolated from other states,
contradicting the irreducibility of the random walk.}
%
%
%
%
\end{proof}

\begin{lemma}[Descending Coordinate]
Let $f$ be the function as defined in Definition~\ref{defn:osp}.
Suppose $\alpha,\beta \in (0,1)^V$
such that $f(\alpha)> f(\beta)$.
Then, there exists some $i \in \Delta(\alpha, \beta)$ and $\gamma \in (0,1)^V$
such that $\Delta(\alpha, \gamma) = \{i\}$,
$\Delta(\gamma, \beta) = \Delta(\alpha, \beta) \setminus \{i\}$,
and $f(\alpha) > f(\gamma)$.

In other words, by switching one coordinate (corresponding to $i$) of $\alpha$ to that of $\beta$,
the objective function $f$ decreases strictly.
%
%
\label{le:alpha_descend}
\end{lemma}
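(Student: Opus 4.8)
The plan is to first invoke the marginal structure already in hand and then reduce the statement to a purely combinatorial claim on a Boolean cube. By Corollary~\ref{cor:ext} and Lemma~\ref{le:mar_mon} it suffices to work with the extreme coordinate values and to read off, for a fixed background, whether flipping a single coordinate raises or lowers $f$ from the sign of the corresponding marginal. For $S\subseteq\Delta(\alpha,\beta)$ write $\gamma^{(S)}$ for the vector agreeing with $\beta$ on $S$ and with $\alpha$ off $S$, so that $\gamma^{(\emptyset)}=\alpha$, $\gamma^{(\Delta(\alpha,\beta))}=\beta$, and the lemma asks for a singleton $S=\{i\}$ with $f(\gamma^{(\{i\})})<f(\alpha)$. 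Call $S$ \emph{descending} if $f(\gamma^{(S)})<f(\alpha)$. Since $\beta=\gamma^{(\Delta(\alpha,\beta))}$ is descending, I would fix a descending set $S$ of minimum cardinality; it then suffices to prove $|S|=1$.

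Restricting $f$ to the sub-hypercube $\{\gamma^{(T)}:T\subseteq S\}$ yields a function on $\{0,1\}^{|S|}$ in which the all-ones vertex $\gamma^{(S)}$ is strictly below $f(\alpha)$, the all-zeros vertex $\gamma^{(\emptyset)}=\alpha$ equals $f(\alpha)$, and every other vertex is $\geq f(\alpha)$ (any proper subset of $S$ is non-descending by minimality); moreover Lemma~\ref{le:alpha_beta_2d} applies verbatim to every axis-aligned $2$-face. The base case $|S|=2$, say $S=\{i,k\}$, is then immediate: the two ``pure'' corners of the single face are $f(\alpha)$ and $f(\gamma^{(S)})$, the two ``mixed'' corners are $f(\gamma^{(\{i\})}),f(\gamma^{(\{k\})})\geq f(\alpha)$, and since $\min\{f(\alpha),f(\gamma^{(S)})\}=f(\gamma^{(S)})<f(\alpha)\leq\min\{\text{mixed}\}$, Lemma~\ref{le:alpha_beta_2d} forces $\max\{f(\alpha),f(\gamma^{(S)})\}=f(\alpha)>\min\{\text{mixed}\}\geq f(\alpha)$, a contradiction. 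So a minimum descending set of size $2$ cannot exist.

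For $|S|=t\geq 3$ the contradiction comes from playing two families of Switching inequalities against each other. The $2$-faces incident to the top vertex $\gamma^{(S)}$ give, for all $i,k\in S$, the inequality $f(\gamma^{(S\setminus\{i,k\})})>\min\{f(\gamma^{(S\setminus\{i\})}),f(\gamma^{(S\setminus\{k\})})\}\geq f(\alpha)$, i.e. deleting a coordinate pair from $\gamma^{(S)}$ strictly raises the value above the better single deletion (and in particular, for $t=3$, forces every single-coordinate value strictly above $f(\alpha)$). Symmetrically, the $2$-faces incident to the bottom vertex $\gamma^{(\emptyset)}=\alpha$ give, whenever the relevant single-coordinate values exceed $f(\alpha)$, the reverse-type inequality $f(\gamma^{(\{i,k\})})>\min\{f(\gamma^{(\{i\})}),f(\gamma^{(\{k\})})\}$. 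For $t=3$ these meet on the middle level and collide: writing $x_i=f(\gamma^{(\{i\})})$ and $y_{ik}=f(\gamma^{(\{i,k\})})$ and specializing to the coordinate $i_0$ minimizing $x_{i_0}$, the bottom family gives $y_{i_0 j}>x_{i_0}$ for both remaining coordinates $j$, while the top family gives $x_{i_0}>\min_j y_{i_0 j}$, which is impossible. For general $t$ one telescopes these inequalities through the intermediate levels, or equivalently runs a second induction on $t$, to reach the same contradiction and force $|S|=1$.

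The step I expect to be the crux — and where the full strength of Lemma~\ref{le:alpha_beta_2d} is used — is organizing this general $t\geq 3$ argument. For $t=3$ the top and bottom inequalities already live on adjacent levels and collide directly, but for larger $t$ they are separated by intermediate levels, none of whose vertices lie below $f(\alpha)$, so one cannot simply propagate the single ``low'' vertex downward face by face; instead the Switching inequalities must be chained across all levels while the strict/non-strict bookkeeping is tracked exactly so that the two families genuinely contradict. This combinatorial core is precisely the assertion that the discrete landscape has no saddle point, from which it follows that a minimum descending set is a single coordinate, proving the lemma.
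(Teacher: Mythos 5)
Your reduction to a minimum-cardinality descending set is sound, and your treatment of $|S|=2$ and $|S|=3$ is correct --- both are legitimate applications of Lemma~\ref{le:alpha_beta_2d}, and the $|S|=3$ collision between the top and bottom families is a valid argument. But for general $t=|S|\geq 4$ you have not given a proof: the step you yourself flag as the crux is exactly the step that is missing, and as stated your two families of inequalities do not meet. The top family relates levels $t-2$, $t-1$, $t$ of the sub-hypercube, while the bottom family relates levels $0$, $1$, $2$ --- and the bottom family only fires once you know the singleton values are \emph{strictly} above $f(\alpha)$, which for $t\geq 4$ no longer follows from the top family (it only forces level $t-2$ strictly above $f(\alpha)$). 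For $t\geq4$ the two families are separated by intermediate levels about which you know nothing beyond ``$\geq f(\alpha)$'', so ``telescoping'' is not automatic: you would first have to promote every intermediate level to a strict inequality, then derive adjacent-level inequalities in both directions and play the level minima against one another, all while tracking which faces actually satisfy the strict hypothesis of the Switching Lemma. I believe this can be made to work (the needed combinatorial fact is true, and only uses $2$-faces inside the sub-hypercube), but none of it is carried out, and it is where essentially all of the difficulty lives.

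The paper's proof avoids the level-by-level chaining entirely by inducting on $|\Delta(\alpha,\beta)|$ rather than on a minimum descending set. Writing $\alpha_{[T]}$ for $\alpha$ with the coordinates in $T$ switched to $\beta$'s values, it assumes no single flip helps, picks $i$ minimizing $f(\alpha_{[i]})$ over single flips, applies the \emph{full inductive statement} to the shifted pair $(\alpha_{[i]},\beta)$ --- legitimate because $f(\alpha_{[i]})\geq f(\alpha)>f(\beta)$ --- to obtain a second coordinate $k$ with $f(\alpha_{[i]})>f(\alpha_{[i,k]})$, and then needs exactly one application of Lemma~\ref{le:alpha_beta_2d} on the single face $\{\alpha,\alpha_{[i]},\alpha_{[k]},\alpha_{[i,k]}\}$ to conclude $f(\alpha_{[k]})<f(\alpha_{[i]})$, contradicting the minimality of $i$. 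The idea you are missing is this combination of minimizing over single flips and invoking the induction hypothesis for the pair with one coordinate already switched: it localizes the whole argument to one $2$-face at the bottom of the cube instead of requiring a global argument across all levels.
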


\begin{proof}
We prove the lemma by induction on $|\Delta(\alpha, \beta)|$.
The base case $|\Delta(\alpha, \beta)| = 1$ is trivial.

We consider the inductive step with
$|\Delta(\alpha, \beta)| = q$, for some $q \geq 2$.
Given a list $S$ of coordinates from $\Delta(\alpha, \beta)$,
we use $\alpha_{[S]}$ to denote the resulting
vector obtained from switching coordinates $S$ of $\alpha$
to those of $\beta$.

For contradiction's sake, we assume that for all $j \in \Delta(\alpha, \beta)$,
$f(\alpha_{[j]}) \geq f(\alpha)$; moreover,
we pick $i \in \Delta(\alpha, \beta)$ such that $f(\alpha_{[i]})$ is minimized.

Observe that $f(\alpha_{[i]}) \geq f(\alpha) > f(\beta)$
and $|\Delta(\alpha_{[i]}, \beta)| = q - 1$.  Therefore,
by the induction hypothesis,
there exists some $k \in \Delta(\alpha_{[i]}, \beta)$
such that $f(\alpha_{[i]}) > f(\alpha_{[i,k]})$.

Next, starting from $\alpha$, we shall fix all coordinates in $V$ except $i$ and $k$,
and we write the objective $f(x,y)$ as a function on only these two coordinates.

Observe that we have already assumed that
\begin{equation}
f(\alpha_i, \alpha_k) \leq \min\{f(\beta_i, \alpha_k), f(\alpha_i, \beta_k)\}.
\label{eq:ind_ass}
\end{equation}

Moreover, from above, we have $f(\alpha_{[i]}) > f(\alpha_{[i,k]})$,
which translates to $f(\beta_i, \alpha_k) > f(\beta_i, \beta_k)$.
Observe that we must have $f(\alpha_i, \beta_k) \leq f(\beta_i, \beta_k)$;
otherwise, we have $f(\beta_i, \beta_k) < \min\{f(\beta_i, \alpha_k), f(\alpha_i, \beta_k)\}$,
which, together with (\ref{eq:ind_ass}), will contradict Lemma~\ref{le:alpha_beta_2d}.

Therefore, we have $f(\alpha_{[k]}) = f(\alpha_i, \beta_k) \leq f(\beta_i, \beta_k)
< f(\beta_i, \alpha_k) = f(\alpha_{[i]})$,
which contradicts the choice of $i \in \Delta(\alpha, \beta)$ to minimize $f(\alpha_{[i]})$.
This completes the inductive step and also the proof of the lemma.
\end{proof}
\ignore{
We are going to prove the lemma by mathematical induction.
Observe that when $|\diff(\alpha,\beta)|=1$, $\gamma=\beta$ satisfies the statement.

\textbf{Base case:} We are going to show that the statement holds for $|\diff(\alpha,\beta)|=2$ by contradiction.
Let $\diff(\alpha,\beta)=\{i,k\}$.
Assume that for $|\diff(\alpha,\beta)|=2$ there is no such $\gamma$.
Then, we have
$$
f(\alpha_i,\alpha_k)\leq
f(\beta_i,\alpha_k)
\text{ and }
f(\alpha_i,\alpha_k)\leq
f(\alpha_i,\beta_k).
$$
Since $f(\alpha_i,\alpha_k)>f(\beta_i,\beta_k)$,
we also have 
$$
f(\beta_i,\alpha_k)>
f(\beta_i,\beta_k)
\text{ and }
f(\alpha_i,\beta_k)>
f(\beta_i,\beta_k).
$$
However, by Lemma \ref{le:alpha_beta_2d},
we have $f(\alpha_i,\beta_k)\le
f(\beta_i,\beta_k)$, which is a contradiction.
Thus, the assumption that for $|\diff(\alpha,\beta)|=2$ there is no such $\gamma$ is false.
And the statement holds for $|\diff(\alpha,\beta)|=2$.

\textbf{Inductive step:}
Suppose the statement holds for $|\diff(\alpha,\beta)|=q-1$.
We are going to show that the statement also holds for $|\diff(\alpha,\beta)|=q$ by contradiction.
Assume that for $|\diff(\alpha,\beta)|=q$, there is no such $\gamma$.
Then, we have
\begin{equation}
f(\alpha_{[q]})\leq f(\beta_{i}\cup\alpha_{[q]\setminus\{i\}})
\text{ for each } i\in [q]
\label{eq:alpha_leq_beta_i}
\end{equation}
where we use $[q]$ to denote $\diff(\alpha,\beta)$ in the subscript and $\alpha_{[q]}=\bigcup_{j\in[q]}\alpha_j$.
Since $f(\alpha_{[q]})>f(\beta_{[q]})$,
we also have 
$$
f(\beta_{i}\cup\alpha_{[q]\setminus\{i\}})>
f(\beta_{[q]})
\text{ for each } i\in [q].
$$
Notice that $|\diff(\beta_{i}\cup\alpha_{[q]\setminus\{i\}},\beta_{[q]})|=q-1$.
By the hypothesis, for each $i\in [q]$, there exist at least one $\gamma_i$ such that
$$
f(\beta_{i}\cup\alpha_{[q]\setminus\{i\}})> f(\gamma_i),
|\diff(\beta_{i}\cup\alpha_{[q]\setminus\{i\}},\gamma_i)|=1
$$
and
$$
\diff(\beta_{i}\cup\alpha_{[q]\setminus\{i\}},\gamma_i)\subseteq\diff(\beta_{i}\cup\alpha_{[q]\setminus\{i\}},\beta_{[q]}).
$$
Next, we are going to show that no matter how we pick $\gamma_i$, there is always a contradiction.

Without loss of generality, suppose that we consider the coordinate $i_1\in[q]$ first.
We also suppose that $\gamma_{i_1}=\beta_{i_1}\cup\beta_{i_2}\cup\alpha_{[q]\setminus\{i_1,i_2\}}$.
Then, we have
$$
f(\beta_{i_1}\cup\alpha_{[q]\setminus\{i_1\}})
>f(\beta_{i_1}\cup\beta_{i_2}\cup\alpha_{[q]\setminus\{i_1,i_2\}}).
$$
Note that we have $f(\alpha_{[q]})\leq f(\beta_{i}\cup\alpha_{[q]\setminus\{i\}})$
for each $i\in \{i_1,i_2\}$ from (\ref{eq:alpha_leq_beta_i}).
Thus, by Lemma \ref{le:alpha_beta_2d} we have
\begin{align}
\begin{split}
f(\beta_{i_1}\cup\alpha_{[q]\setminus\{i_1\}})
&>f(\beta_{i_1}\cup\beta_{i_2}\cup\alpha_{[q]\setminus\{i_1,i_2\}})
\geq f(\beta_{i_2}\cup\alpha_{[q]\setminus\{i_2\}}).
\end{split}
\nonumber
\end{align}
Next, suppose that $\gamma_{i_2}=\beta_{i_2}\cup\beta_{i_3}\cup\alpha_{[q]\setminus\{i_2,i_3\}}$
where $i_3\not=i_1,i_2$.
Again, since $f(\alpha_{[q]})\leq f(\beta_{i}\cup\alpha_{[q]\setminus\{i\}})$
for each $i\in \{i_2,i_3\}$ from (\ref{eq:alpha_leq_beta_i}),
by Lemma \ref{le:alpha_beta_2d} we have one more inequality
\begin{align}
\begin{split}
&f(\beta_{i_1}\cup\alpha_{[q]\setminus\{i_1\}})
>f(\beta_{i_1}\cup\beta_{i_2}\cup\alpha_{[q]\setminus\{i_1,i_2\}})
\\
\geq&f(\beta_{i_2}\cup\alpha_{[q]\setminus\{i_2\}})
>f(\beta_{i_2}\cup\beta_{i_3}\cup\alpha_{[q]\setminus\{i_2,i_3\}})
\geq f(\beta_{i_3}\cup\alpha_{[q]\setminus\{i_3\}}).
\end{split}
\nonumber
\end{align}

Now, consider that in the $j$th step,
we already have
\begin{align}
\begin{split}
&f(\beta_{i_1}\cup\alpha_{[q]\setminus\{i_1\}})
>f(\beta_{i_1}\cup\beta_{i_2}\cup\alpha_{[q]\setminus\{i_1,i_2\}})
\\
\geq&f(\beta_{i_2}\cup\alpha_{[q]\setminus\{i_2\}})
>f(\beta_{i_2}\cup\beta_{i_3}\cup\alpha_{[q]\setminus\{i_2,i_3\}})
\\
&\vdots
\\
\geq&f(\beta_{i_j}\cup\alpha_{[q]\setminus\{i_j\}})
>f(\beta_{i_j}\cup\beta_{i_{j+1}}\cup\alpha_{[q]\setminus\{i_j,i_{j+1}\}})
\geq f(\beta_{i_{j+1}}\cup\alpha_{[q]\setminus\{i_{j+1}\}}).
\end{split}
\nonumber
\end{align}
Suppose that $\gamma_{i_{j+1}}=\beta_{i_{j+1}}\cup\beta_{i_{j+2}}\cup\alpha_{[q]\setminus\{i_{j+1},i_{j+2}\}}$ where $i_{j+2}\not=i_{j+1}$.
Observe that
\begin{equation}
f(\beta_{i_k}\cup\alpha_{[q]\setminus\{i_k\}})
>f(\gamma_{i_{j+1}})
\text{ for each } i_k\in
\{i_1,i_2,\cdots,i_{j+1}\}.
\label{eq:beta_g_gamma_j_step}
\end{equation}
Then, $i_{j+2}\not=i_1,i_2,\cdots,i_{j}$.
Otherwise, if we pick $i_{j+2}=i_{k}\in\{i_1,i_2,\cdots,i_{j}\}$,
since $f(\alpha_{[q]})\leq f(\beta_{i}\cup\alpha_{[q]\setminus\{i\}})$
for each $i\in \{i_{j+1},i_k\}$ from (\ref{eq:alpha_leq_beta_i}), by Lemma \ref{le:alpha_beta_2d} we have
$$
f(\gamma_{i_{j+1}})
\geq f(\beta_{i_k}\cup\alpha_{[q]\setminus\{i_k\}})
$$
which contradicts (\ref{eq:beta_g_gamma_j_step}).
Thus, we can only pick
$i_{j+2}$ from the set $[q]\setminus\{i_1,i_2,\cdots,i_{j+1}\}$,
which means in the $j$th step we cannot pick the new $i_{j+2}$
to be any of the ones that we have already picked in the previous steps.

Accordingly, in the final step, we have
\begin{align}
\begin{split}
&f(\beta_{i_1}\cup\alpha_{[q]\setminus\{i_1\}})
>f(\beta_{i_1}\cup\beta_{i_2}\cup\alpha_{[q]\setminus\{i_1,i_2\}})
\\
\geq&f(\beta_{i_2}\cup\alpha_{[q]\setminus\{i_2\}})
>f(\beta_{i_2}\cup\beta_{i_3}\cup\alpha_{[q]\setminus\{i_2,i_3\}})
\\
&\vdots
\\
\geq&f(\beta_{i_{q-1}}\cup\alpha_{[q]\setminus\{i_{q-1}\}})
>f(\beta_{i_{q-1}}\cup\beta_{i_{q}}\cup\alpha_{[q]\setminus\{i_{q-1},i_{q}\}})
\geq f(\beta_{i_{q}}\cup\alpha_{[q]\setminus\{i_{q}\}}).
\end{split}
\nonumber
\end{align}
And $\gamma_{i_{q}}=\beta_{i_{q}}\cup\beta_{i_{q+1}}\cup\alpha_{[q]\setminus\{i_{q},i_{q+1}\}}$ where $i_{q+1}\not=i_{q}$.
Observe that
\begin{equation}
f(\beta_{i_k}\cup\alpha_{[q]\setminus\{i_k\}})
>f(\gamma_{i_{q}})
\text{ for each } i_k\in
\{i_1,i_2,\cdots,i_{q}\}.
\label{eq:beta_g_gamma_q_step}
\end{equation}
Since all vertices in $[q]\setminus\{i_{q}\}$ have already been picked in the previous steps, we can only pick
$i_{q+1}=i_k\in[q]\setminus\{i_{q}\}$.
Then, since $f(\alpha_{[q]})\leq f(\beta_{i}\cup\alpha_{[q]\setminus\{i\}})$
for each $i\in \{i_q,i_k\}$ from (\ref{eq:alpha_leq_beta_i}),
by Lemma \ref{le:alpha_beta_2d} we have
$$
f(\gamma_{i_{q}})
\geq f(\beta_{i_k}\cup\alpha_{[q]\setminus\{i_k\}})
$$
which contradicts (\ref{eq:beta_g_gamma_q_step}).
Therefore, the assumption that
for $|\diff(\alpha,\beta)|=q$ there is no such $\gamma$ is false.
And the statement also holds for $|\diff(\alpha,\beta)|=q$ in the inductive step.

Finally, by mathematical induction, we can prove that the lemma is correct.}

\begin{corollary}
\label{cor:local_opt}
For the function $f$ in Definition~\ref{defn:osp}
every local minimizer is a global minimizer.
\end{corollary}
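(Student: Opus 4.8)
The plan is to prove the contrapositive: I would show that any $\alpha \in \mcal{I}_V$ that fails to be a global minimizer also fails to be a local minimizer in the sense of Definition~\ref{defn:local}. Accordingly, I would assume that $\alpha$ is a local minimizer and suppose, for contradiction, that it is not a global minimizer. Then there is some $\beta \in \mcal{I}_V$ with $f(\beta) < f(\alpha)$, equivalently $f(\alpha) > f(\beta)$, and this strict inequality is exactly the hypothesis needed to invoke the descent machinery already established.

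The crux is a single application of Lemma~\ref{le:alpha_descend} to the pair $(\alpha, \beta)$. Before applying it, I would check that its hypotheses hold: since each interval satisfies $0 < l_i < u_i < 1$, we have $\mcal{I}_V \subseteq (0,1)^V$, so both $\alpha$ and $\beta$ lie in the open cube $(0,1)^V$ on which the lemma operates. The lemma then produces a coordinate $i \in \Delta(\alpha, \beta)$ together with a vector $\gamma \in (0,1)^V$ satisfying $\Delta(\alpha, \gamma) = \{i\}$ and $f(\gamma) < f(\alpha)$.

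The final step is to recognize that $\gamma$ is a legitimate single-coordinate perturbation of $\alpha$ within the feasible box. Since the lemma also guarantees $\Delta(\gamma, \beta) = \Delta(\alpha, \beta) \setminus \{i\}$, the vector $\gamma$ agrees with $\alpha$ in every coordinate except $i$, where it takes the value $\beta_i \in \mcal{I}_i$; hence $\gamma \in \mcal{I}_V$ and $\Delta(\alpha, \gamma) = \{i\}$. This $\gamma$ therefore witnesses $f(\alpha) > f(\gamma)$ for a neighbor differing in exactly one coordinate, directly contradicting the defining property of a local minimizer, which demands $f(\alpha) \leq f(\gamma)$ for every such $\gamma$. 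The contradiction yields the corollary.

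The argument is short precisely because all of the substantive effort has already been spent on Lemma~\ref{le:alpha_descend} and, beneath it, the Switching Lemma~\ref{le:alpha_beta_2d}. The only point requiring care here — and the closest thing to an obstacle — is the bookkeeping needed to confirm that the descending vector $\gamma$ supplied by the lemma stays inside $\mcal{I}_V$ and differs from $\alpha$ in exactly one coordinate. This is what turns an abstract strict-descent statement into a genuine violation of Definition~\ref{defn:local}, rather than merely a descent direction somewhere in the interior $(0,1)^V$.
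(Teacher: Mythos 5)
Your proposal is correct and follows essentially the same route as the paper: assume a local minimizer $\alpha$ is beaten by some $\beta$, invoke Lemma~\ref{le:alpha_descend} to produce a single-coordinate descent $\gamma$, and contradict Definition~\ref{defn:local}. Your extra bookkeeping that $\gamma_i = \beta_i \in \mcal{I}_i$ (so $\gamma$ remains feasible) is a valid detail the paper leaves implicit.
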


\begin{proof}
Suppose that $\alpha$ is a local minimizer, but
there is some~$\beta$
with $f(\alpha) > f(\beta)$.
Then, Lemma~\ref{le:alpha_descend}
implies there is some $\gamma$ with $|\Delta(\alpha, \gamma)| = 1$
such that $f(\alpha) > f(\gamma)$, contradicting
that $\alpha$ is a local minimizer.
%
%
\end{proof}

\section{Efficient Local Search}
\label{sec:efficient}

In Section~\ref{sec:structure},
we conclude in Corollary~\ref{cor:ext} that
it suffices to consider the extreme points
of the search space of resistance vectors.
Moreover, Corollary~\ref{cor:local_opt} states
that every local minimizer is a global minimizer.
Since we know how to compute
the sign of the partial derivative with respect
to each coordinate using Lemma~\ref{le:same_sign},
we can design a simple local search algorithm 
to find a global minimizer.

However, it is possible that $O(2^n)$ extreme points
are encountered before a global minimizer is reached.
Fortunately, in this section, we will explore
further properties of the objective function,
and design a local search algorithm that encounters
at most $O(n)$ extreme points before finding a global minimizer.

\subsection{Irrevocable Updates}
\label{sec:irrevocable}

\noindent \textbf{Local Search Strategy.} We shall start with the
upper bound resistance vector, i.e., for each $i \in V$, $\alpha_i = u_i$.
This induces the corresponding equilibrium opinion
vector $z(\alpha)$.
According to Lemma~\ref{le:same_sign},
if there is some agent~$i$ such that
$\alpha_i = u_i$ and $s_i - z_i(\alpha) > 0$,
then we should flip $\alpha_i$ to the lower bound $l_i$.
The following lemma shows that each $\alpha_i$ will be flipped at most once.
Essentially, we show that we will never encounter the situation
that there is some agent~$k$ such that $\alpha_k = l_k$ and $s_k - z_k(\alpha) < 0$,
in which case we would have to switch $\alpha_k$ back to $u_k$.

\begin{lemma}[Each Coordinate Flipped at Most Once]
\label{lemma:once}
Starting from the upper bound resistance vector,
the above local search strategy flips each $\alpha_i$ at most once.
\end{lemma}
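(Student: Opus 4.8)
The plan is to maintain throughout the execution of the local search the invariant that every coordinate $k$ that has already been set to its lower bound $l_k$ satisfies $s_k - z_k(\alpha) > 0$. By Lemma~\ref{le:same_sign} this is exactly the condition that $\frac{\partial f(\alpha)}{\partial \alpha_k} > 0$, so the algorithm would never have an incentive to raise $\alpha_k$ back up to $u_k$; consequently no coordinate is ever flipped a second time, which is the statement of the lemma. I would prove the invariant by induction on the number of flips performed, the base case being the starting vector $\alpha = u$, where no coordinate is at its lower bound and the invariant holds vacuously.

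The crux is a monotonicity claim: if at the current vector $\alpha \in (0,1)^V$ we have $s_i - z_i(\alpha) > 0$, then lowering $\alpha_i$ from $u_i$ to $l_i$ while fixing all other coordinates strictly decreases $z_k(\alpha)$ for every $k \in V$. To establish it I would follow the continuous path along which $\alpha_i$ decreases from $u_i$ to $l_i$; since $\alpha$ stays in $(0,1)^V$, the inverse in Definition~\ref{defn:osp} exists and $z$ is differentiable along the path. By Lemma~\ref{le:same_sign}, at each point $\frac{d z_k}{d \alpha_i} = \frac{\partial z_k(\alpha)}{\partial \alpha_i}$ has the same sign as $s_i - z_i(\alpha)$. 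In particular $\frac{d (s_i - z_i)}{d \alpha_i} = -\frac{\partial z_i(\alpha)}{\partial \alpha_i}$ has sign opposite to $s_i - z_i(\alpha)$, so starting from a positive value the quantity $s_i - z_i$ strictly increases as $\alpha_i$ is lowered and therefore remains positive along the entire path. Hence $\frac{\partial z_k(\alpha)}{\partial \alpha_i} > 0$ throughout, and integrating over the decrease in $\alpha_i$ shows each $z_k$ strictly decreases.

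With this monotonicity claim the inductive step is immediate. Suppose the invariant holds and the algorithm flips some coordinate $i$, so that $\alpha_i = u_i$ and $s_i - z_i(\alpha) > 0$, down to $l_i$. For $i$ itself, the argument above shows $s_i - z_i$ stays positive at the new vector, so $i$ satisfies the invariant. For any coordinate $k$ already at $l_k$, the monotonicity claim gives that $z_k$ decreases, hence $s_k - z_k$ can only increase and so remains positive by the induction hypothesis. Thus the invariant is preserved after every flip, and since a coordinate at $l_k$ always has $s_k - z_k > 0$, the strategy never flips it back to $u_k$; each coordinate is flipped at most once.

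I expect the main obstacle to be the monotonicity claim, and specifically the subtlety that the sign of the partials $\frac{\partial z_k}{\partial \alpha_i}$ must be verified to stay constant along the flip path. This is precisely the place where one must rule out $s_i - z_i$ changing sign while $\alpha_i$ is being lowered; once that is secured, the conclusion that decreasing a coordinate with positive marginal pushes the \emph{entire} equilibrium vector downward is what drives the whole argument and makes the updates irrevocable.
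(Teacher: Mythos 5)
Your proposal is correct and follows essentially the same route as the paper: both arguments show that the quantity $s_k - z_k(\alpha)$ can only increase as the local search lowers coordinates, so a coordinate at its lower bound never needs to be raised again. The one point where you go beyond the paper's write-up is in carefully verifying that $s_i - z_i$ stays positive along the continuous path from $u_i$ to $l_i$ (so that $\frac{\partial z_k}{\partial \alpha_i}$ keeps a constant sign while integrating); the paper states this step in one line, implicitly relying on Lemma~\ref{le:mar_mon}'s observation that the sign of the partial derivative is independent of $\alpha_i$, and your ODE-style argument is a valid self-contained substitute for that.
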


\begin{proof}
We first show that for each agent~$k \in V$,
the quantity $s_k - z_k(\alpha)$ cannot decrease
when $\alpha$ is modified according to the local search strategy.
According to the strategy, $\alpha$ is modified because
there is some agent~$i$ such that
$\alpha_i = u_i$ and $s_i - z_i(\alpha) > 0$.
By Lemma~\ref{le:same_sign}, $\frac{\partial z_k}{\partial \alpha_i} > 0$
for each $k\in V$.
Hence, after $\alpha_i$ is switched from $u_i$ to $l_i$,
$z_k(\alpha)$ decreases, and the quantity $s_k - z_k(\alpha)$ increases.

Observe that if a coordinate $\alpha_k$ is ever flipped from $u_k$ to $l_k$,
this means that at that moment, we must have $s_k - z_k(\alpha) > 0$, which, as we have just shown, will
stay positive after $\alpha$ is subsequently updated.
\end{proof}

\ignore{
We revisit Lemma~\ref{le:mar_mon}.  Recall that the innate opinion vector
is $s \in [0,1]^V$.  Lemma~\ref{le:mar_mon} says
that given a resistance vector $\alpha \in (0,1]^V$, the partial derivative
of the objective $f$ with respect
to coordinate~$i \in V$
has exactly the same sign
as $s_i - z_i$,
where $z_i = \sum_{j\not=i}M_{ij}\alpha_j s_j$,
$M=[I-(I-A_{-\{i\}})P]^{-1}$ and $A_{-\{i\}} = \Diag(\alpha_{-\{i\}})$.
Upon closer observation, one can verify that the $z_i$'s
define a vector $z \in \R^V$ satisfying
$z = [I-(I-A_{-\{i\}})P]^{-1}  A_{-\{i\}} s$.

In other words, 
if one changes the resistance of agent~$i$ to 0,
then $z$ is exactly the resulting equilibrium opinion vector!
This gives an interesting interpretation of the sign of the 
partial derivative of the objective function $f$.  For a given agent~$i$,
if its innate opinion~$s_i$ is larger than its equilibrium opinion $z_i$,
then being more stubborn should increase its own equilibrium, thereby
increasing the objective $f$.  This intuition agrees
with Lemma~\ref{le:mar_mon} that the partial derivative with respect
to $i$ is positive in this case.

What is even more interesting about Lemma~\ref{le:mar_mon} is that
if the partial derivative of $f$ with respect to $i$ is positive
at $\alpha_{-\{i\}}$,
then it remains positive when $\alpha_i$ increases from 0 to 1.
This gives us the intuition that instead of using the
equilibrium vector for each $\alpha_{-\{i\}}$
to determine the sign of the corresponding partial derivative $\frac{\partial f}{\partial \alpha_i}$,
the same current equilibrium vector for $\alpha$ might be used instead.

}

\subsection{Approximating the Equilibrium Vector}

Observe that in our local search algorithm,
we need to compute the equilibrium opinion vector $z(\alpha) = [I - (I - A)P]^{-1} A s$
for the current resistance vector $\alpha$, where $A = \Diag(\alpha)$.
However, computing matrix inverse is an expensive operation.  Instead, we
approximate $z(\alpha)$ using
the recurrence $z^{(0)} \in [0,1]^V$
and $z^{(t+1)} := A s + (I - A) P z^{(t)}$.
The following lemma gives an upper bound
on the additive error for each coordinate.

\begin{lemma}[Approximation Error]
\label{lemma:error}
Suppose for some $\epsilon > 0$,
for all $i \in V$, $\alpha_i \geq \epsilon$.
Then, for every $t \geq 0$,
$\| z(\alpha) - z^{(t)}\|_\infty \leq \frac{(1 - \epsilon)^{t}}{\epsilon}$.

\ignore{
For the special case $z^{(0)} = A s$,
we have for each $i \in V$,
$0 \leq z_i(\alpha) - z^{(t)}_i \leq 
\frac{(1 - \epsilon)^{t+1}}{\epsilon}$.}
\end{lemma}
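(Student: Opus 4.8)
The plan is to track the error vector $e^{(t)} := z(\alpha) - z^{(t)}$ and show that it contracts geometrically at rate $1-\epsilon$. First I would record the fixed-point characterization of the equilibrium: by definition $z(\alpha) = [I-(I-A)P]^{-1}As$ is exactly the solution of the equation $z(\alpha) = A s + (I-A)P\,z(\alpha)$. Subtracting this identity from the iteration $z^{(t+1)} = A s + (I-A)P\,z^{(t)}$ cancels the constant term $As$ and yields the homogeneous recurrence $e^{(t+1)} = (I-A)P\,e^{(t)}$, and hence $e^{(t)} = [(I-A)P]^{t}\,e^{(0)}$.

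Next I would bound the induced $\infty$-norm of the matrix $(I-A)P$. Since $P$ is nonnegative and row stochastic, and $I-A$ is diagonal with entries $1-\alpha_i \in (0,1)$, the matrix $(I-A)P$ is entrywise nonnegative and its $i$-th row sums to $(1-\alpha_i)\sum_{j} P_{ij} = 1-\alpha_i$. Its maximum absolute row sum is therefore $\|(I-A)P\|_\infty = \max_{i\in V}(1-\alpha_i) \le 1-\epsilon$, using the hypothesis $\alpha_i \ge \epsilon$. Submultiplicativity of the induced norm then gives $\|e^{(t)}\|_\infty \le \|(I-A)P\|_\infty^{\,t}\,\|e^{(0)}\|_\infty \le (1-\epsilon)^{t}\,\|e^{(0)}\|_\infty$.

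It remains to control the initial error $\|e^{(0)}\|_\infty$. Both vectors lie in $[0,1]^V$: the starting iterate $z^{(0)}$ is chosen in $[0,1]^V$ by assumption, and the equilibrium $z(\alpha)$ lies in $[0,1]^V$ because the map $w \mapsto As + (I-A)Pw$ sends the cube $[0,1]^V$ into itself (each coordinate is the convex combination $\alpha_i s_i + (1-\alpha_i)(Pw)_i$ of $s_i \in [0,1]$ and $(Pw)_i \in [0,1]$) and is a contraction, so its unique fixed point stays in $[0,1]^V$. Hence $\|e^{(0)}\|_\infty \le 1$, and since $0 < \epsilon < 1$ we have $1 \le 1/\epsilon$, which yields $\|e^{(t)}\|_\infty \le (1-\epsilon)^{t} \le (1-\epsilon)^{t}/\epsilon$, as claimed.

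I do not expect a genuine obstacle: this is a standard Banach-type geometric-convergence estimate whose essential content is the contraction factor $(1-\epsilon)$. The only point needing a little care is justifying $z(\alpha)\in[0,1]^V$ so that the crude bound $\|e^{(0)}\|_\infty \le 1$ is legitimate; the slack factor $1/\epsilon$ in the stated inequality is then not tight. If one preferred not to invoke the self-map property, the same $1/\epsilon$ arises naturally from the Neumann-series estimate $\|z(\alpha)\|_\infty = \|\sum_{k\ge 0}[(I-A)P]^k As\|_\infty \le \tfrac{1}{\epsilon}\|As\|_\infty \le \tfrac{1}{\epsilon}$, which is exactly the route that produces the sharper one-sided bound $\tfrac{(1-\epsilon)^{t+1}}{\epsilon}$ in the special case $z^{(0)} = As$.
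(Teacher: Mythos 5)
Your proof is correct, and it takes a genuinely different (and arguably cleaner) route than the paper's. The paper expands both $z(\alpha)$ and $z^{(t)}$ via the Neumann series $[I-(I-A)P]^{-1}=\sum_{j\ge 0}[(I-A)P]^j$, writes the error as $\sum_{j=t}^{\infty}[(I-A)P]^jAs-[(I-A)P]^tz^{(0)}$, observes that both terms are entrywise non-negative so the $\infty$-norm of their difference is at most the larger of the two norms, and then bounds the tail of the geometric series to get $(1-\epsilon)^t/\epsilon$; the $1/\epsilon$ is an artifact of summing that tail. You instead subtract the fixed-point identity from the iteration to obtain the homogeneous recurrence $e^{(t+1)}=(I-A)P\,e^{(t)}$, bound the induced $\infty$-norm $\|(I-A)P\|_\infty=\max_i(1-\alpha_i)\le 1-\epsilon$ via row sums (where the paper proves the equivalent fact $\|[(I-A)P]^jx\|_\infty\le(1-\epsilon)^j$ by induction), and control $\|e^{(0)}\|_\infty\le 1$ by showing the update map sends $[0,1]^V$ into itself so that $z(\alpha)\in[0,1]^V$. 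What your approach buys is the strictly sharper bound $(1-\epsilon)^t$ with no $1/\epsilon$ factor, recovering the stated inequality only by the slack $1\le 1/\epsilon$; what the paper's approach buys is that it never needs to argue the equilibrium lies in the unit cube, and its series decomposition is reused almost verbatim in the companion estimate of Lemma~\ref{lemma:error2}. Your justification of $z(\alpha)\in[0,1]^V$ (convex combination plus uniqueness of the fixed point of a contraction on the closed invariant cube) is complete, so there is no gap.
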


\begin{proof}
Using the Neumann series 
$[I - (I - A)P]^{-1} = \sum_{j=0}^\infty [(I-A)P]^j$,
we have
$z(\alpha) - z^{(t)} = \sum_{j=t}^\infty [(I-A)P]^j A s   -  [(I-A)P]^t z^{(0)} $.

We next prove, by induction, that
for any $x \in [0,1]^V$,
 $\| [(I-A)P]^j x \|_\infty \leq (1 - \epsilon)^j$,
for all $j \geq 0$.  The base case $j=0$ is trivial because every coordinate of $x$ is between 0 and 1.
For the inductive step, assume that for some $j \geq 0$, every coordinate of $y = [(I-A)P]^j x$ has magnitude
at most $(1 - \epsilon)^j$.  Since $P$ is a row stochastic matrix, it follows
that $\|P y \|_\infty \leq (1 - \epsilon)^j$;
finally, since $\alpha_i \geq \epsilon$ for all $i \in V$,
we have $\| (I-A) P y \|_\infty \leq (1 - \epsilon)^{j+1}$, completing the induction proof.

Finally,
observing that both $\sum_{j=t}^\infty [(I-A)P]^j A s$
and $[(I-A)P]^t z^{(0)}$ have non-negative coordinates, we have
\begin{align}
\begin{split}
\|z(\alpha) - z^{(t)}\|_\infty
&\leq \max\{\|\sum_{j=t}^\infty [(I-A)P]^j A s\|_\infty,
\|[(I-A)P]^t z^{(0)}\|_\infty\}\\
&\leq \sum_{j=t}^\infty ( 1- \epsilon)^j
= \frac{(1 - \epsilon)^{t}}{\epsilon}
\nonumber
\end{split}
\end{align}
as required.
\end{proof}

\subsection{Local Search Algorithm}

Based on Lemmas~\ref{lemma:once} and~\ref{lemma:error},
we give a local search framework in Algorithm~\ref{alg:local}.
Observe that in line~\ref{ln:perturb},
we perturb the innate opinions~$s$ slightly to ensure
that for each resistance vector~$\alpha$ encountered,
no coordinate of $s$ and $z(\alpha)$ would coincide.

The \textbf{while} loop in line~\ref{ln:while} combines
local search to update $\alpha$ and estimation of 
the equilibrium vector $z(\alpha)$.  Here are two general update strategies,
which are both captured by the non-deterministic step in line~\ref{ln:non-determ}:

\begin{itemize}
\item \textbf{Conservative Update.} The opinion vector $z$ is iteratively
updated in line~\ref{ln:iter} until all coordinates of $z$ and $s$ are sufficiently far apart.
Then, for every coordinate~$\alpha_i$ such that $\alpha_i = u_i$ and $z_i < s_i$,
we flip $\alpha_i$ to the lower bound $l_i$.

After we update $\alpha$, we reset $t$ to 0, and continue to iteratively update~$z$.
Whenever we update $\alpha$ and set $t$ to 0, we say that a new \emph{phase} begins;
we use the convention that the initial phase is known as phase~$0$.

\item \textbf{Opportunistic Update.}  Instead of waiting for the approximation error
of every coordinate of $z$ to be small enough, we can update some coordinates $\alpha_i$,
if $\alpha_i = u_i$ and $z_i \leq s_i - \err(t)$ is small enough.  However,
there is some tradeoff between waiting for the errors of all coordinates to be small enough
and updating coordinates of $\alpha$ that are ready sooner.
In Section~\ref{sec:experiment}, we will evaluate empirically different update strategies.
\end{itemize}

\begin{algorithm}[!ht]
\caption{Local Search Framework} \label{alg:local}
\begin{small}
\KwIn{Innate opinions $s \in [0,1]^V$; interaction matrix $P$;
for each agent~$i \in V$, upper $u_i$ and lower $l_i$ bounds for resistance.
}
\KwOut{Optimal resistance vector $\alpha \in \times_{i \in V} \{l_i, u_i\}$.
}

\noindent \emph{(Technical step.)} Randomly perturb each coordinate of $s$ slightly. \label{ln:perturb}

Initially, for each agent $i$, set $\alpha_i \gets u_i$ to its upper bound;
denote $\epsilon_\alpha := \min_{i \in V } \alpha_i$.

Pick arbitrary $z \in [0,1]^V$, and set $t \gets 0$;
denote $\err(t) := \frac{(1-\epsilon_\alpha)^t}{\epsilon_\alpha}$.

\While{$\exists i \in V: |s_i - z_i| \leq \err(t) \vee (z_i < s_i \wedge \alpha_i = u_i)$}{\label{ln:while}

$z \gets A s + (I-A)Pz$, where $A = \Diag(\alpha)$. \label{ln:iter}

$t \gets t+ 1$.

\emph{(Non-deterministic step.)} Pick arbitrary $L \subseteq V$ (that can be empty)
such that for each $i \in L$, $z_i \leq s_i - \err(t)$ and $\alpha_i = u_i$.
\label{ln:non-determ}

\If{$L \neq \emptyset$}{

	\For{each $i \in L$}{
		Set $\alpha_i \gets l_i$ to its lower bound (and update $\epsilon_\alpha$).
	}
	
	$t \gets 0$.
}

}

\Return{Resistance vector $\alpha$.}
\end{small}
\end{algorithm}

\noindent \textbf{Optimistic Update.}  In both conservative and opportunistic updates, a coordinate $\alpha_i$ is flipped only when we know for sure that the current estimate $z_i$ has small enough error with respect to the equilibrium $z_i(\alpha)$; hence, no mistake in flipping any $\alpha_i$ is ever made.  However, our insight is that as the algorithm proceeds, the general trend is for every $z_i$ to decrease.

The first intuition is that if we start with some $z^{(0)}$ such that every coordinate of $z^{(0)}$ is at least
its equilibrium coordinate of  $z(\alpha)$, then $z^{(t)}$ should converge to $z(\alpha)$ from above.  The second observation is that
every time we flip some $\alpha_i$, this will not increase any coordinate of the equilibrium vector $z(\alpha)$, thereby
preserving the condition that the current estimate $z^{(t)} \geq z(\alpha)$.  Hence, without worrying about the accuracy
of the current estimate $z$, we will simply flip coordinate $\alpha_i$ to $l_i$ when $z_i$ drops below $s_i$.  However, it is possible that we might need to flip $\alpha_i$ back to $u_i$, if $z_i$ increases in the next iteration and becomes larger than $s_i$ again.  We shall see in Section~\ref{sec:experiment} that this scenario is extremely rare.  Specifically, in line~\ref{ln:rare} of Algorithm~\ref{alg:opt},
the set $J$ is (almost) always empty.

\begin{algorithm}[!ht]
\caption{Optimistic Local Search} \label{alg:opt}
\begin{small}
\KwIn{Innate opinions $s \in [0,1]^V$; interaction matrix $P$;
for each agent~$i \in V$, upper $u_i$ and lower $l_i$ bounds for resistance.
}
\KwOut{Optimal resistance vector $\alpha \in \times_{i \in V} \{l_i, u_i\}$.
}

\noindent \emph{(Technical step.)} Randomly perturb each coordinate of $s$ slightly. \label{ln:perturb}

Initially, for each agent $i$, set $\alpha_i \gets u_i$ to its upper bound;
denote $\epsilon_\alpha := \min_{i \in V } \alpha_i$.

Pick $z = (1, 1, \ldots, 1)$, and set $t \gets 0$;
denote $\err(t) := \frac{(1-\epsilon_\alpha)^t}{\epsilon_\alpha}$.

\While{$\exists i \in V: |s_i - z_i| \leq \err(t)$}{\label{ln:while}

$z \gets A s + (I-A)Pz$, where $A = \Diag(\alpha)$. \label{ln:iter2}

$t \gets t+ 1$.

\emph{(Optimistic Candidates.)} Set $L \gets \{i \in V:  z_i \leq s_i \wedge \alpha_i = u_i\}$.
\label{ln:opt}

\emph{(Rare Mistakes.)} Set $J \gets \{i \in V:  z_i > s_i \wedge \alpha_i = l_i\}$. \label{ln:rare}

\If{$L \cup J \neq \emptyset$}{

	\For{each $i \in L$}{
		Set $\alpha_i \gets l_i$ to its lower bound (and update $\epsilon_\alpha$).
	}
	
	\For{each $i \in J$}{
		Set $\alpha_i \gets u_i$ to its upper bound (and update $\epsilon_\alpha$).
	}

	$t \gets 0$.
}

}


\Return{Resistance vector $\alpha$.}
\end{small}
\end{algorithm}

\section{Heuristic Algorithms for Budgeted Opinion Susceptibility Problem}
\label{sec:heuristic}

\subsection{Marginal Greedy}
We propose the Marginal Greedy in Algorithm \ref{alg:marginal} which has similar framework as the greedy heuristic in \cite{AbebeKPT18} but employs the optimistic update strategy to approximate the equilibrium opinion vector $z(\alpha)$.

\begin{algorithm}[!ht]
\caption{Marginal Greedy} \label{alg:marginal}
\begin{small}
\KwIn{Innate opinions $s \in [0,1]^V$; initial resistance vector $\alpha^{(0)}$;
budget $k$; interaction matrix $P$;
for each agent~$i \in V$, upper $u_i$ and lower $l_i$ bounds for resistance.
}
\KwOut{The optimal set $T$ of agents with changed resistance and the corresponding resistance vector $\alpha$ with $\forall i \in T: \alpha_i \in \{l_i, u_i\}$
and $\forall i \in V \setminus T: \alpha_i = \alpha_i^{(0)}$.
}

\noindent \emph{(Technical step.)} Randomly perturb each coordinate of $s$ slightly.

Initialize the sum of equilibrium options $f \gets |V|$ and the set of agents $T \gets \emptyset$.

\For{$j \gets 1$ to $k$}{

Denote $\epsilon_\alpha := \min_{i \in V } \alpha_i^{(j)}$
and $\err(t) := \frac{(1-\epsilon_\alpha)^t}{\epsilon_\alpha}$.

\For{each $v\in V \setminus T$}{


Set $\alpha^{(j)}\gets \alpha^{(j-1)}$ and $\alpha_v^{(j)} \gets u_v$ to its upper bound (and update $\epsilon_\alpha$).

Set $z \gets (1, 1, \ldots, 1)$ and $t \gets 0$.

\While{$\exists i \in V: |s_i - z_i| \leq \err(t)$}{

$z \gets A s + (I-A)Pz$, where $A = \Diag(\alpha^{(j)})$.

$t \gets t+ 1$.

Set $L \gets \{i \in T\cup \{v\}:  z_i \leq s_i \wedge \alpha_i^{(j)} = u_i\}$
and $J \gets \{i \in T\cup \{v\}:  z_i > s_i \wedge \alpha_i^{(j)} = l_i\}$.

\If{$L \cup J \neq \emptyset$}{

	\For{each $i \in L$}{
		Set $\alpha_i^{(j)} \gets l_i$ to its lower bound (and update $\epsilon_\alpha$).
	}
	
	\For{each $i \in J$}{
		Set $\alpha_i^{(j)} \gets u_i$ to its upper bound (and update $\epsilon_\alpha$).
	}

	$t \gets 0$.
}

}

\If{$f > \sum_{i\in V} z_i$}{
Set $f \gets \sum_{i\in V} z_i$.

Update the selected agent $v' \gets v$ and the corresponding resistance vector $\alpha' \gets \alpha^{(j)}$.
}

}

Update the set of selected agents $T \gets T\cup \{v'\}$ and the corresponding resistance vector $\alpha^{(j)} \gets \alpha'$.

}


\Return{The set of agents $T$ and resistance vector $\alpha^{(k)}$.}
\end{small}
\end{algorithm}

\subsection{Batch Gradient Greedy}

We also give a gradient-based heuristic, called Batch Gradient Greedy (BGG), in Algorithm \ref{alg:batch_grad}.
The \textbf{while} loop in line \ref{ln:update_r_z_while} employs
the optimistic update strategy to approximate the equilibrium opinion vector $z(\alpha)$ (as well as $r(\alpha)$) until it is far apart enough from $s$ to enter the following procedures.

Observe that in Line \ref{ln:batch_while}, we introduce the batch approach to accelerate the algorithm.
When dealing with a large scale network and a large budget, we can set the batch size proportional to the budget, e.g. 1\% of budget, to limit the times to run the outer \textbf{while} loop.

From line \ref{ln:measure_begin} to \ref{ln:measure_end},
we consider a measure $\delta_i$ for each agent $i$, that is the partial derivative times the change of resistance, to decide which agent to include in the batch in order to maximize the decrease of the equilibrium.
Since we are using the optimistic update strategy to approximate $z$ and $r$, we need to estimate its lower $\delta_i^{(l)}$ and upper $\delta_i^{(u)}$ bounds based on Lemma \ref{lemma:error} and \ref{lemma:error2}.

Then in lines \ref{ln:pick_subset_begin} to \ref{ln:pick_subset_end}, we try to pick a subset of agents such that their minimal measure lower bound is greater than the maximal measure upper bound of the rest agents in $V \setminus T$,
i.e. to make sure that we select a batch of agents with the greatest measure.
Otherwise, we discard the subset and do one more update of $r$ and $z$ until we can find such a subset. 

\begin{algorithm}[!ht]
\caption{Batch Gradient Greedy (BGG)} \label{alg:batch_grad}
\begin{small}
\KwIn{Innate opinions $s \in [0,1]^V$; initial resistance vector $\alpha^{(0)}$;
budget $k$; batch size $b$, interaction matrix $P$;
for each agent~$i \in V$, upper $u_i$ and lower $l_i$ bounds for resistance; precision $\rho$.
}
\KwOut{The optimal set $T$ of agents with changed resistance and the corresponding resistance vector $\alpha$ with $\forall i \in T: \alpha_i \in \{l_i, u_i\}$
and $\forall i \in V \setminus T: \alpha_i = \alpha_i^{(0)}$.
}

\noindent \emph{(Technical step.)} Randomly perturb each coordinate of $s$ slightly.

Initialize the resistance vector $\alpha \gets \alpha^{(0)}$ and the set of agents $T \gets \emptyset$.

Denote $\epsilon_\alpha := \min_{i \in V } \alpha_i$
and $\err(t) := \frac{(1-\epsilon_\alpha)^t}{\epsilon_\alpha}$.

\While{$|T| < k$}{
Set $z \gets (1, 1, \ldots, 1)$; $r \gets (1, 1, \ldots, 1)$; $t \gets 0$. \label{ln:update_r_z_begin}

\While{$\exists i \in V: |s_i - z_i| \leq \err(t)$ \label{ln:update_r_z_while}}{

$r \gets \one + P^\top (I-A)r$ and $z \gets A s + (I-A)Pz$, where $A = \Diag(\alpha)$.

$t \gets t+ 1$.

Set $L \gets \{i \in T:  z_i \leq s_i \wedge \alpha_i = u_i\}$ and
$J \gets \{i \in T:  z_i > s_i \wedge \alpha_i = l_i\}$.

\If{$L \cup J \neq \emptyset$}{

	\For{each $i \in L$}{
		Set $\alpha_i \gets l_i$ to its lower bound (and update $\epsilon_\alpha$).
	}
	
	\For{each $i \in J$}{
		Set $\alpha_i \gets u_i$ to its upper bound (and update $\epsilon_\alpha$).
	}

	$t \gets 0$. \label{ln:update_r_z_end}
}

}

Set $\delta^{(u)} \gets (0, 0, \ldots, 0)$; $\delta^{(l)} \gets (0, 0, \ldots, 0)$; $T' \gets \emptyset$; $b' = \min\{b, k - |T|\}$.

\While{$|T'| < b'$ \label{ln:batch_while}}{

\For{each $i\in V \setminus T$ \label{ln:measure_begin}}{

(Compute upper $d_i^{(u)}$ and lower $d_i^{(l)}$ bound for the partial derivative.)

$d_i^{(u)} \gets [r_i + |V| \cdot \err(t)] \cdot \frac{|s_i - z_i| + \err(t)}{1 - \alpha_i}$.
and
$d_i^{(l)} \gets [r_i - |V| \cdot \err(t)] \cdot \frac{|s_i - z_i| - \err(t)}{1 - \alpha_i}$.


\If{$s_i \geq z_i$}{
$\delta_i^{(u)} \gets d_i^{(u)} \cdot (\alpha_i - l_i)$
and
$\delta_i^{(l)} \gets d_i^{(l)} \cdot (\alpha_i - l_i)$.
}\Else{
$\delta_i^{(u)} \gets d_i^{(u)} \cdot (u_i - \alpha_i)$
and
$\delta_i^{(l)} \gets d_i^{(l)} \cdot (u_i - \alpha_i)$.
\label{ln:measure_end}
}

}

Pick $T'$ to be the set of $b'$ agents $i$ in $V \setminus T$ with the largest $\delta^{(l)}_i$. \label{ln:pick_subset_begin}

\If{$\max_{i\in V \setminus (T\cup T')}\delta_i^{(u)}
> \min_{i\in T'}\delta_i^{(l)}$}{
$r \gets \one + P^\top (I-A)r$ and $z \gets A s + (I-A)Pz$, where $A = \Diag(\alpha)$.

$t \gets t+ 1$ and
$T' \gets \emptyset$. \label{ln:pick_subset_end}
}

}
\For{$i \in T'$}{
\If{$s_{i} \geq z_{i}$}{
Set $\alpha_{i} \gets l_{i}$ to its lower bound (and update $\epsilon_\alpha$).
}
\Else{
Set $\alpha_{i} \gets u_{i}$ to its upper bound (and update $\epsilon_\alpha$).
}
}
Update the set of selected agents $T \gets T\cup T'$.
}

{
\textbf{The code as Lines \ref{ln:update_r_z_begin} to \ref{ln:update_r_z_end} without updating $r$.}}

\Return{The set of agents $T$ and resistance vector $\alpha$.}
\end{small}
\end{algorithm}

\subsection{Approximating the Derivative Vector}
In Algorithm~\ref{alg:batch_grad}, we compute the partial derivative vector $d(\alpha)=\times_{i \in V}\frac{\partial f(\alpha)}{\partial \alpha_i}$ according to the equations in the proof of Lemma \ref{le:same_sign}, where

\begin{align}
\begin{split}
d_i(\alpha) = \frac{\partial f(\alpha)}{\partial \alpha_i}
=\frac{s_i - z_i(\alpha)}{1 - \alpha_i} \cdot \one^\top[I - (I - A)P]^{-1}e_i
&=\frac{s_i - z_i(\alpha)}{1 - \alpha_i} \cdot
\sum_{j=0}^{\infty}\one^\top[(I - A)P]^{j}e_i\\
&=\frac{s_i - z_i(\alpha)}{1 - \alpha_i} \cdot
r_i(\alpha)
\nonumber
\end{split}
\end{align}
where we define $r(\alpha)^\top = \one^\top[I - (I - A)P]^{-1}$
or $r(\alpha) = [I - P^\top (I - A)]^{-1}\one$.

\begin{lemma}[Approximation Error]
\label{lemma:error2}
Suppose for some $\epsilon > 0$,
for all $i \in V$, $\alpha_i \geq \epsilon$.
For $t \geq 0$,
define $r^{(t)} = \sum_{j=0}^{t}[P^\top(I - A)]^{j} \one$;
we denote $r(\alpha) = r^{(\infty)}$.

Then, for every $t \geq 0$,
$\| r(\alpha) - r^{(t)}\|_1 \leq n \cdot \frac{(1 - \epsilon)^{t+1}}{\epsilon}$.

\ignore{
For the special case $z^{(0)} = A s$,
we have for each $i \in V$,
$0 \leq z_i(\alpha) - z^{(t)}_i \leq 
\frac{(1 - \epsilon)^{t+1}}{\epsilon}$.}
\end{lemma}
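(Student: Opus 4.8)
The plan is to mirror the argument of Lemma~\ref{lemma:error}, but to track the $\ell_1$ norm instead of the $\ell_\infty$ norm and to exploit the column-stochasticity of $P^\top$ rather than the row-stochasticity of $P$. First I would write the error as a tail of a Neumann series. Since $P$ is an irreducible row-stochastic matrix and every $\alpha_i \ge \epsilon > 0$, the matrix $(I-A)P$ has spectral radius strictly less than $1$ (this is exactly what underlies the convergent Neumann expansion used in Lemma~\ref{le:PM_M} with $K=\emptyset$); the transpose $P^\top(I-A)$ shares the same spectrum, so $[I-P^\top(I-A)]^{-1}=\sum_{j=0}^\infty [P^\top(I-A)]^j$ converges and agrees with $r(\alpha)=r^{(\infty)}$. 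Hence
$$ r(\alpha) - r^{(t)} = \sum_{j=t+1}^\infty [P^\top(I-A)]^j \one. $$

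Next I would establish the single-step contraction $\| [P^\top(I-A)]^j \one \|_1 \le n(1-\epsilon)^j$ by induction on $j$. The two ingredients are: (i) for any nonnegative vector $x$, the vector $(I-A)x$ is again nonnegative and $\|(I-A)x\|_1=\sum_i (1-\alpha_i)x_i \le (1-\epsilon)\|x\|_1$; and (ii) $P^\top$ preserves the $\ell_1$ norm of nonnegative vectors, because $P\one=\one$ gives $\one^\top P^\top=\one^\top$, so for nonnegative $y$ we have $\|P^\top y\|_1=\one^\top P^\top y=\one^\top y=\|y\|_1$. Since $\one\ge 0$ and both $P^\top$ and $I-A$ have nonnegative entries, every iterate $[P^\top(I-A)]^j\one$ stays nonnegative; applying (i) then (ii) at each step multiplies the $\ell_1$ norm by at most $(1-\epsilon)$, and the base case $\|\one\|_1=n$ yields the claim.

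Finally I would sum the geometric tail:
$$ \|r(\alpha)-r^{(t)}\|_1 \le \sum_{j=t+1}^\infty \|[P^\top(I-A)]^j\one\|_1 \le n\sum_{j=t+1}^\infty (1-\epsilon)^j = n\cdot\frac{(1-\epsilon)^{t+1}}{\epsilon}, $$
as required. I do not expect a genuine obstacle here; the only point requiring care is to keep every iterate nonnegative, so that the $\ell_1$ norm can be rewritten as $\one^\top(\cdot)$ and the column-stochasticity of $P^\top$ applied. This nonnegativity is what makes the $\ell_1$ bound clean for the transposed system, in contrast to the $\ell_\infty$ bound that was natural for the original row-stochastic system in Lemma~\ref{lemma:error}.
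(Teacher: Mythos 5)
Your proposal is correct and follows essentially the same route as the paper: both expand $r(\alpha)-r^{(t)}$ as the tail $\sum_{j=t+1}^\infty[P^\top(I-A)]^j\one$ of the Neumann series and prove by induction that each term has $\ell_1$ norm (equivalently, $\one^\top[P^\top(I-A)]^j\one$, since all iterates are nonnegative) at most $n(1-\epsilon)^j$, using $\one^\top P^\top=\one^\top$ and the entrywise factor $1-\alpha_i\le 1-\epsilon$. The only difference is bookkeeping: you track $\|\cdot\|_1$ of the vector iterates directly, while the paper peels the factors off from the left of the scalar $\one^\top[\cdot]\one$; these are the same computation.
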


\begin{proof}
We use the Neumann series 
$[I - P^\top(I - A)]^{-1} = \sum_{j=0}^\infty [P^\top(I - A)]^j$.

We have
$\|r(\alpha) - r^{(t)}\|_1 = \one^\top \sum_{j=t+1}^\infty [P^\top(I - A)]^j \one$.

We next prove, by induction, that
for all $j \geq 0$,
 $\one^\top [P^\top(I - A)]^j \one \leq n (1 - \epsilon)^j$.

The base case $j=0$ is trivial.

For the inductive step, assume that for some $j \geq 0$, 
$\one^\top [P^\top(I - A)]^j \one \leq n (1 - \epsilon)^j$.
Since $P$ is a row stochastic matrix, it follows
that $\one^\top P^\top  = \one^\top$.
Hence, 
$\one^\top [P^\top(I - A)]^{j+1} \one
= \one^\top (I-A) [P^\top(I - A)]^{j}  \one
\leq (1 - \epsilon) \one^\top [P^\top(I - A)]^{j} \one$,
where the inequality holds because every entry of the row vector
$\one^\top [P^\top(I - A)]^{j}$ is non-negative.
The inductive step is completed by using the induction hypothesis.

Finally,
we have
\begin{align}
\begin{split}
\| r(\alpha) - r^{(t)}\|_1 = \one^\top \sum_{j=t+1}^\infty [P^\top (I-A)]^j \one
\leq n \sum_{j=t+1}^\infty (1 - \epsilon)^j = 
n \cdot \frac{(1 - \epsilon)^{t+1}}{\epsilon},
\nonumber
\end{split}
\end{align}
as required.
\end{proof}

\section{Experiments}
\label{sec:experiment}

\noindent \textbf{Experimental Setup.}
Our experiments run on a server with 2.1 GHz Intel Xeon Gold 6152 CPU and 64GB of main memory.
The server is limited to activate at most 24 threads by the administrator.
The real network topologies we use in our experiment are shown in Table~\ref{tab:dataset};
we interpret each network as an undirected graph.
The number $n$ of nodes in the dataset networks ranges from about 1 million to 65 million; in each network,
the number $m$ of edges is around $2n$ to $30n$.

\vspace{10pt}

\begin{table*}[!ht]
\begin{center}
  \caption{Datasets Information for Different Networks}
  \label{tab:dataset}
  \begin{tabular}{lcccc}
    \toprule
    Name & Number $n$ of Nodes & Number $m$ of Edges & Source\\
    \midrule
    Residence hall& 217 & 2,672  & \cite{konect} \\
    Twitter& 548 & 3,638  & \cite{abir2019learning} \\
    Hamsterster& 1,788 & 12,476  & \cite{konect} \\
    Musae-twitch (PT)& 1,912 & 31,299  & \cite{snapnets} \\
    Facebook (NIPS)& 2,888 & 2,981  & \cite{konect} \\
    Advogato & 5,042 &  39,227 & \cite{konect} \\
    Chess & 7,115 &  55,779 & \cite{konect} \\
    Pretty Good Privacy & 10,680 & 24,316  & \cite{konect} \\
    DBLP & 12,495 &  49,563 & \cite{konect} \\
    Google+& 23,613 & 39,182  & \cite{konect} \\
    Facebook (WOSN)& 63,392 & 816,831 & \cite{konect} \\
    Catster & 148,826 & 5,447,464 & \cite{konect} \\
    com-Youtube & 1,134,890 & 2,987,624  & \cite{snapnets} \\
    com-LiveJournal & 3,997,962 & 34,681,189 & \cite{snapnets}\\
    LiveJournal & 10,690,276 & 112,307,385 & \cite{konect}\\
    com-Friendster & 65,608,366 & 1,806,067,135  & \cite{snapnets}\\
  \bottomrule
\end{tabular}
\end{center}
\end{table*} 

\noindent \textbf{Input Generation.}
For each dataset, we utilize the network topology and generate the input parameters as follows.
The innate opinion $s_i$ of each agent~$i$ is independently generated uniformly at random from $[0,1]$.
For each edge $\{i,j\}$ in the network, we independently pick $w_{ij}$ uniformly at random from $[0,1]$;
otherwise, we set $w_{ij} = 0$.  For $(i,j) \in V \times V$,
we normalize $P_{ij} := \frac{w_{ij}}{\sum_{k \in V} w_{ik}}$.
From Lemma~\ref{lemma:error}, one can see
that approximating the equilibrium opinions is more difficult when the resistance is low.
However, since we still want to demonstrate that
the resistance for each agent can have varied lower and upper bounds,
we set the lower bound $l_i$ of each agent~$i$ independently such that
with probability 0.99, $l_i$ equals $0.001$, and with probability 0.01, it is picked uniformly at random
from $[0.001,0.1]$.
Similarly, each upper bound $u_i$ is independently selected such that
with probability 0.99, it equals to 0.999,
and with probability 0.01, it is 
chosen uniformly at random from $[0.9,0.999]$.

\subsection{Update Strategies Comparison}

We compare the following three update strategies described in Section~\ref{sec:efficient}: conservative, opportunistic and optimistic.  For the three smaller networks (com-Youtube, com-LiveJournal, LiveJournal),
we apply all three update strategies.
For the largest network (com-Friendster), we only report the performance
of the optimistic update strategy, as the other two update strategies are
not efficient enough for such a large dataset.

\noindent \textbf{Experimental Setup.}
For fair comparison among the update strategies, we always initialize $z=(1,1,...,1)$.
To compare their performances, we plot a curve for each strategy.  The curves
have a common x-axis, which corresponds to the number of times
that the vector $z$ has been updated so far, i.e.,
the number of times line~\ref{ln:iter} (in both Algorithms~\ref{alg:local} and~\ref{alg:opt})
has been executed.  Since line~\ref{ln:iter} is the most time-consuming part
of the algorithms, it will be a suitable common reference.  We use the term \emph{iteration}
to refer to each time $z$ is updated.
For each update strategy,
we explain what is plotted for the y-axis.

\begin{itemize}
    \item \textbf{Conservative Strategy.} We run Algorithm \ref{alg:local}
		such that in line~\ref{ln:non-determ}, $L$ is non-empty only if
		for all $i \in V$, $|s_i - z_i| > \err(t)$,
		in which case, we pick $L$ to be the collection of 
		all $i$'s such that $\alpha_i = u_i$ and $z_i \leq s_i - \err(t)$.
		
		For the y-axis, we plot the ratio of agents~$i$ for which currently $\alpha_i = l_i$,
		or we know definitely that $\alpha_i$ should be switched to $l_i$, i.e., currently $\alpha_i = u_i$ and
		$z_i \leq s_i - \err(t)$.
		
		In Algorithm~\ref{alg:local}, the iterations (referring to each time $z$ is updated) are grouped into
		\emph{phases}, where a non-empty $L$ in an iteration marks the end of a phase.  Observe that
		at the end of a phase, for all $i \in L$, $\alpha_i$ is set to $l_i$ and $t$ is reset to 0.  Hence, in the next iteration, no coordinate $\alpha_i$
		is certain to be switched.  Hence, the curve has a step-like shape, where each plateau occurs after the end of each phase.
				
		Observer that initially $\epsilon_{\alpha}=\min_i u_i \geq 0.9$.  Hence, it takes very few number of iterations to satisfy
		$\forall i \in V: |s_i - z_i| > \err(t)$; we call this the phase~0.
    At the end of the phase~$0$, we set some $\alpha_i=l_i$ and $\epsilon_{\alpha}$ decreases significantly.
		Hence, subsequent phases have many more iterations.

		Observe that we can stop the iterative process, when for all $i \in V$, $|s_i - z_i| > \err(t)$, but there is no~$i \in V$
		such that $z_i < s_i$ and $\alpha_i = u_i$.  This marks the end of the curve.

    In each phase, we pick $L$ of line~\ref{ln:non-determ} as the collection of all $i$'s such that $z_i \leq s_i - \err(t)$
    only when $\forall i \in V: |s_i - z_i| > \err(t)$
    (otherwise, we pick $L=\emptyset$).
    Then, we set $\alpha_i = l_i$ for each $i\in L$ and $t=0$.
    We call such a phase a \emph{conservative phase}.
    
    \item \textbf{Opportunistic Strategy.} We run Algorithm \ref{alg:local} similarly as before.
		Phase~0 is the same as the conservative strategy; we call a phase \emph{conservative},
		if it follows the conservative update strategy.
		
		Starting from phase~1, we can perform it in an opportunistic manner as follows.
		Recall that at the beginning of a phase, $t$ has just been reset to 0.
		At the $t$-th iteration of that phase, we use $L(t)$ to denote 
		the collection of $i$'s such that $z_i \leq s_i - \err(t)$.
		For every 1000 iterations, we compute an estimate $\mathsf{slope}(k) := \frac{|L(1000k)|-|L(1000(k-1))|}{1000}$
		of the slope; we keep track $k_m$ of the maximum slope computed so far.
		After some iteration, if the estimated slope drops below some factor (we use 0.1 in our experiments) of $k_m$,
		then we end this phase.  Intuitively, each additional iteration flips only
		a small number of coordinates~$\alpha_i$, and hence, one would like to end this phase.  We call
		such a phase \emph{opportunistic}.
		
		Since typically the total number of phases is around 8, we run phase~1 to 6 opportunistically, after which
		we run the remaining phases conservatively to make sure that 
		all coordinates $\alpha_i$ that need to be changed will be flipped.	
			
			\ignore{
			Within phase~1, we get the index at which the  maximum slope
		is attained: $k_m := \arg \max_k \mathsf{slope}(k)$.
		
		Then, we obtain the iteration at which the slope falls below some factor (say 0.1) of $\mathsf{slope}(k_m)$,
		i.e., $t_e :=  1000 \times \min \{ k>k_{m} : \mathsf{slope}(k) < 0.1 \cdot \mathsf{slope}(k_{m})\}$.
		Intuitively, this means that after iteration $t_e$ in phase~$0$, each additional iteration flips only
		a small number of coordinates~$\alpha_i$, and hence, one might wish to end this phase earlier.
		
		\hubert{Actually, even phase~1 can be run differently.  We keep track of maximum slope $s_m$ found so far in this phase.
		Then, we can stop phase~1 if the current slope drops below $0.1 \cdot s_m$.}

		Therefore, we run phase~2 differently.  Instead of completing phase~2 as in conservative strategy,
		we just run phase~2 for $t_e$ iterations, after which we flip all coordinates $\alpha_i$ that
		satisfies $\alpha_i = u_i$ and $z_i \leq s_i - \err(t_e)$.  We call phase~2 an \emph{estimated} phase.
		
		We will run phase 3 conservatively, in which 
		a similar parameter $t_e$ is computed for running estimated phase~4.
		
		Since the total number of phases is usually around 8, after phase~4 we run all remaining phases conservatively to ensure
		all coordinates that need to be changed will be flipped.}
		
		As in the conservative update strategy, for the y-axis, we plot the ratio of coordinates $\alpha_i$ that currently $\alpha_i = l_i$,
		or we know for sure should be switched to $l_i$, i.e.
        $\alpha_i = u_i$ and
		$z_i \leq s_i - \err(t)$.
		
    %
    %
    
    \item \textbf{Optimistic Strategy.} We implement Algorithm~\ref{alg:opt}, where in each iteration after $z$ is updated, a coordinate $\alpha_i$ is (re)set to $l_i$ if $z_i < s_i$, and (re)set to $u_i$ if $z_i > s_i$.  For the y-axis, we plot the ratio of coordinates that currently take their lower bounds.  The curve ends when enough iterations are performed after some coordinate of $\alpha$ is last updated, in order to ensure that the estimate $z$ is close enough to the equilibrium vector according to Lemma~\ref{lemma:error}.

\end{itemize}

\ignore{
We use the accumulative times $T$ to update $z$ to evaluate a strategy's efficiency, since updating $z$ is the most time-consuming operation.
Let $r(T)$ be the ratio of coordinates of $\alpha$ that should be set to their lower bounds after the $T$th update of $z$ in the algorithm.}

\noindent \textbf{Experiment Results.}
Each of Figures~\ref{fig:1M_ratio}, \ref{fig:4M_ratio} and \ref{fig:10M_ratio} shows the plots for the three strategies
in the corresponding network (com-Youtube, com-LiveJournal or LiveJournal).
Figure~\ref{fig:65M_ratio} shows the plot of the optimistic strategy in the com-Friendster network,
where the other two strategies are not efficient enough for such a large network.
%
%
As expected, the opportunistic strategy is slightly better than the conservative strategy.
From the positions of the plateaus,
we can see that the initial opportunistic phases end sooner than their conservative counterparts.
Hence, overall the opportunistic strategy performs slightly better than the conservative strategy;
in increasing sizes of the three tested networks, the numbers of iterations taken by the opportunistic
strategy are $79.2\%$, $77.9\%$ and $71.5\%$, respectively, of those taken by the conservative strategy.

On the other hand,
the optimistic strategy can achieve the optimal resistance vector
with much fewer number of iterations than
the other two strategies.
In increasing sizes of the three smaller networks,
the numbers of iterations taken by
the optimistic strategy are only $12.8\%$, $13.4\%$ and $12.4\%$, respectively, of those taken by the conservative strategy.
Moreover,
the optimistic strategy makes very few mistakes; in increasing sizes of the four networks,
the numbers of times coordinates are flipped from lower bounds back to upper bounds are 1, 0, 13 and 168,
which are negligible for networks with millions of nodes.

\ignore{
\hubert{edit below paragraph later when the results are available.}

We can see that there are some terraces in the curves of the conservative and opportunistic strategies.
These terraces appears right after $t\leftarrow 0$, thus they can serve as the sign of the beginning and end of a phase.
Compared with the conservative strategy, the opportunistic strategy can save some times (the terrace is shorter) in the estimated phase while having smaller $r(T)$ at first.
But we can see that later on, the opportunistic strategy enters those last phases sooner than the conservative strategy and has higher $r(T)$ under the same $T$.
However, the opportunistic strategy can only save a small amount of update times.
It takes only $?\%$, $?\%$ and $?\%$ of the update times of the conservative strategy for the com-Youtube, com-LiveJournal and LiveJournal networks respectively.
}

\begin{figure}[!ht]
\begin{center}
\includegraphics[width=3.2in]{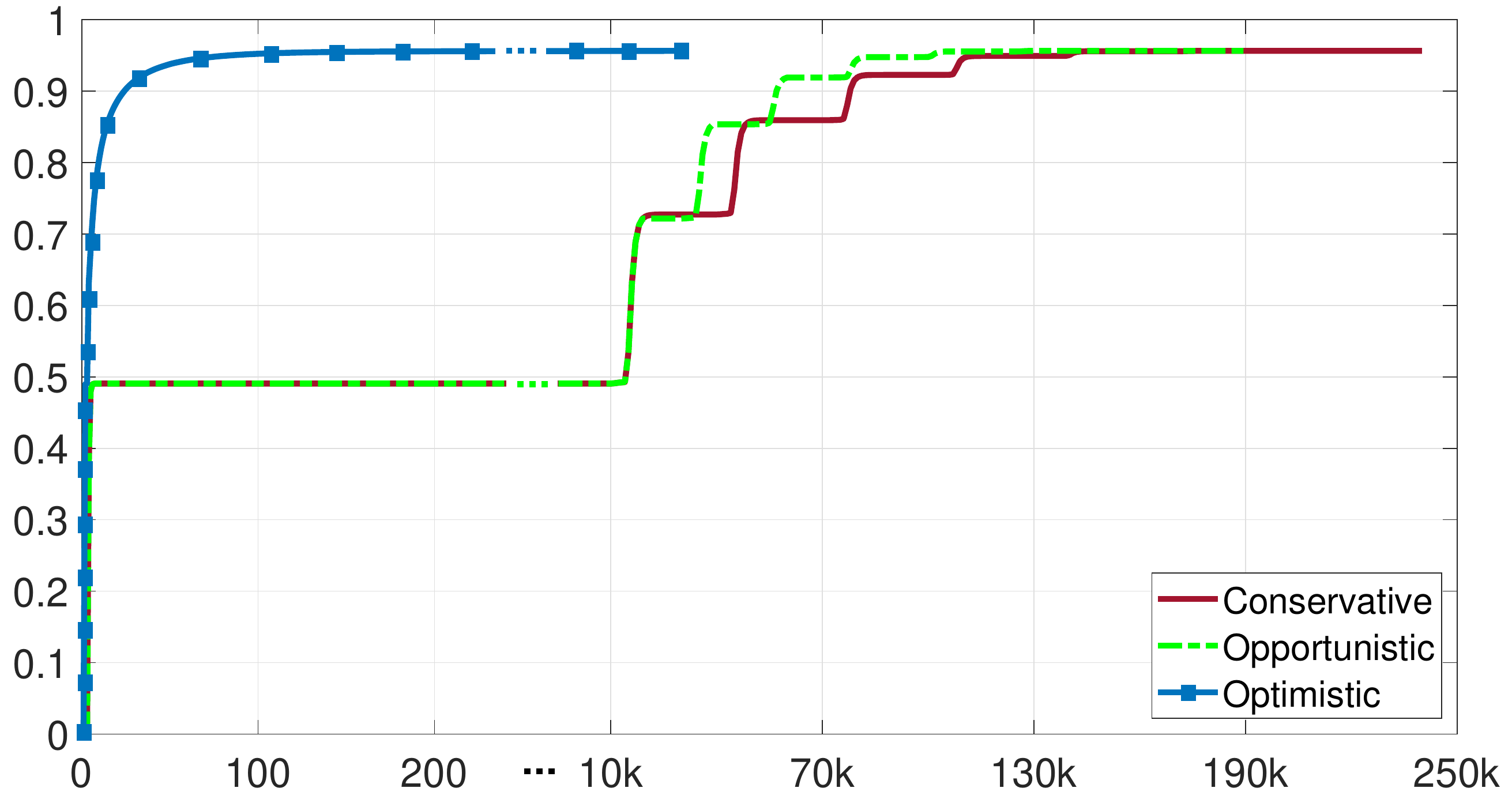}
\caption{Update Strategies Comparison on com-Youtube.}
\label{fig:1M_ratio}
\end{center}
\end{figure}

\begin{figure}[!ht]
\begin{center}
\includegraphics[width=3.2in]{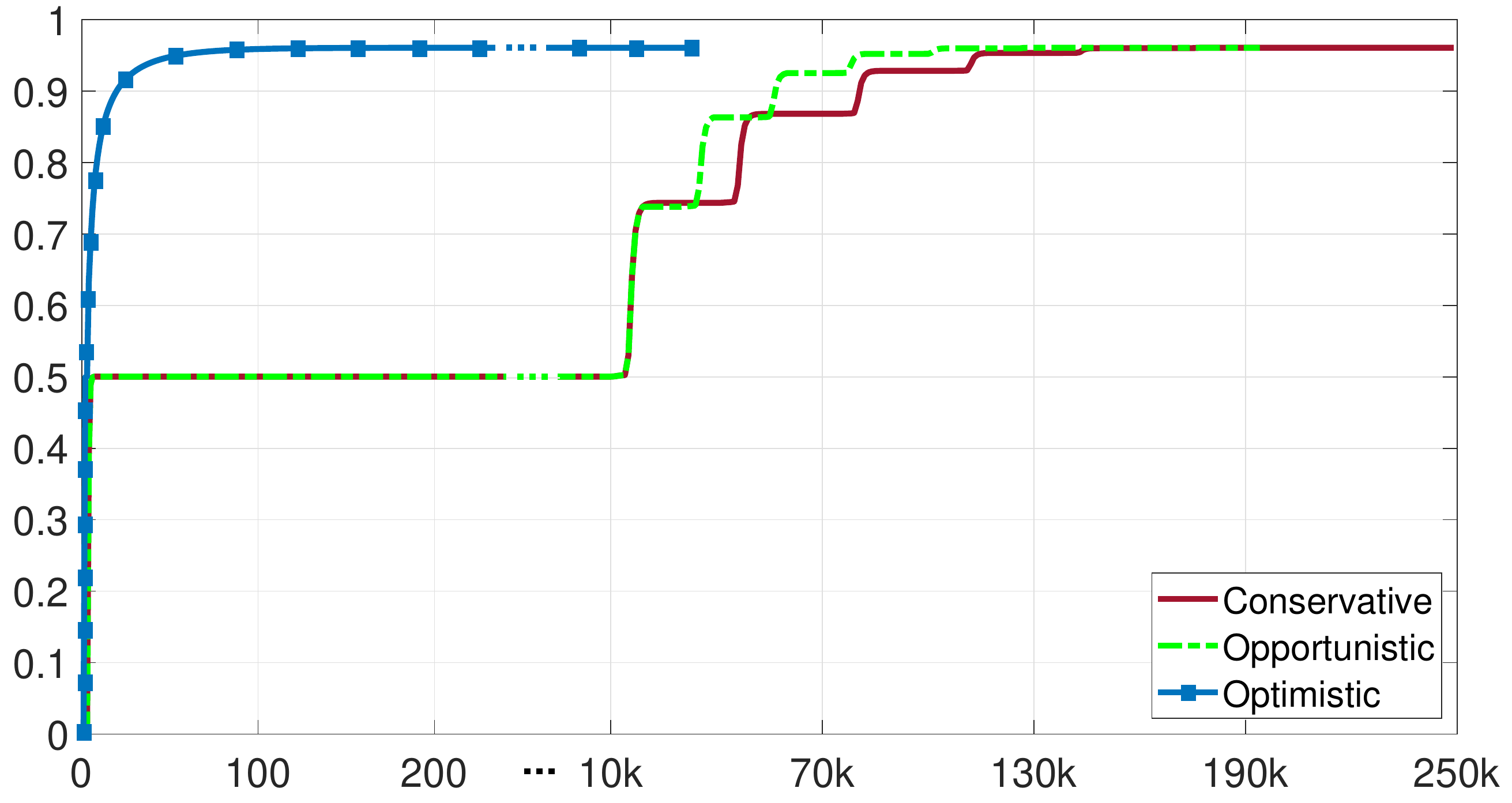}
\caption{Update Strategies Comparison on com-LiveJournal.}
\label{fig:4M_ratio}
\end{center}
\end{figure}

\begin{figure}[!ht]
\begin{center}
\includegraphics[width=3.2in]{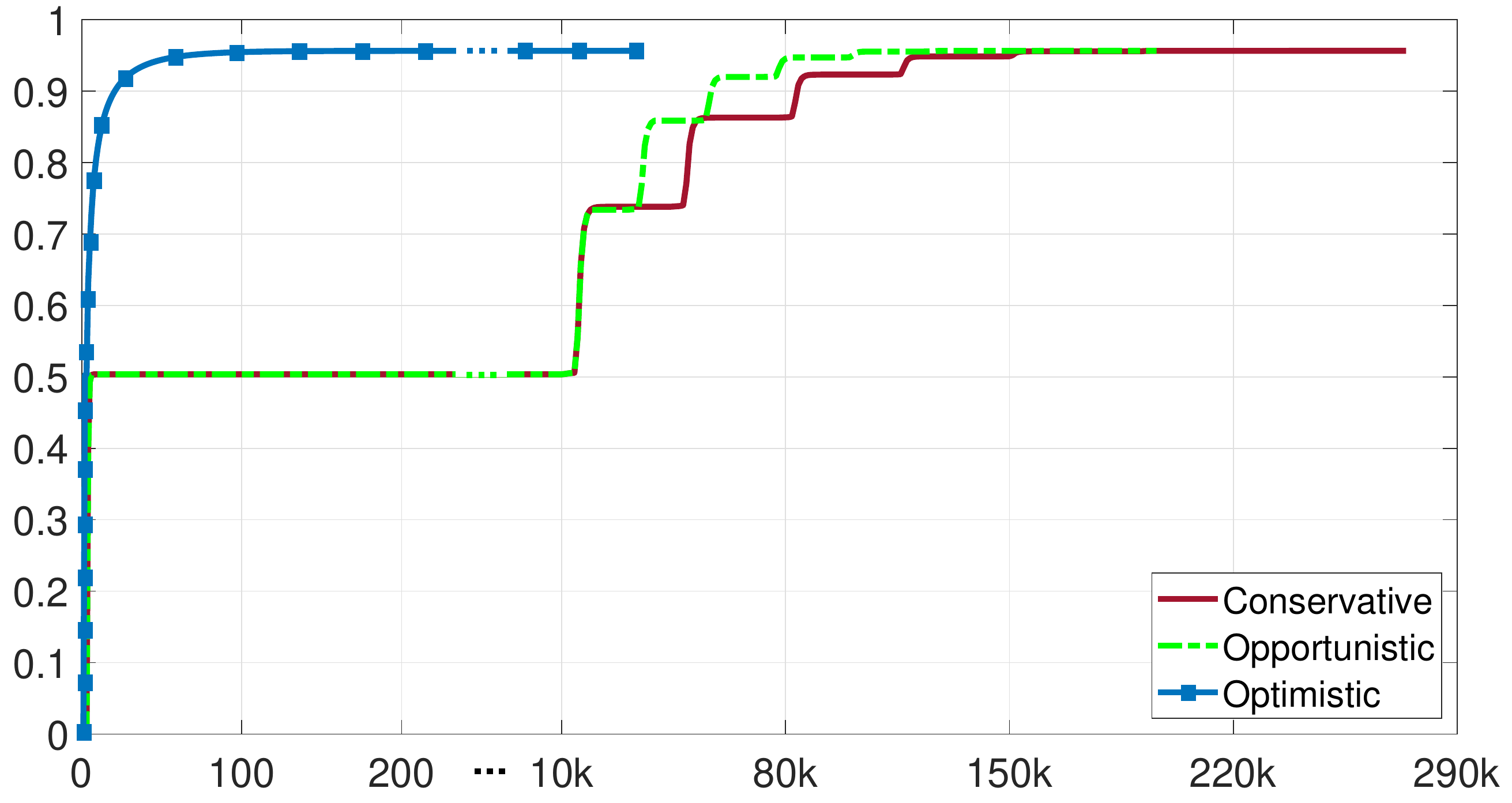}
\caption{Update Strategies Comparison on LiveJournal.}
\label{fig:10M_ratio}
\end{center}
\end{figure}

\begin{figure}[!ht]
\begin{center}
\includegraphics[width=3.2in]{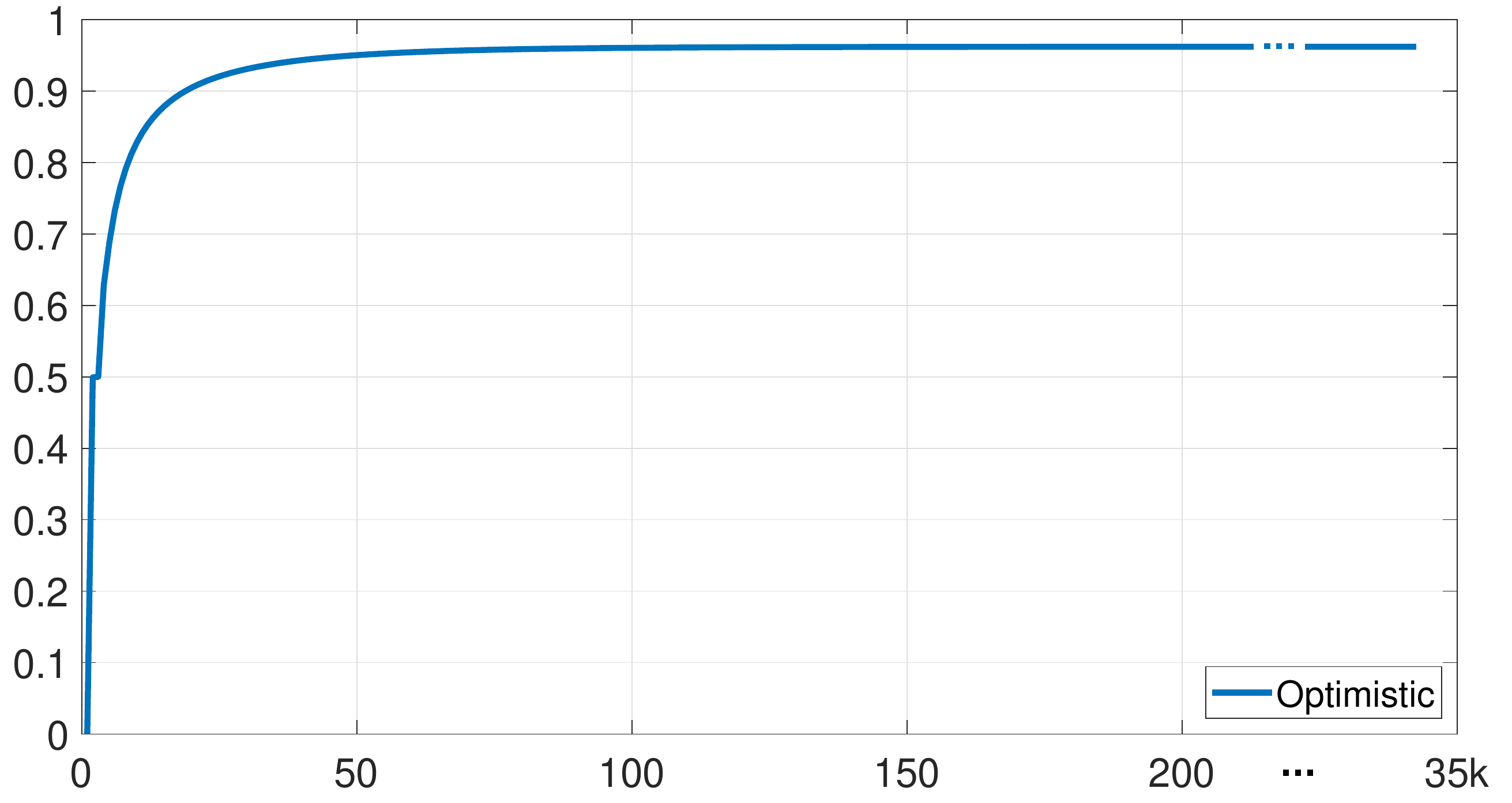}
\caption{Optimistic Strategy on com-Friendster.}
\label{fig:65M_ratio}
\end{center}
\end{figure}

\subsection{Running Time with Multiple Threads}

We compare the actual running time using different number of threads
for the optimistic strategy on only the three smaller networks, since
the largest network takes too long using only one thread.  Using all 24 available threads,
running the optimistic strategy on the com-Friendster network already takes around 50 hours.

The three bar graphs in Figure~\ref{fig:thread} show the
 running time (measured in minutes) for running the optimistic strategy with different number of threads
on the com-Youtube, com-LiveJournal and LiveJournal networks.
Since updating $z$ (line~\ref{ln:iter} of Algorithm~\ref{alg:opt}) is the most time-consuming part of the 
algorithm, the fact that it is readily parallelizable supports the empirical results that
using multiple threads can greatly reduce the running time, where the effect is more prominent for larger networks.

\begin{figure}[h]
    \centering
    \subfigure[]{
        \includegraphics[width=1.03in]{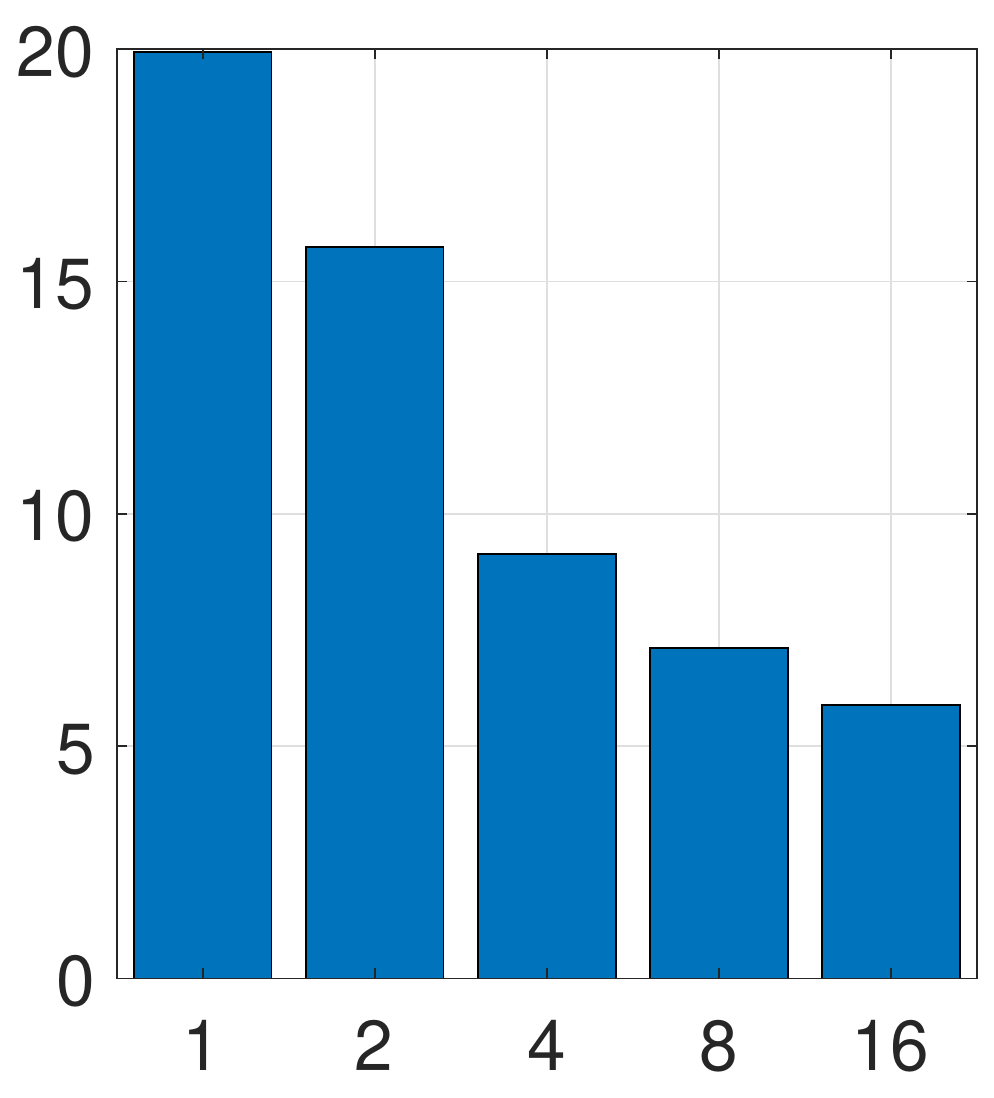}
		}
    \subfigure[]{
        \includegraphics[width=1.03in]{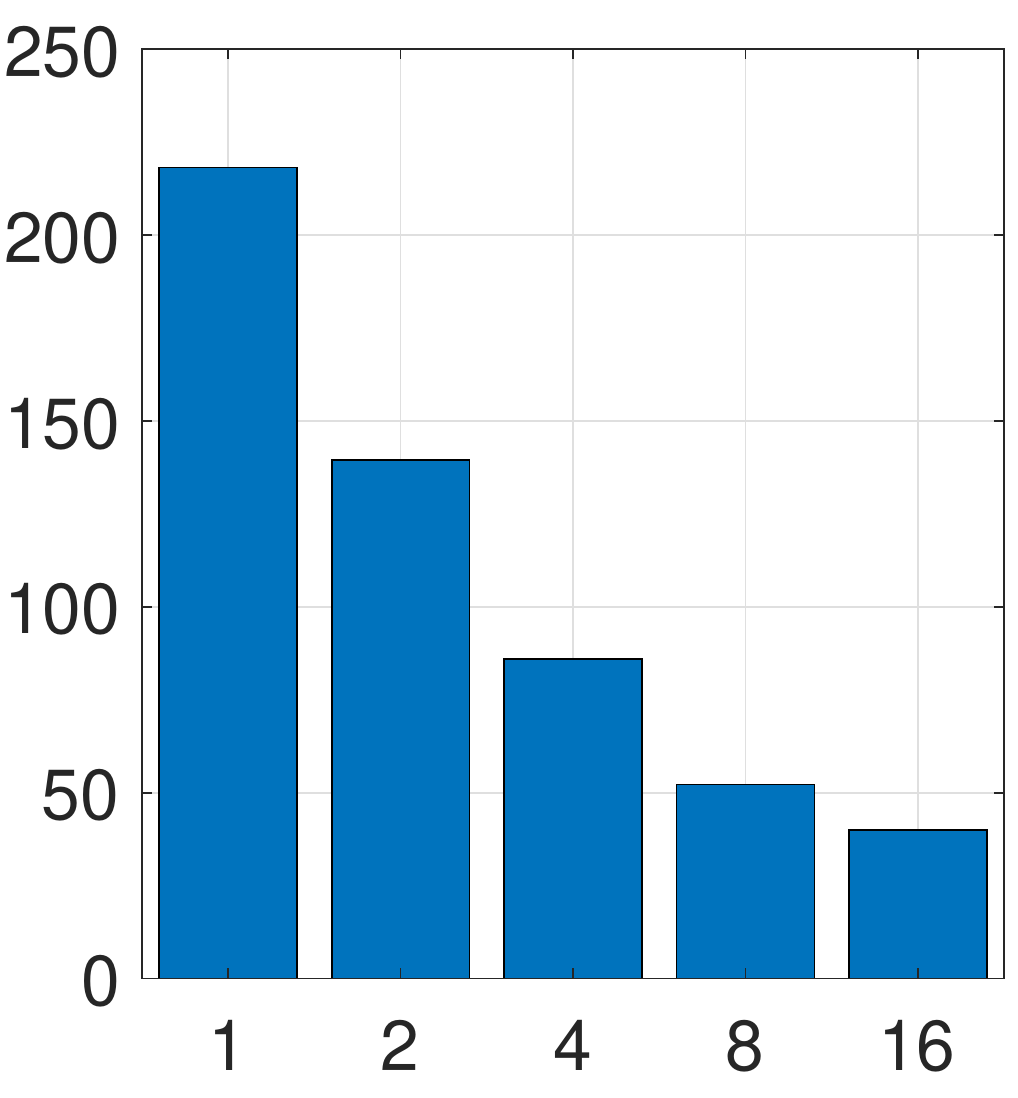}
    }
    \subfigure[]{
        \includegraphics[width=1.03in]{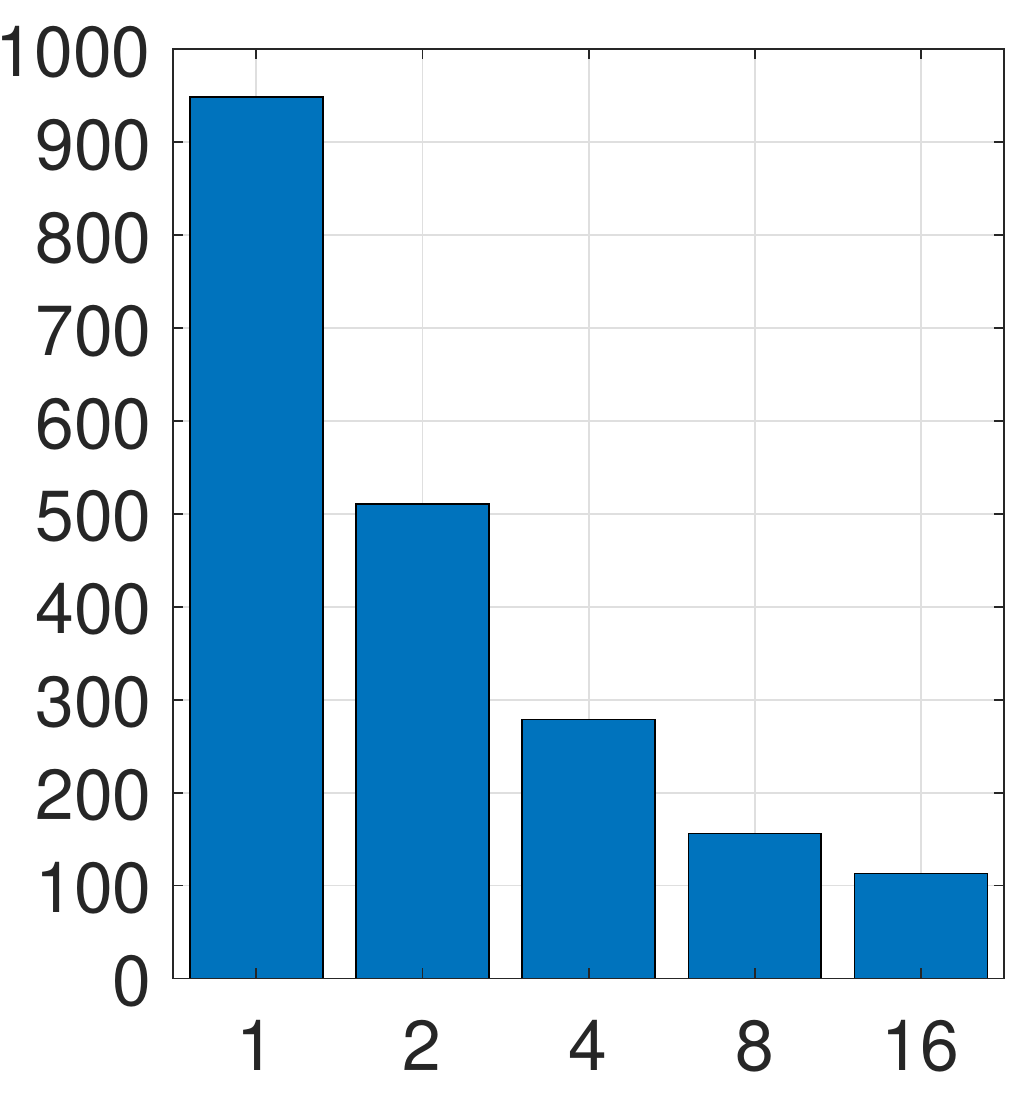}
    }
    \caption{Running Time with Different Number of Threads on
		com-Youtube, com-LiveJournal and LiveJournal}
    \label{fig:thread}
\end{figure}

\ignore{
\noindent \textbf{Multithreading implementation.}
Is this section, we focus on the most efficient optimistic strategy.
In Algorithm~\ref{alg:opt}, updating $z$ in line~\ref{ln:iter} is the most time-consuming operation and will be executed many times during the whole process.
Thus, we employ multithreading to update $z$ in order to speed up our program.
We consider the time required by the \textbf{while} loop as the program's running time.
Figure and compare the running time with different threads for the com-Youtube, com-LiveJournal and LiveJournal networks respectively.
}

\section{Experiments for Budgeted Variant}
\label{sec:experiment_budgeted}
\noindent \textbf{Experimental Setup.}
We conduct the experiments on a server with 3.4 GHz Intel(R) Core(TM) i5-3570 CPU and 16GB of main memory.
The server can activate at most 4 threads.
The real network topologies we use are also shown in Table~\ref{tab:dataset}.

\noindent \textbf{Input Generation.}
In each instance of the following experiments, we generate a setup of $s, P, u, l$ and $\alpha^{(0)}$ randomly in a similar way to that in Section \ref{sec:experiment}.
Particularly, the initial resistance $\alpha^{(0)}_i$ of each agent $i$
is independently selected uniformly at random from $[l_i, u_i]$ after $l_i$ and $u_i$ are generated.

\subsection{Agent Selection Strategies Comparison}
\label{sec:agent_selection_comp}

\begin{figure}[h]
    \centering
    \subfigure[]{
        \includegraphics[width=2.65in]{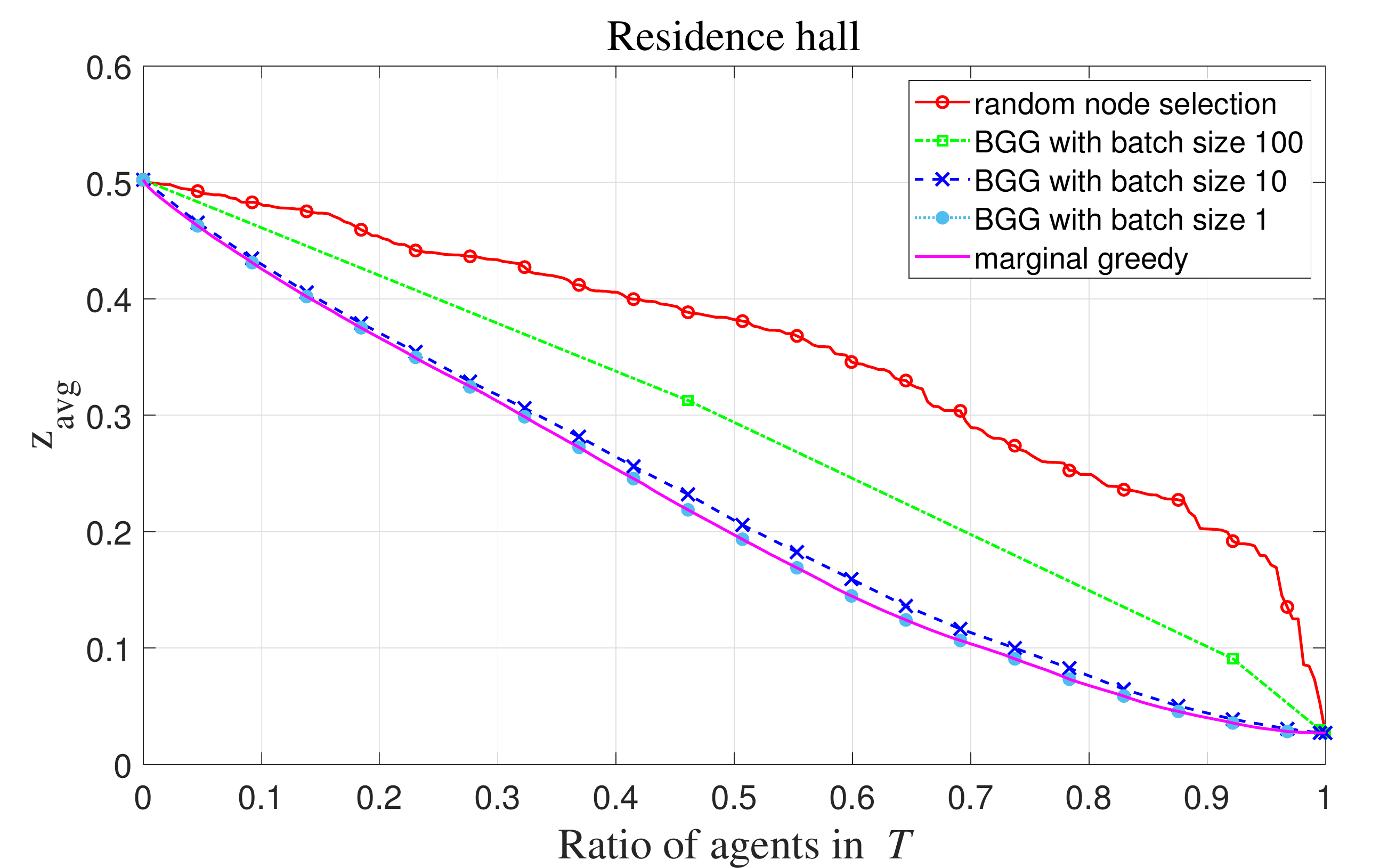}
		}
    \subfigure[]{
        \includegraphics[width=2.65in]{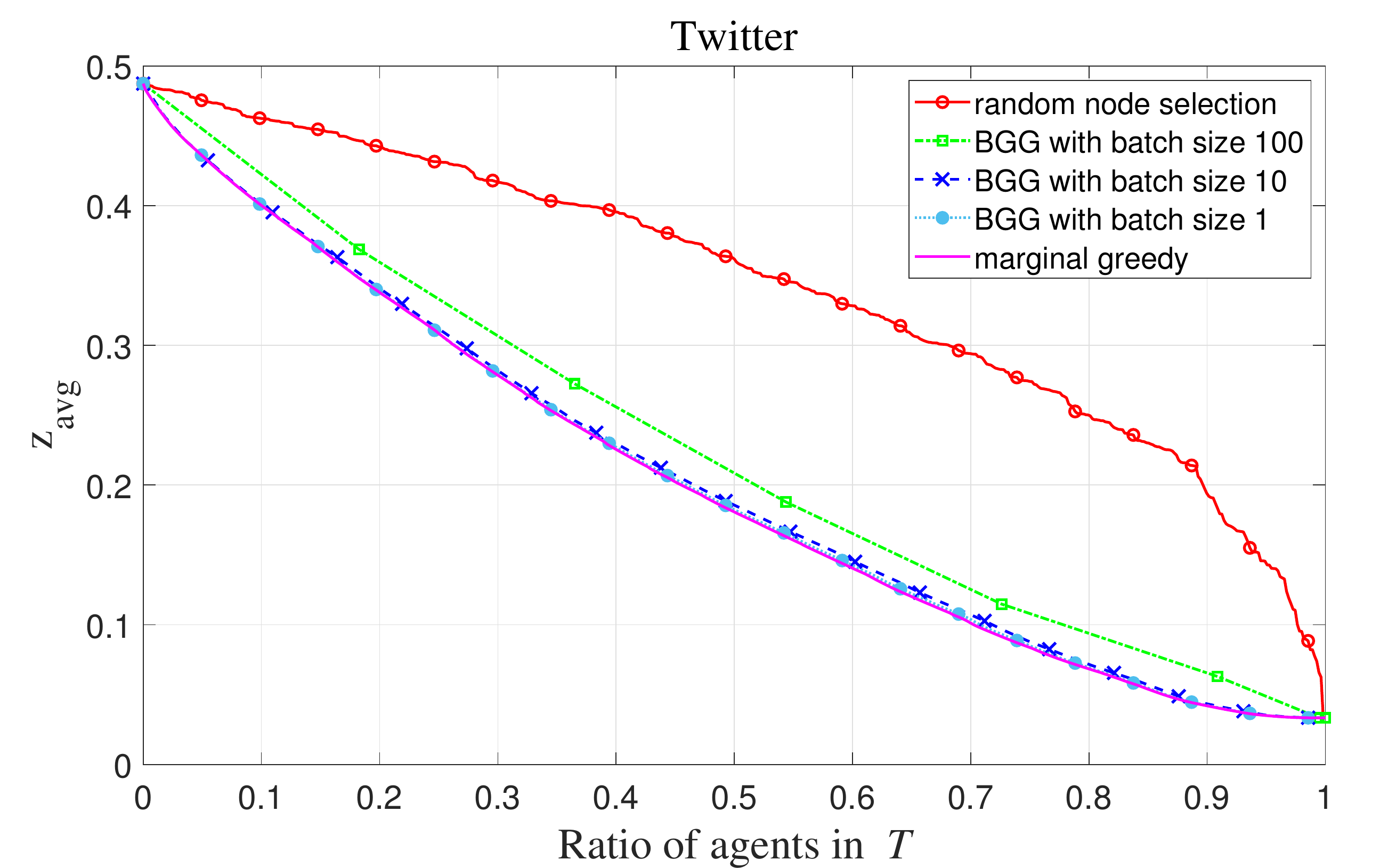}
    }
    \subfigure[]{
        \includegraphics[width=2.65in]{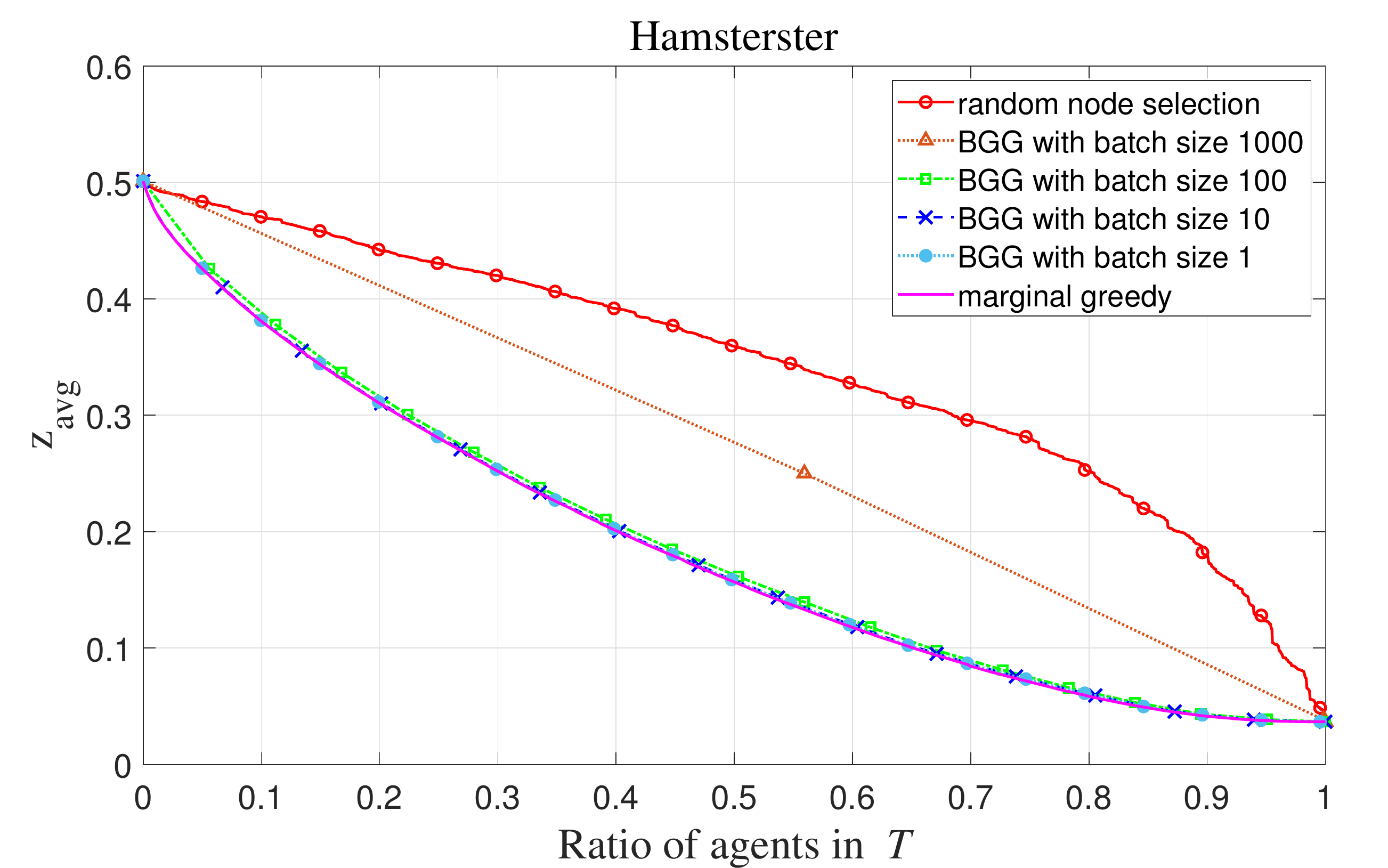}
    }
    \subfigure[]{
        \includegraphics[width=2.65in]{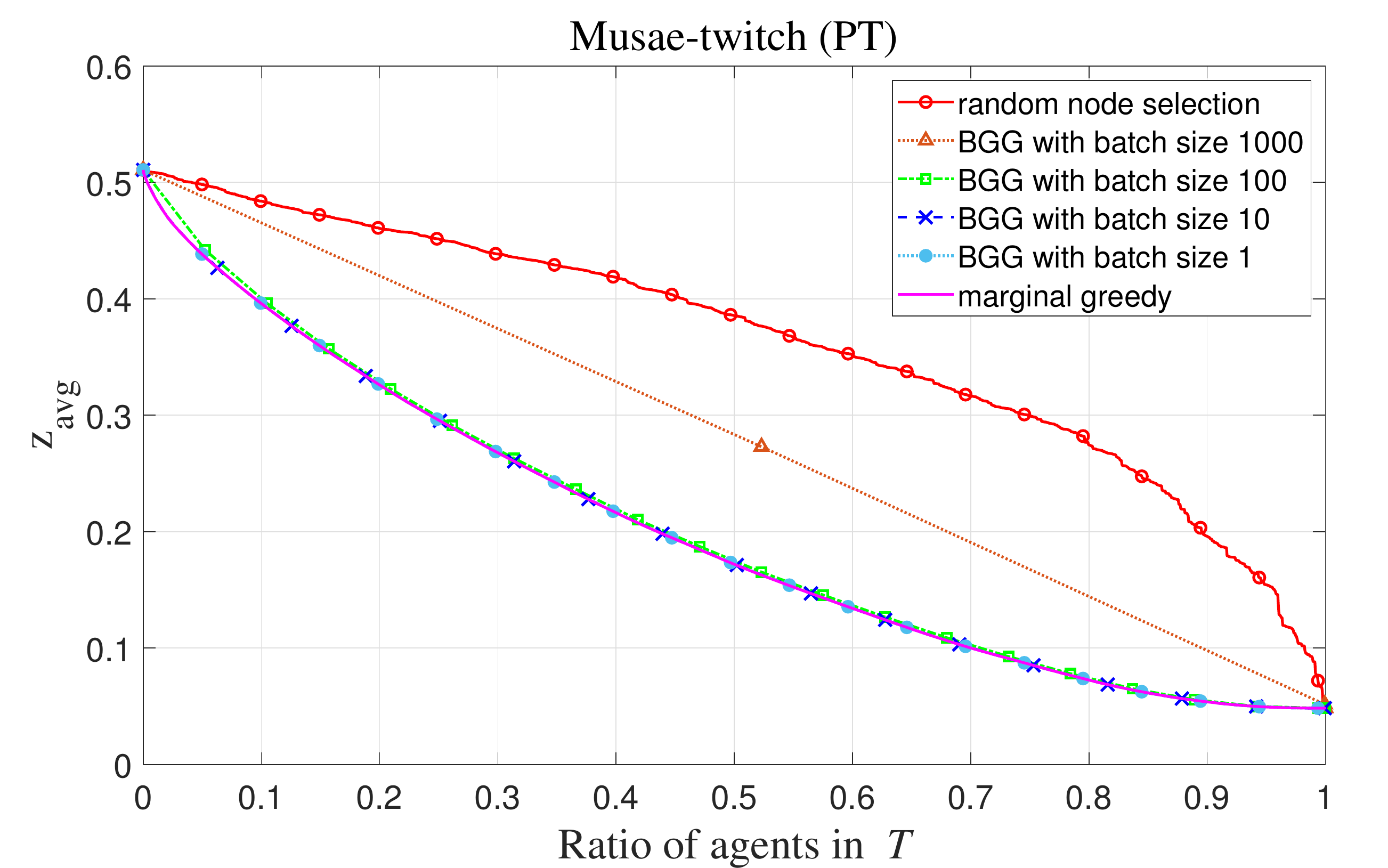}
    }
    \subfigure[]{
        \includegraphics[width=2.65in]{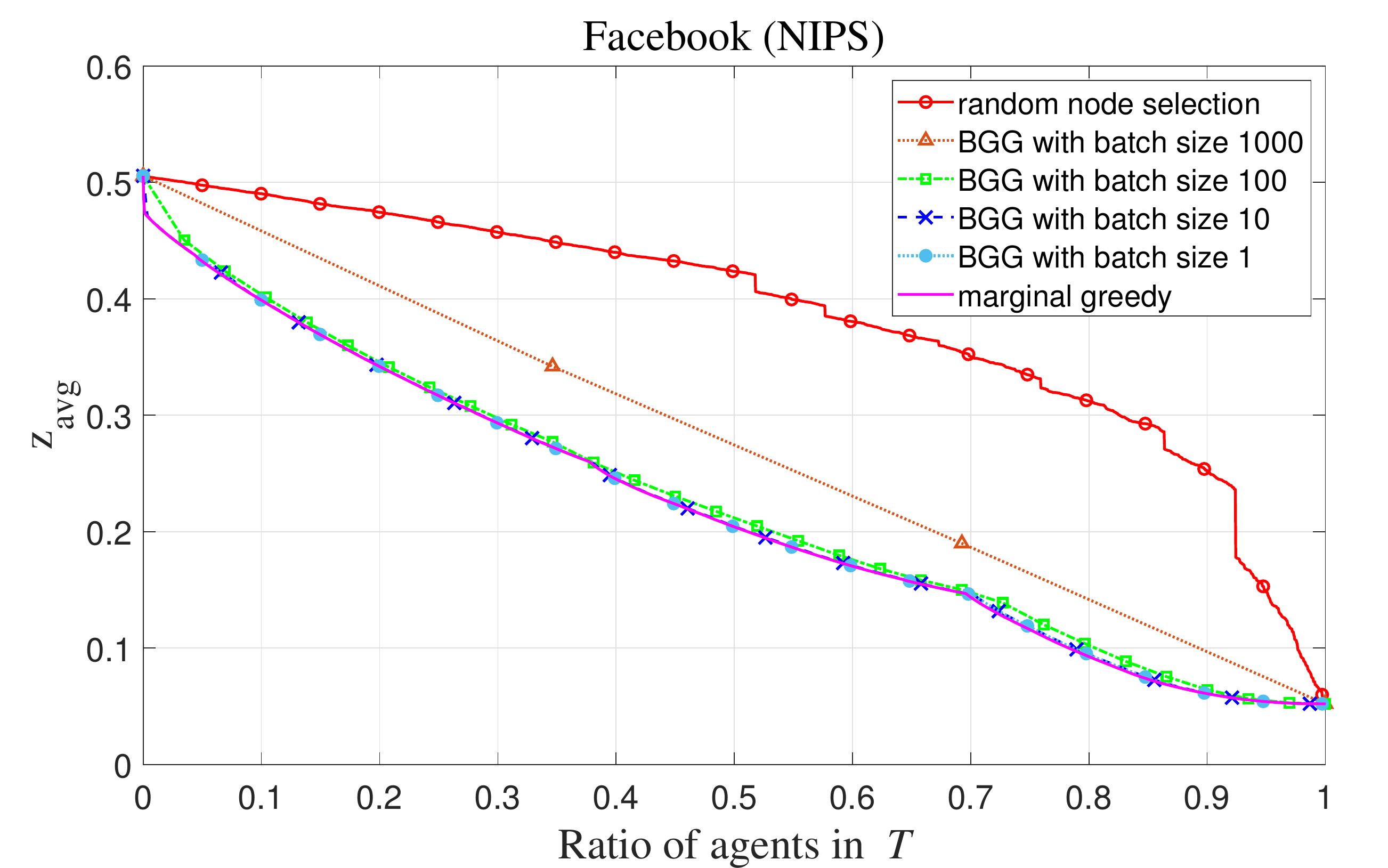}
    }
    \caption{Average Equilibrium Opinion on Small Networks.}
    \label{fig:small}
\end{figure}

As shown in Algorithm \ref{alg:marginal} and \ref{alg:batch_grad},
we have two heuristic strategies, Marginal Greedy and Batch Gradient Greedy (BGG), to select new agent into $T$. 
We also run experiments on a trivial random node selection strategy,
where it selects a new agent uniformly at random from $V\setminus T$ into $T$ until the budget is satisfied.

To compare their performance, we give the average equilibrium opinion $z_{avg}$ employing these three agent selection strategies on small networks in Figure \ref{fig:small}.
For fair comparison, we use the same setup of $s, P, u, l$ and $\alpha^{(0)}$ on the same network.
In each graph, one curve represents one strategy.
The curves share the same x-axis, which corresponds to the ratio of agents selected in $T$.

Observe that Marginal Greedy and Batch Gradient Greedy with batch size 1 have almost the same $z_{avg}$ with the change of ratio of agents in $T$, which implies that they share similar performance.
While the random node selection strategy performs the worst among them.
We also run experiments using Batch Gradient Greedy with different constant batch sizes.
When the batch size is small enough relative to the number of agents in the network, Batch Gradient Greedy have similar performance to Marginal Greedy. 
We will show more results on choosing different batch sizes in the next section.

\subsection{Batch Size Comparison}

\begin{figure}[h]
    \centering
    \subfigure[]{
        \includegraphics[width=2.65in]{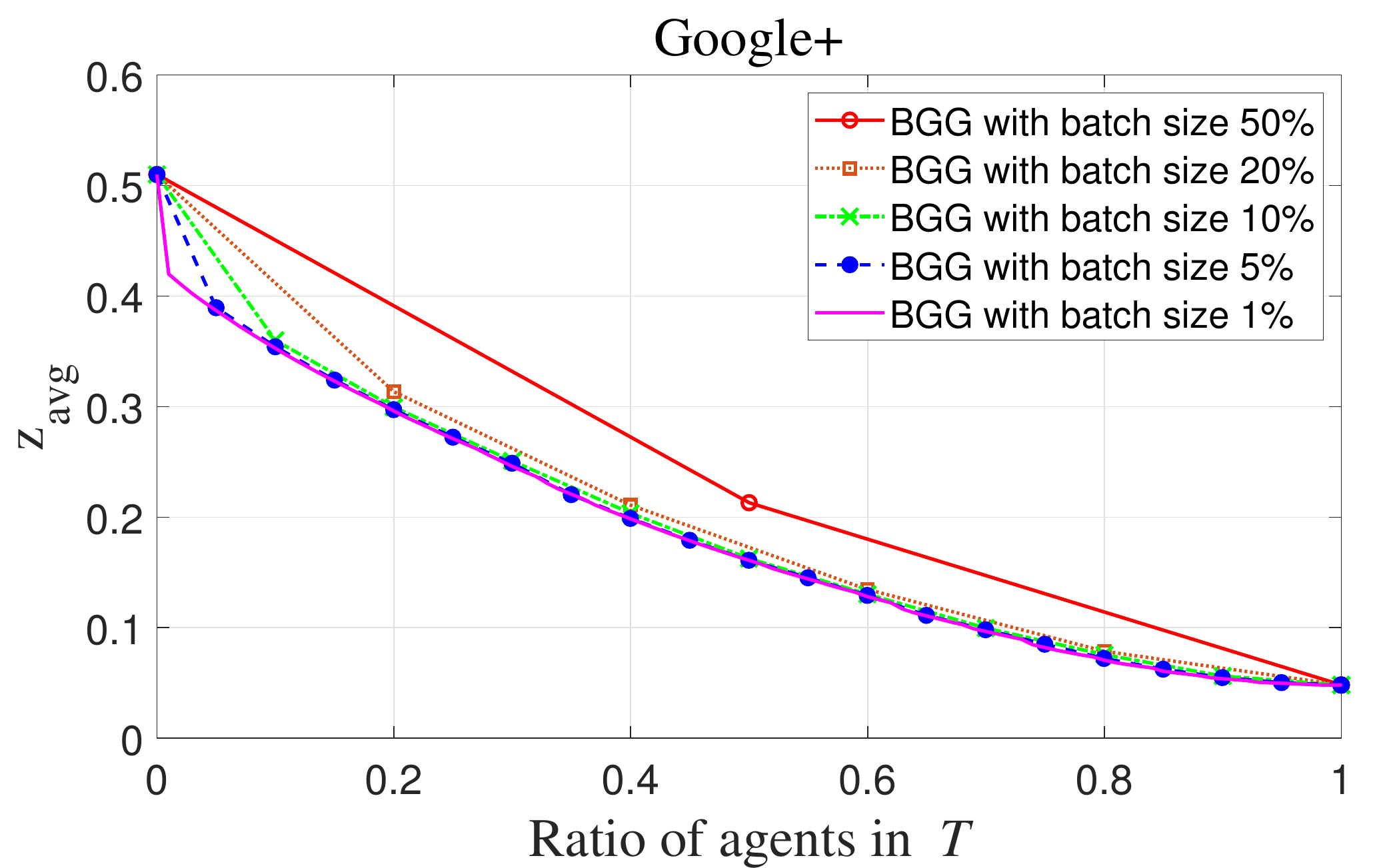}
		}
    \subfigure[]{
        \includegraphics[width=2.65in]{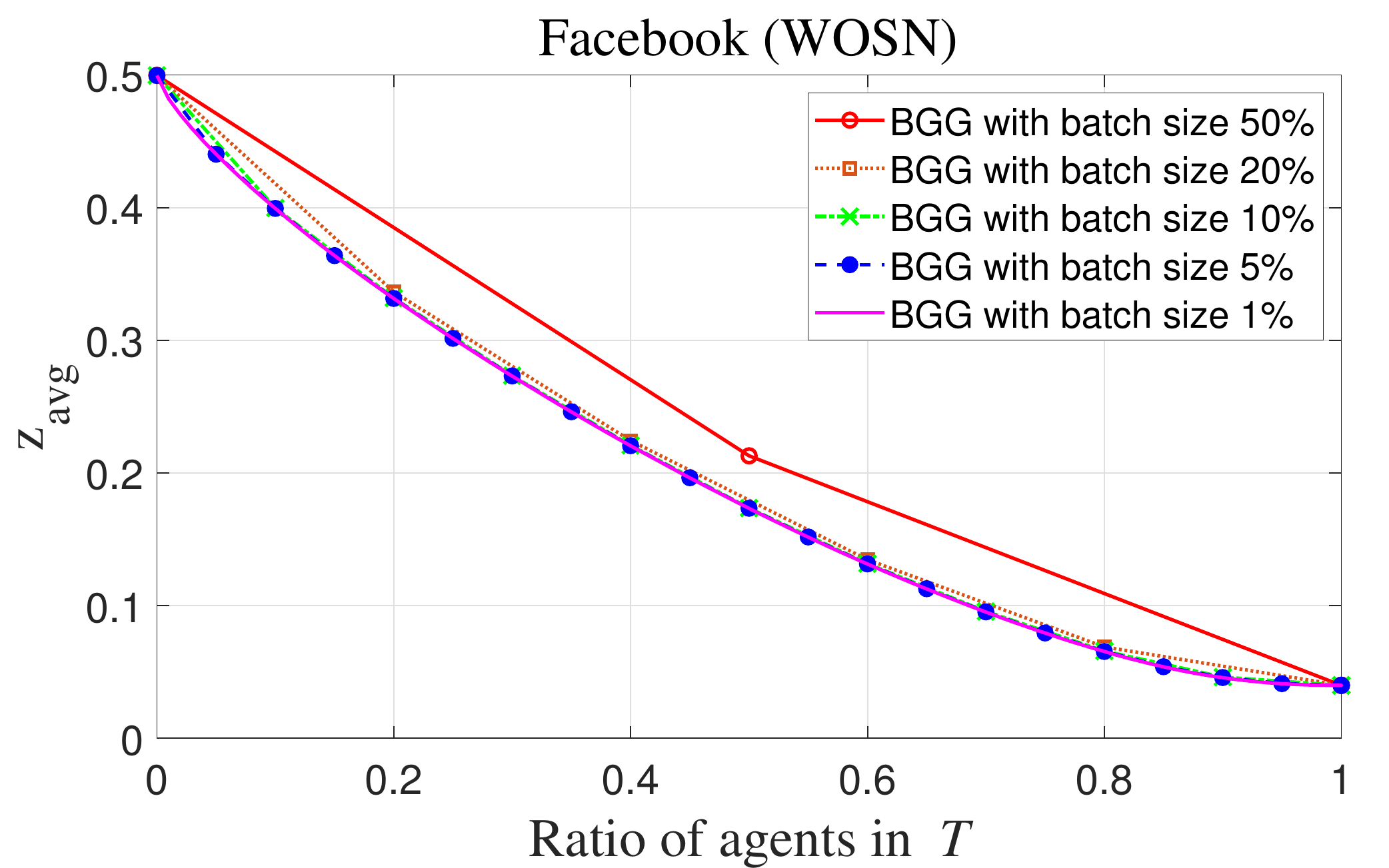}
    }
    \subfigure[]{
        \includegraphics[width=2.65in]{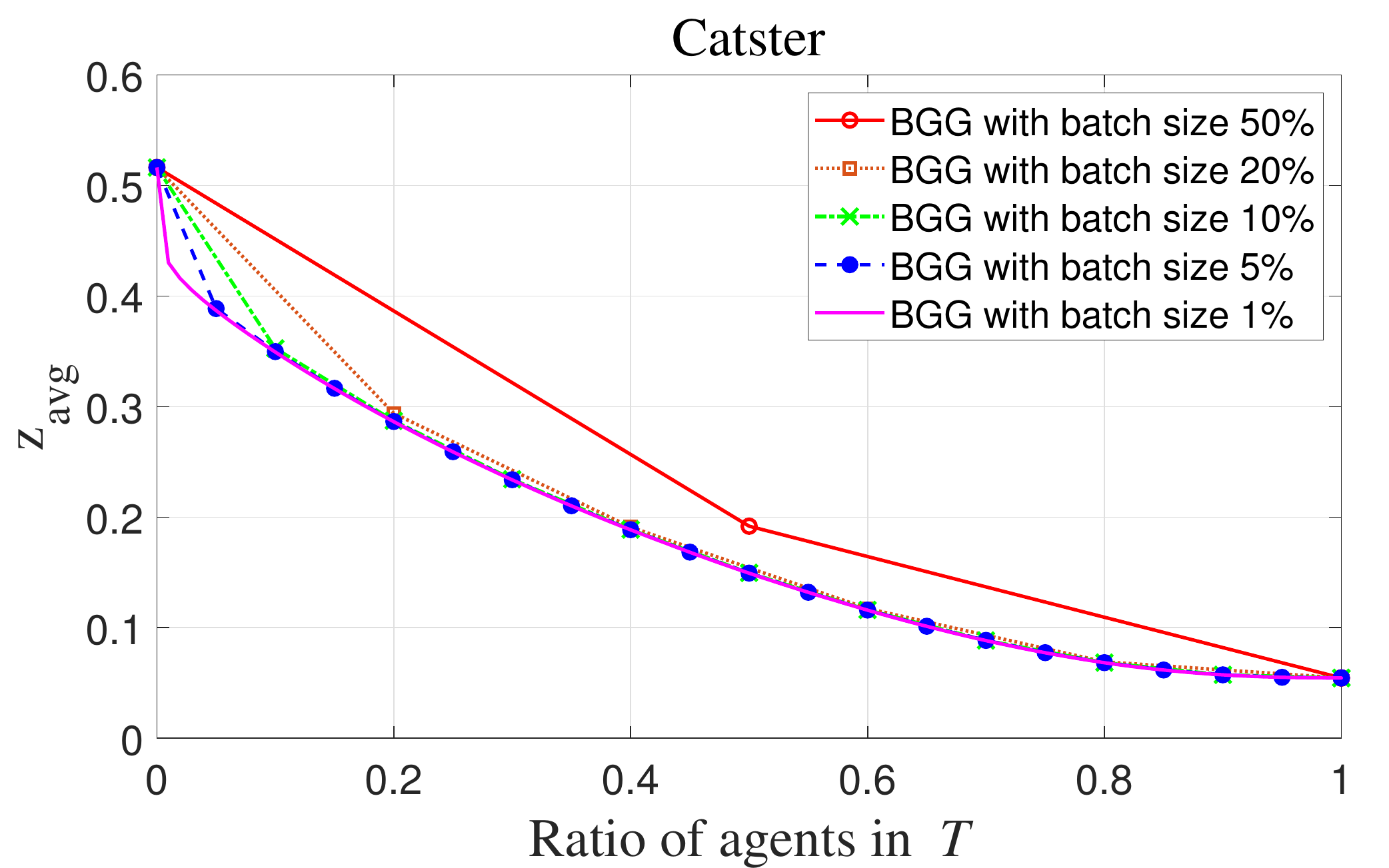}
    }
    \subfigure[]{
        \includegraphics[width=2.65in]{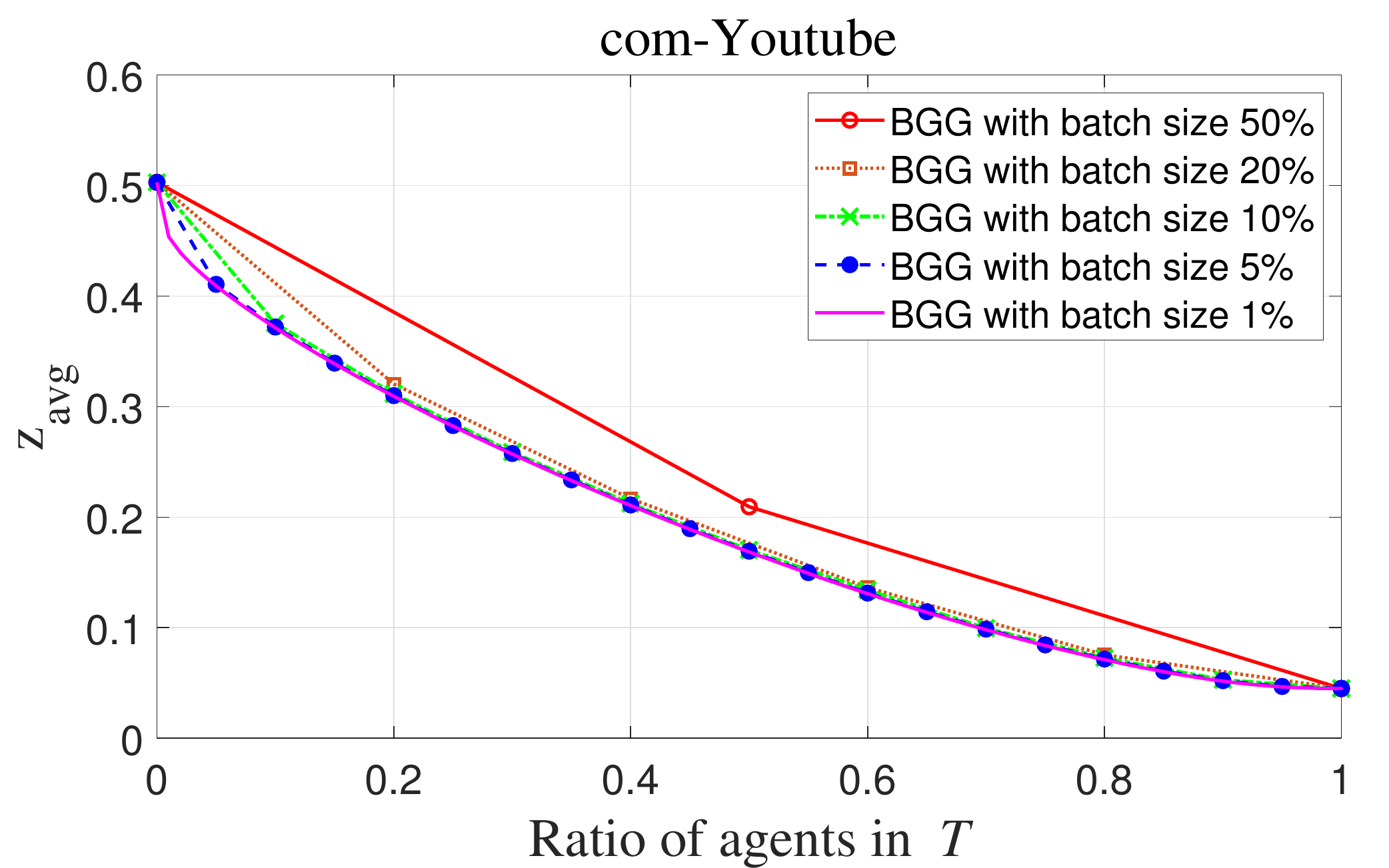}
    }
    \subfigure[]{
        \includegraphics[width=2.65in]{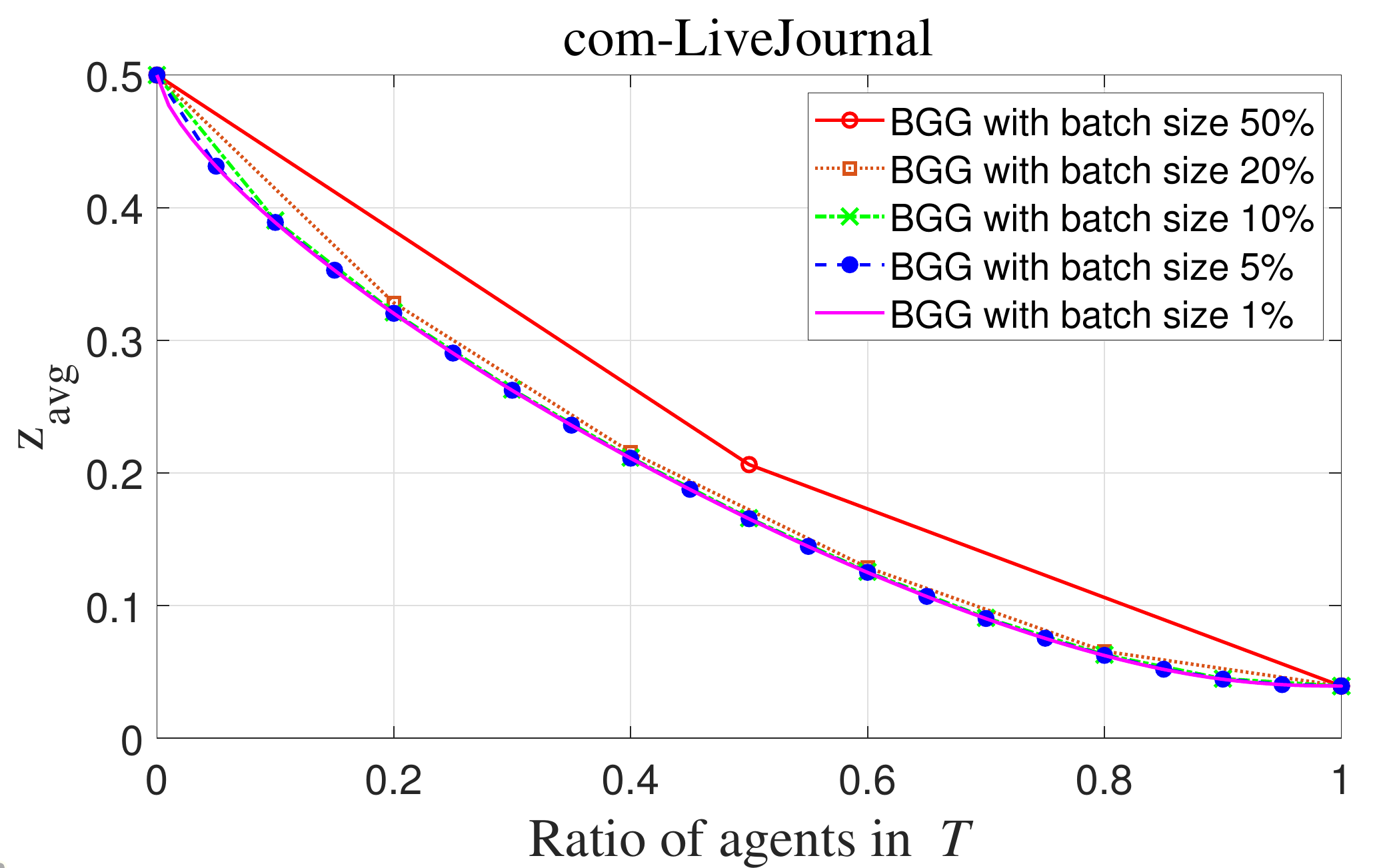}
    }
    \caption{Average Equilibrium Opinion on Large Networks.}
    \label{fig:large}
\end{figure}

When the budget is proportional to the number $n$ of agents,
if we use constant batch size, the number of times to pick a batch would be $O(n)$ which can be too many on large scale network.
One way to solve this problem is to choose the batch size proportional to the number of agents (or the budget), e.g. $1\%n$.
Then the number of times to pick a batch would become $O(1)$.

Figure \ref{fig:large} gives the average equilibrium opinion when using Batch Gradient Greedy (BGG) with different batch sizes on large graphs.
We can see that the batch sizes from 1\% to 10\%  get similarly good performance.
The results of batch size 20\% is slightly worse but acceptable.
While the batch size 50\% performs the worst among them and is not a good choice.
Thus, it is recommended to select 10\% of the number of agents (or the budget) or less as the batch size to balance the speed and performance.

\begin{figure}[h]
    \centering
    \subfigure[]{
        \includegraphics[width=2.65in]{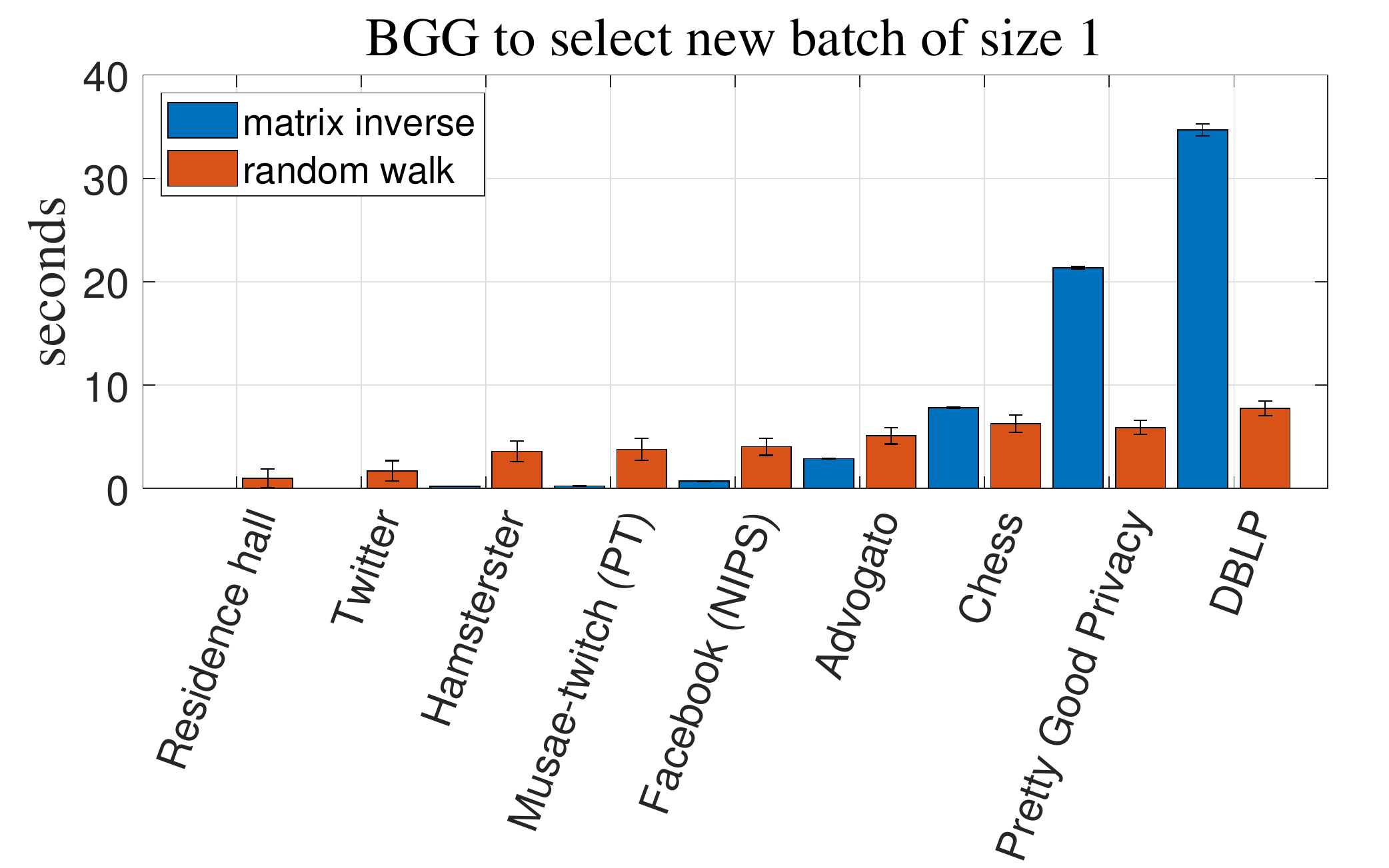}
		}
    \subfigure[]{
        \includegraphics[width=2.65in]{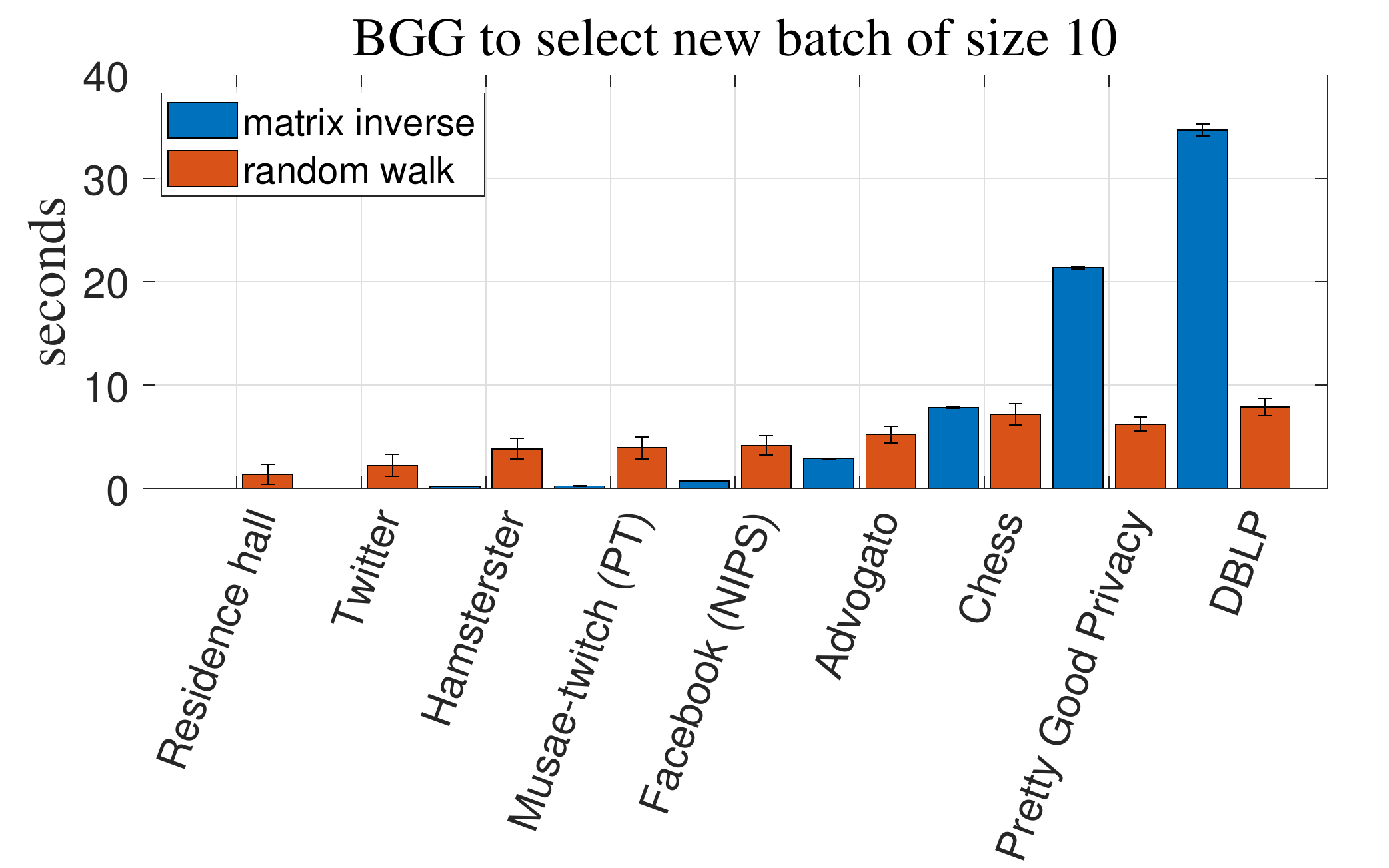}
    }
    \subfigure[]{
        \includegraphics[width=2.65in]{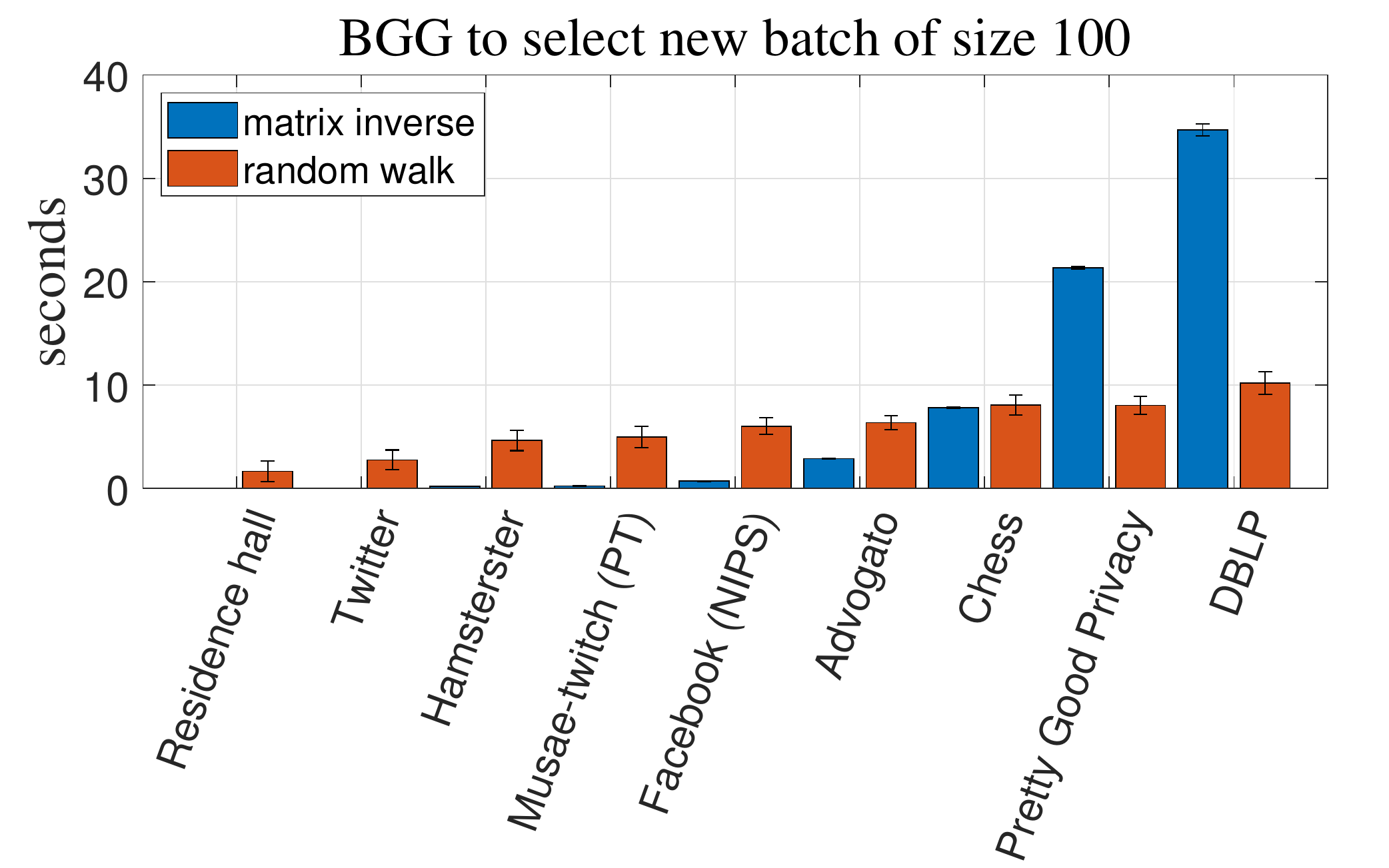}
    }
    \subfigure[]{
        \includegraphics[width=2.65in]{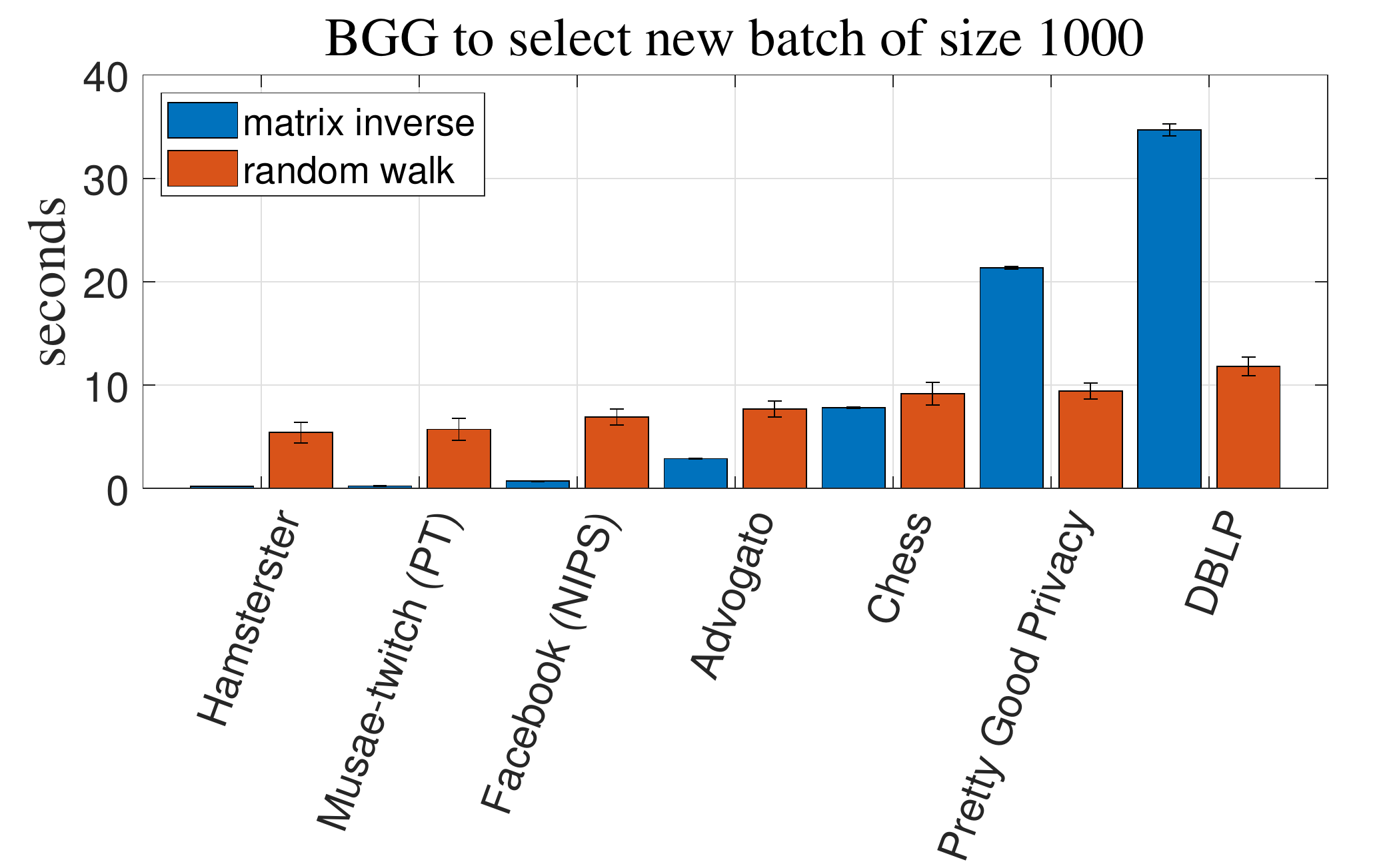}
    }
    \subfigure[]{
        \includegraphics[width=2.65in]{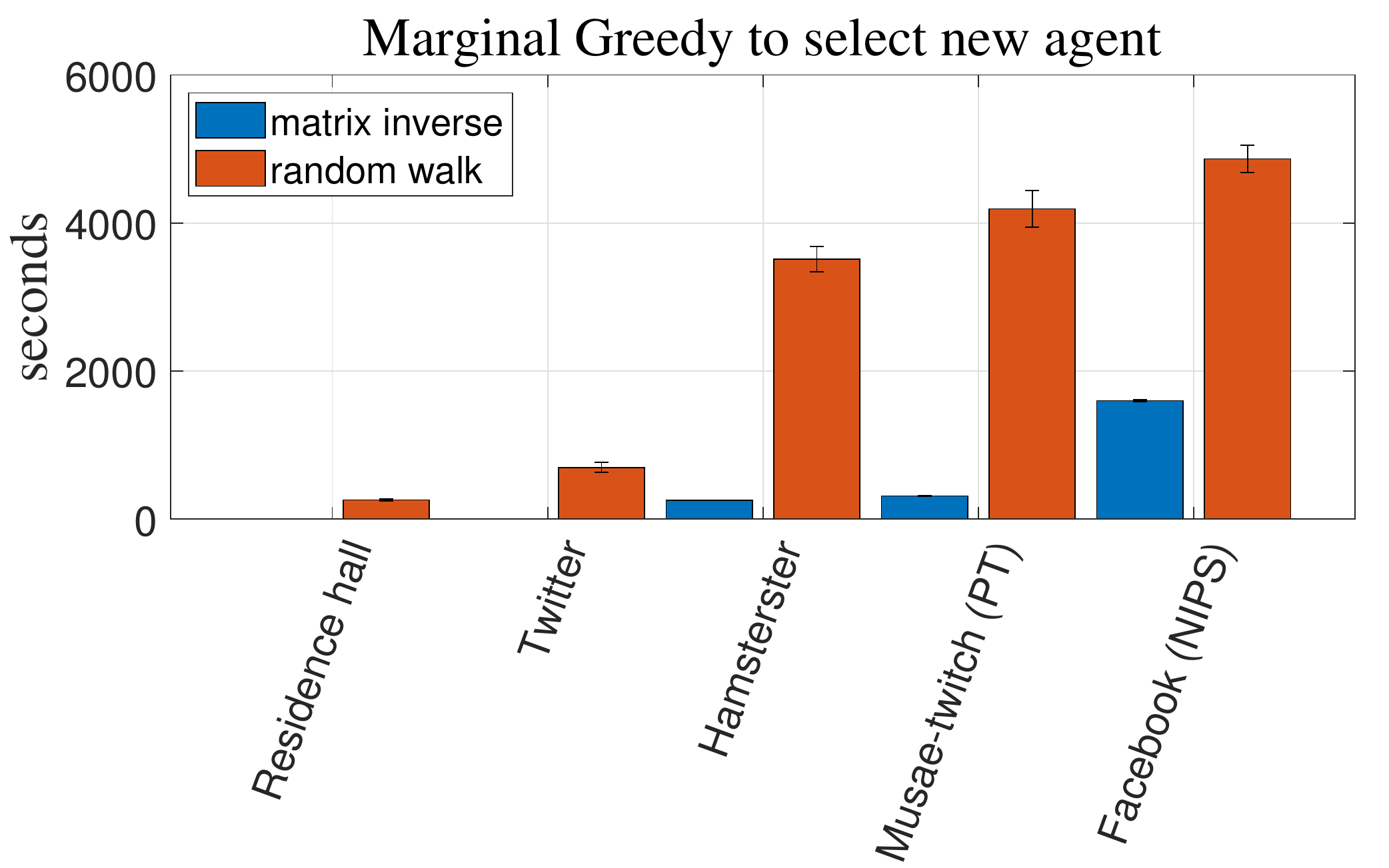}
    }
    \caption{Average and Standard Deviation of Running Time to Select New Agent or Batch.}
    \label{fig:time}
\end{figure}

\subsection{Running Time Comparison}

We compare the actual running time of Marginal Greedy and Batch Gradient Greedy using different methods to compute (or approximate) the equilibrium on only small networks,
since the results on larger networks are too time-consuming to collect.

Figure \ref{fig:time} (a) to (d) show the running time of Batch Gradient Greedy (BGG).
We compare the efficiency of computing the equilibrium opinion vector $z(\alpha)$ using the matrix inverse $z(\alpha) = [I - (I - A)P]^{-1} A s$ and approximating $z(\alpha)$ using the random walk recurrence $z^{(0)} \in [0,1]^V$
and $z^{(t+1)} := A s + (I - A) P z^{(t)}$, where $A = \Diag(\alpha)$.
For each network, we run 30 different random setups of $s, P, u, l$ and $\alpha^{(0)}$.
Then we record the running time to select the first batch of different batch sizes in each setup and give the average and standard deviation in the bar graphs.
Observe that the matrix inverse is faster for networks with less than a few thousands agents.
But roughly starting from Chess, the matrix inverse requires more time to select a new batch than the random walk recurrence, which implies that the complexity of matrix inverse is much higher.

While in Figure \ref{fig:time} (e), we show the running time results using Marginal Greedy.
We run only one random setup of $s, P, u, l$ and $\alpha^{(0)}$ on the same network and collect the running time to select the first 30 agents.
Then we report the average and standard deviation of running time to select a new agent.
Together with Figure \ref{fig:time} (a) to (d), we see that Marginal Greedy is significantly more time-consuming than Batch Gradient Greedy since Marginal Greedy has to compute the equilibrium opinion vector for adding each candidate before selecting the best one.
Considering they have similar performance as shown in Section \ref{sec:agent_selection_comp}, Batch Gradient Greedy with proper batch size is much more efficient than Marginal Greedy.

\subsection{Resistance Generation From Power Law Distribution}
\begin{figure}[h]
    \centering
    \subfigure[]{
        \includegraphics[width=2.65in]{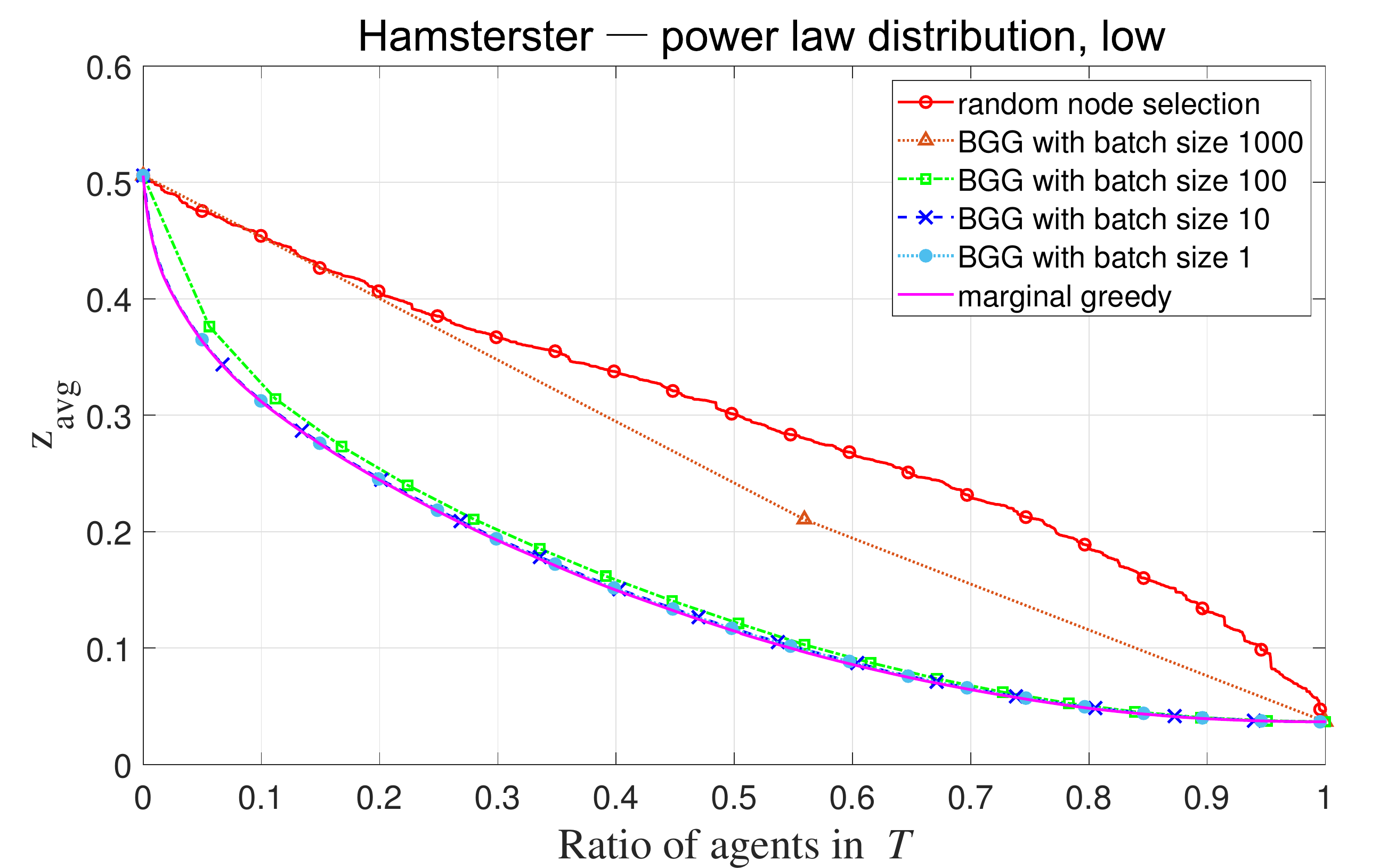}
		}
    \subfigure[]{
        \includegraphics[width=2.65in]{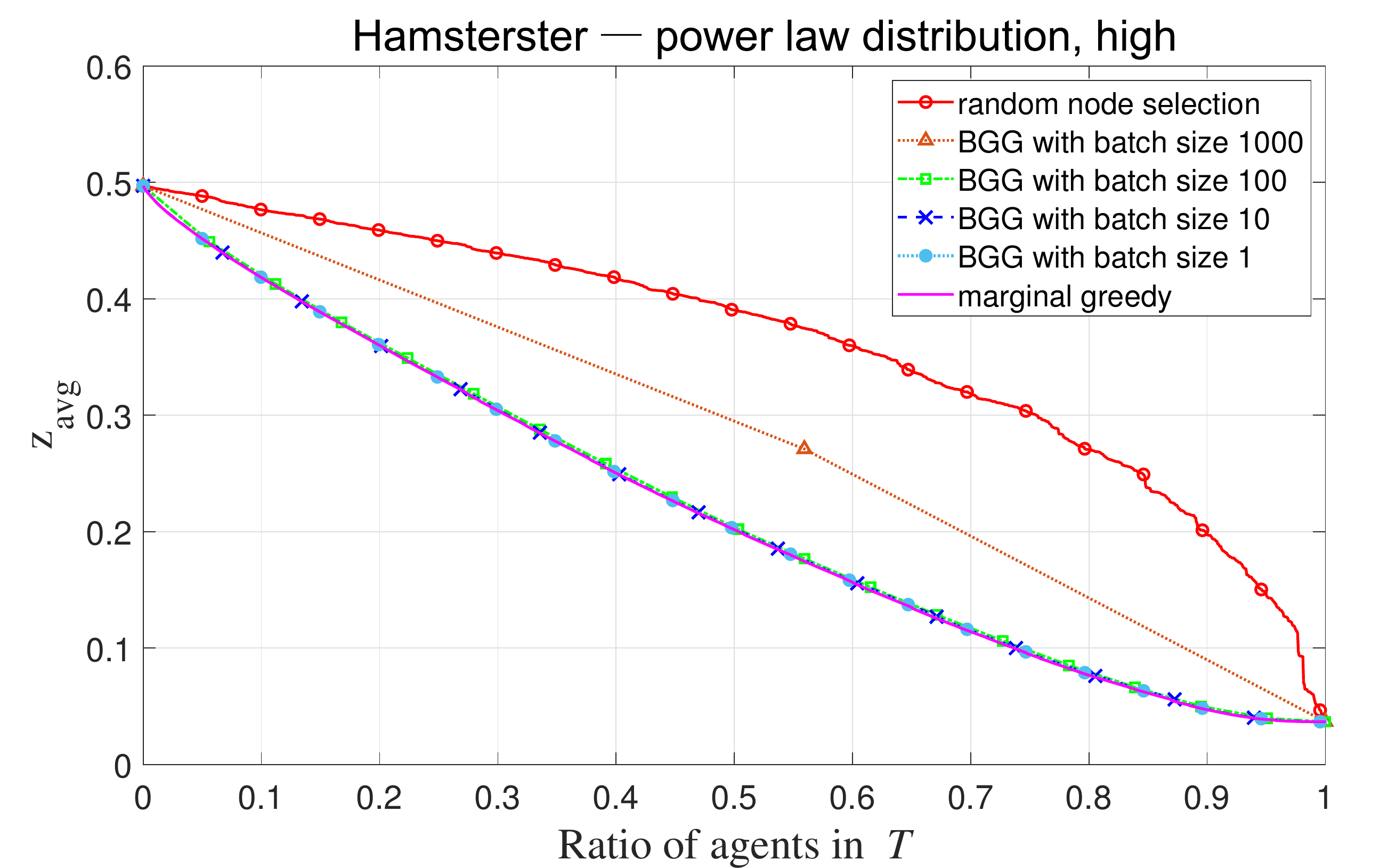}
    }
    \caption{Average Equilibrium Opinion on Hamsterster with $\alpha^{(0)}$ Generated from Power Law Distribution.}
    \label{fig:pl_small}
\end{figure}

We run experiments with the initial resistance vector $\alpha^{(0)}$ generated from the power law distribution instead of the uniform distribution to see how the heuristic algorithms respond to different distributions.
Particularly, each coordinate $\alpha^{(0)}_i$ is independently generated from $[l_i, u_i]$ with probability density function $f(x) = A x^{-2}$, where $A$ is the normalization constant.
Observe that the initial resistance vector generated in this way would have most of its coordinates being low values.
Figure \ref{fig:pl_small} (a) shows the results using the above setup (denoted as power law distribution, low) on Hamsterster where $s, P, u, l$ are unchanged.
We also run experiments using a resistance vector with most of its coordinates having high values.
For each coordinate $\alpha^{(0)}_i$, we first independently generate a $\alpha_i$ from $[l_i, u_i]$ with probability density function $f(x) = A x^{-2}$
and then compute $\alpha^{(0)}_i = u_i - \alpha_i + l_i$.
The corresponding results (denoted as power law distribution, high) on Hamsterster are given in Figure \ref{fig:pl_small} (b).
We can see that Marginal Greedy and Batch Gradient Greedy with batch size 1 still have similar performance under different resistance distributions.

\begin{figure}[h]
    \centering
    \includegraphics[width=2.65in]{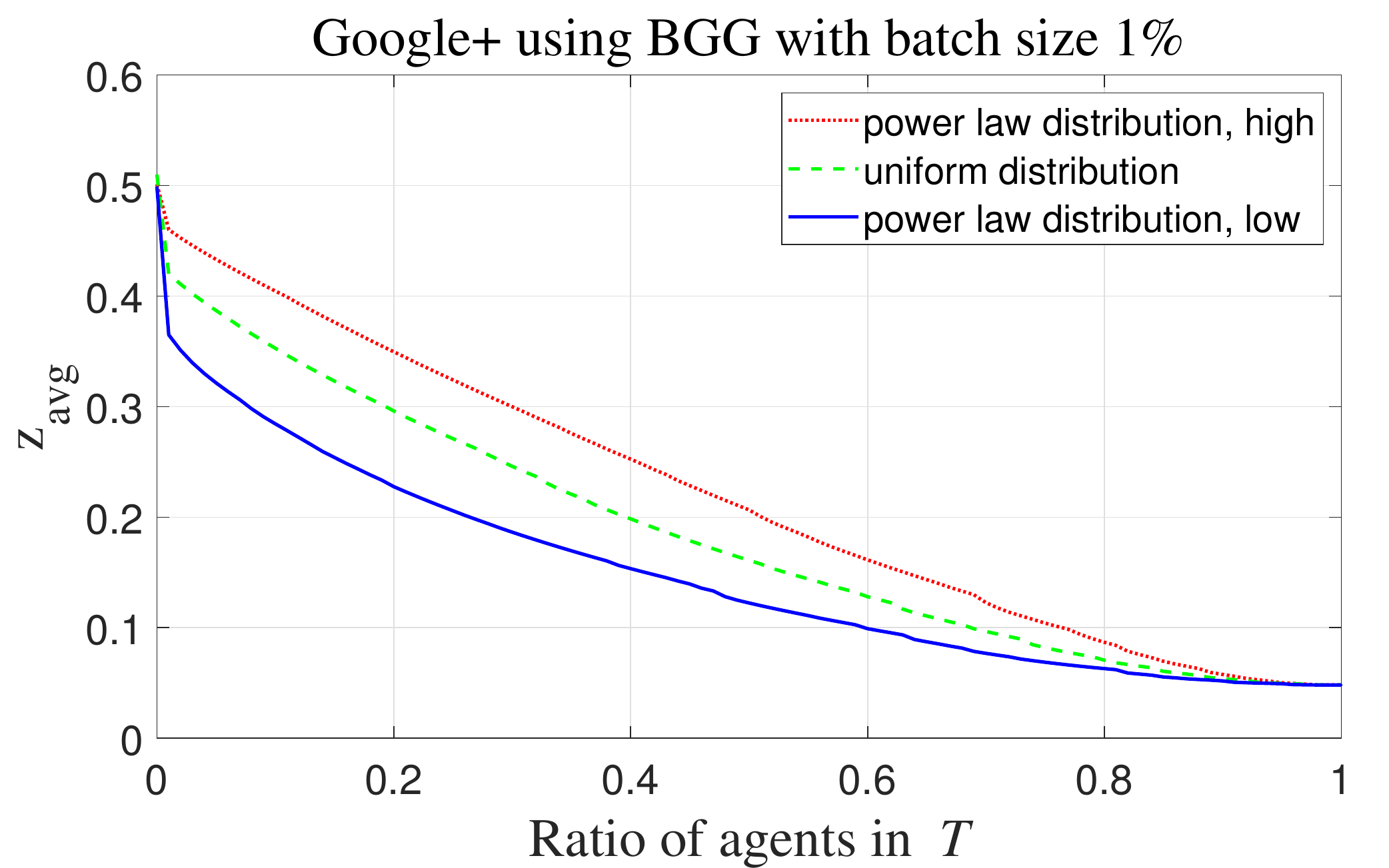}
    \caption{Average Equilibrium Opinion on Google+ with $\alpha^{(0)}$ Generated from Different Distributions.}
    \label{fig:pl_large}
\end{figure}

Figure \ref{fig:pl_large} gives the average equilibrium opinion on Google+ with initial resistance vector $\alpha^{(0)}$ generated from different distributions.
We only run experiments with batch size 1\% since the performance is good enough.
We can conclude that under the same $s, P, u, l$ and budget, the obtained average equilibrium opinion would be higher if the agents of the network tend to have higher resistance.
\section{Technical Proofs}
\label{sec:proofs}

\begin{lemma}(Lemma~\ref{le:PM_M} restated)
    Given ${K}\subsetneq V$ and $\alpha \in (0,1)^V$, let $A := \Diag(\alpha)$
		and recall that $P$ is the irreducible interaction matrix.
		Then, the inverse $M=[I-(I-A_{-K})P]^{-1}$ exists,
    and every entry of $M$ is positive.
		 Moreover, for each $k \in V$, 
			define $a_k = 0$ if $k \in {K}$, otherwise $a_k = \alpha_k$.
		Then, we have:

		\begin{enumerate}
        \item $(PM)_{kk}=\frac{M_{kk}-1}{1-a_k} > 0$;
        \item $(PM)_{kj}=\frac{M_{kj}}{1-a_k} > 0$, for each $j\not=k$.
    \end{enumerate}
\ignore{
    \begin{enumerate}
        \item $M_{kk}=(1-a)(PM)_{kk}+1 \geq 1$,
        \item $M_{kj}=(1-a)(PM)_{kj}$ for each $j\not=k$.
    \end{enumerate}
}
\end{lemma}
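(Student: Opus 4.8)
The plan is to proceed in two stages: first establish that $M$ exists and is entrywise positive, and then read off the two identities directly from the defining relation $[I-(I-A_{-K})P]\,M = I$.

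For the first stage I would set $Q := (I-A_{-K})P$ and observe that $Q$ is a nonnegative matrix with exactly the same support (zero/nonzero pattern) as $P$, since the diagonal matrix $(I-A_{-K})$ has strictly positive diagonal entries $1-a_k > 0$ for every $k$. Hence $Q$ inherits the irreducibility of $P$. The row sums of $Q$ are $\sum_j Q_{kj} = (1-a_k)$, which equals $1$ for $k\in K$ and is strictly less than $1$ for $k\notin K$; because $K \subsetneq V$, at least one row sum is strictly below $1$. The key claim is then that the spectral radius satisfies $\rho(Q) < 1$. I would prove this using the Perron--Frobenius theorem: let $v > 0$ be the right Perron eigenvector with $Qv = \rho(Q)v$, and pick $i$ with $v_i = \max_k v_k$; then $\rho(Q)v_i = (1-a_i)\sum_j P_{ij}v_j \le (1-a_i)v_i$, so $\rho(Q) \le 1-a_i \le 1$. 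Ruling out $\rho(Q)=1$ is the main obstacle: equality forces $\sum_j P_{ij}v_j = v_i$, i.e.\ $v_j = v_i$ at every out-neighbour of $i$, and propagating this equality through the irreducible support of $P$ makes $v$ constant, which then contradicts the strict deficit $(1-a_\ell)<1$ at any leaky node $\ell \notin K$. Once $\rho(Q) < 1$, the Neumann series $M = \sum_{j\ge 0} Q^j$ converges and shows $M$ exists; entrywise positivity follows because irreducibility guarantees, for each pair $(i,k)$, a walk from $i$ to $k$ contributing a strictly positive term to some power $Q^j$. In particular a closed walk through $k$ gives $(Q^j)_{kk}>0$ for some $j\ge 1$, so $M_{kk} = 1 + \sum_{j\ge 1}(Q^j)_{kk} > 1$.

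For the second stage, write $B := I - Q$, so that $BM = I$ by definition of the inverse, and compute the $(k,j)$ entry. Since $B_{kl} = \delta_{kl} - (1-a_k)P_{kl}$, we get
$$(BM)_{kj} = M_{kj} - (1-a_k)\sum_l P_{kl}M_{lj} = M_{kj} - (1-a_k)(PM)_{kj}.$$
Equating this with $\delta_{kj}$ and solving yields $(PM)_{kk} = \frac{M_{kk}-1}{1-a_k}$ for the diagonal case $j=k$, and $(PM)_{kj} = \frac{M_{kj}}{1-a_k}$ for $j\ne k$. The divisions are legitimate because $1-a_k > 0$ in all cases: it equals $1$ when $k\in K$ and lies in $(0,1)$ when $k\notin K$. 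Positivity of the two quantities is then immediate from the first stage, since $M_{kk}>1$ gives $(PM)_{kk}>0$ and $M_{kj}>0$ gives $(PM)_{kj}>0$.

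I expect the only genuinely delicate point to be the strict spectral bound $\rho(Q)<1$; everything else is bookkeeping with the Neumann series and the entrywise reading of $BM=I$. An alternative to the eigenvector casework, which I would fall back on if I preferred a softer argument, is the probabilistic interpretation: $Q$ is the transition operator of a diluted random walk whose surviving mass decays to zero because the irreducible chain eventually visits a leaky node, and this decay is exactly the statement $\|Q^j\|\to 0$, equivalently $\rho(Q)<1$.
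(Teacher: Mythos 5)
Your proposal is correct and follows essentially the same route as the paper: establish that $(I-A_{-K})P$ has spectral radius strictly less than $1$ so that the Neumann series gives existence and entrywise positivity of $M$, then read the two identities off the rows of $[I-(I-A_{-K})P]M=I$. The only difference is that where the paper justifies the spectral bound informally via the decaying measure of a diluted random walk (exactly the probabilistic fallback you mention), you supply the more rigorous Perron--Frobenius eigenvector argument, which correctly uses $K\subsetneq V$ and irreducibility to rule out $\rho=1$.
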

\begin{proof}
Observe that $P$ corresponds to an irreducible random walk.
Hence,
$(I-A_{-K})P$ represents a diluted random walk, where at the beginning of each step,
the measure at nodes $i \notin K$ will suffer a factor of $1 - \alpha_i\in(0,1)$.
The irreducibility of the random walk $P$ means that every state is reachable from any state.
Hence, starting from any measure vector, eventually the measure at every node will tend to 0.
This means that $(I-A_{-K})P$ has eigenvalues with magnitude strictly less than 1.
Therefore, we can consider the following Neumann series of a matrix:
$$M=[I-(I-A_{-K})P]^{-1}=I + \sum_{k=1}^{\infty}[(I-A_{-K})P]^k,$$
which implies that the inverse $M$ exists, and every entry of $M$ is positive;
in particular, for every $k \in V$, $M_{kk} > 1$.

By the definition of $M$, we have $[I-(I-A_{-{K}})P]M=I$.
We fix some $k \in V$.
By considering the $(k,k)$-the entry, i.e.,
the dot product between
the $k$-th row of $[I-(I-A_{-{K}})P]$
    and the $k$-th column of $M$,
		we have $$
    M_{kk}-\sum_{i \in V}(1-a_k)P_{ki}M_{ik}=1
    $$
where $a_k=0$ if $k\in {K}$, otherwise $a_k=\alpha_k<1$.
Hence, we have $
    (PM)_{kk}=\sum_{i \in V} P_{ki} M_{ik}=\frac{M_{kk}-1}{1-a_k}.
    $

Similarly, for  $j\not=k$, by considering the dot product between
    the $k$-th row of $[I-(I-A_{-{K}})P]$
    and the $j$-th column of $M$,
    we have
    $$
    M_{kj}-\sum_{i \in V} (1-a_k)P_{ki}M_{ij}=0.
    $$
		
		Hence, we have for $j\not=k$,
    $$
    (PM)_{kj}=\sum_{i \in V}P_{ki}M_{ij}=\frac{M_{kj}}{1-a_k},
    $$
as required.
\end{proof}

\begin{lemma}(Lemma~\ref{le:mar_mon} restated)
Referring to Lemma~\ref{le:same_sign}
.
For any $\alpha \in (0,1)^V$ and any $i \in V$,
denote $M=[I-(I-A_{-\{i\}})P]^{-1}$.
Then, $\frac{\partial f(\alpha)}{\partial \alpha_i}$ has the same sign in $\{-, 0, +\}$
as $s_i -\sum_{j\not=i}M_{ij}\alpha_j s_j$,
which is independent of $\alpha_i$.
\end{lemma}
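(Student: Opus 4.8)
The plan is to build directly on Lemma~\ref{le:same_sign}, which already establishes that $\frac{\partial f(\alpha)}{\partial \alpha_i}$ shares the sign of $s_i - z_i(\alpha)$. Hence it suffices to show that $s_i - z_i(\alpha)$ has the same sign as $s_i - \sum_{j \neq i} M_{ij}\alpha_j s_j$, and that the latter does not depend on $\alpha_i$. The independence is immediate: the matrix $M = [I - (I - A_{-\{i\}})P]^{-1}$ is built from $A_{-\{i\}}$, whose $i$-th diagonal entry is zeroed out, so neither $M$ nor the sum $\sum_{j \neq i} M_{ij}\alpha_j s_j$ involves $\alpha_i$ at all.

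For the sign claim, I would express the full inverse $X := [I - (I-A)P]^{-1}$ in terms of $M$. Since $A = A_{-\{i\}} + \alpha_i \mathbf{e}_i \mathbf{e}_i^\top$, we have $I - (I-A)P = M^{-1} + \alpha_i \mathbf{e}_i \mathbf{e}_i^\top P$, a rank-one perturbation, so the Sherman--Morrison formula yields
$$X = M - \frac{\alpha_i}{1 + \alpha_i (PM)_{ii}}\, M \mathbf{e}_i \mathbf{e}_i^\top PM.$$
Reading off the $i$-th row and using Lemma~\ref{le:PM_M} with $K = \{i\}$ (so that $a_i = 0$, giving $(PM)_{ii} = M_{ii} - 1$ and $(PM)_{ij} = M_{ij}$ for $j \neq i$), together with $(M \mathbf{e}_i \mathbf{e}_i^\top PM)_{ij} = M_{ii}(PM)_{ij}$, I would obtain the closed forms
$$X_{ii} = \frac{M_{ii}}{1 - \alpha_i + \alpha_i M_{ii}}, \qquad X_{ij} = \frac{(1-\alpha_i)M_{ij}}{1 - \alpha_i + \alpha_i M_{ii}} \quad (j \neq i).$$

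The final step is the algebraic simplification of $s_i - z_i(\alpha) = s_i - \sum_{j \in V} X_{ij}\alpha_j s_j$. Substituting the two expressions above and collecting over the common denominator $D := 1 - \alpha_i + \alpha_i M_{ii}$, the $j = i$ contribution combines with $s_i$ so that the $\alpha_i M_{ii}$ terms cancel in the numerator, leaving the clean factorization
$$s_i - z_i(\alpha) = \frac{(1-\alpha_i)\bigl(s_i - \sum_{j \neq i} M_{ij}\alpha_j s_j\bigr)}{1 - \alpha_i + \alpha_i M_{ii}}.$$
Because $\alpha_i \in (0,1)$ and $M_{ii} > 1 > 0$ by Lemma~\ref{le:PM_M}, the prefactor $\frac{1-\alpha_i}{1 - \alpha_i + \alpha_i M_{ii}}$ is strictly positive, so $s_i - z_i(\alpha)$ and $s_i - \sum_{j \neq i} M_{ij}\alpha_j s_j$ share the same sign in $\{-,0,+\}$, which completes the argument. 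The main obstacle I anticipate is purely the bookkeeping in this last simplification --- correctly cancelling the $\alpha_i M_{ii}$ term from the $X_{ii}$ numerator against the denominator so that the common $(1-\alpha_i)$ factor emerges --- rather than any conceptual difficulty, since Sherman--Morrison and Lemma~\ref{le:PM_M} do all the structural work of converting $(PM)$ entries into $M$ entries.
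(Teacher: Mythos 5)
Your proposal is correct and follows essentially the same route as the paper's own proof: Sherman--Morrison applied to the rank-one perturbation $\alpha_i\mathbf{e}_i\mathbf{e}_i^\top P$, the substitution $(PM)_{ii}=M_{ii}-1$ and $(PM)_{ij}=M_{ij}$ from Lemma~\ref{le:PM_M} with $K=\{i\}$, the same closed forms for $X_{ii}$ and $X_{ij}$, and the same factorization of $s_i-z_i(\alpha)$ with the positive prefactor $\frac{1-\alpha_i}{1-\alpha_i+\alpha_i M_{ii}}$. No substantive differences.
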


\begin{proof}
    Using the Sherman-Morrison formula, we consider
    \begin{align}
    \begin{split}
    X:=&[I-(I-A)P]^{-1}
    =[I-(I-A_{-\{i\}}-\alpha_i\mathbf{e}_i\mathbf{e}_i^T)P]^{-1}
    \\
    =&[I-(I-A_{-\{i\}})P+\alpha_i\mathbf{e}_i\mathbf{e}_i^T P]^{-1}
    \\
    =&
    M-\frac{\alpha_i}{1+\alpha_i\mathbf{e}_i^T PM\mathbf{e}_i}
    M\mathbf{e}_i\mathbf{e}_i^T PM.
    \end{split}
    \nonumber
    \end{align}
		
    Observe that
    $\mathbf{e}_i^T PM\mathbf{e}_i=(PM)_{ii}$ and
    $(M\mathbf{e}_i\mathbf{e}_i^T PM)_{ij}=M_{ii}(PM)_{ij}$
    for each $j\in V$.
    Then, by Lemma \ref{le:PM_M} with ${K}=\{i\}$, we have 
    \begin{align}
    \begin{split}
    X_{ii}=&
    M_{ii}-\frac{\alpha_iM_{ii}(PM)_{ii}}{1+\alpha_i(PM)_{ii}}
    =M_{ii}-\frac{\alpha_iM_{ii}(M_{ii}-1)}{1+\alpha_i(M_{ii}-1)}
    \\
    =&
    \frac{(1-\alpha_i+\alpha_iM_{ii}-\alpha_iM_{ii}+\alpha_i)M_{ii}}{1-\alpha_i+\alpha_iM_{ii}}
    =\frac{M_{ii}}{1-\alpha_i+\alpha_iM_{ii}};
    \end{split}
    \nonumber
    \end{align}
    and for $j\not=i$, 
    \begin{align}
    \begin{split}
    X_{ij}=&
    M_{ij}-\frac{\alpha_iM_{ii}(PM)_{ij}}{1+\alpha_i(PM)_{ii}}
    =M_{ij}-\frac{\alpha_iM_{ii}M_{ij}}{1+\alpha_i(M_{ii}-1)}
    \\
    =&\frac{(1-\alpha_i+\alpha_iM_{ii}-\alpha_iM_{ii})M_{ij}}{1-\alpha_i+\alpha_iM_{ii}}
    =\frac{(1-\alpha_i)M_{ij}}{1-\alpha_i+\alpha_iM_{ii}}.
    \end{split}
    \nonumber
    \end{align}
    By Lemma \ref{le:same_sign}, we know $\frac{\partial f(\alpha)}{\partial \alpha_i}$ and $s_i - z_i(\alpha)$ have the same sign in $\{-, 0, +\}$.
    Recall that $z(\alpha) = [I-(I-A)P]^{-1} A s = XAs$.
    Applying the above results, we have
    \begin{align}
    \begin{split}
	s_i - z_i(\alpha)
	=&s_i-\sum_{j\in V}X_{ij}\alpha_j s_j
    \\
    =&s_i-\frac{M_{ii}\alpha_i s_i}{1-\alpha_i+\alpha_iM_{ii}}
    -\sum_{j\not= i}
    \frac{(1-\alpha_i)M_{ij}\alpha_j s_j}{1-\alpha_i+\alpha_iM_{ii}}
    \\
    =&
    \frac{s_i(1-\alpha_i+\alpha_iM_{ii})-M_{ii}\alpha_is_i
    -(1-\alpha_i)\sum_{j\not= i}M_{ij}\alpha_j s_j}
    {1-\alpha_i+\alpha_iM_{ii}}
    \\
    =&
    \frac{(1-\alpha_i)(s_i-\sum_{j\not= i}M_{ij}\alpha_js_j)}
    {1-\alpha_i+\alpha_iM_{ii}}.
    \end{split}
    \nonumber
    \end{align}
		
    Since $\alpha_i\in (0,1)$, we conclude that
    $\frac{1-\alpha_i}{1-\alpha_i+\alpha_iM_{ii}}>0$.
    Thus $\frac{\partial f(\alpha)}{\partial \alpha_i}$, $s_i - z_i(\alpha)$ and $s_i-\sum_{j\not= i}M_{ij}\alpha_js_j$ have the same sign.
\end{proof}

\begin{lemma}(Lemma~\ref{le:M2R} restated)
For any $i,k\in V$ such that $i\not= k$, let $M=[I-(I-A_{-\{i\}})P]^{-1}$
and $R=[I-(I-A_{-\{i,k\}})P]^{-1}$. Then for any $j \in V$,
we have
\begin{enumerate}
\item $M_{jk}=\frac{R_{jk}}{1+\alpha_kR_{kk}-\alpha_k}$,
\item $M_{jh}=R_{jh}-\frac{\alpha_{k}R_{jk}R_{kh}}{1+\alpha_kR_{kk}-\alpha_k}$, for each $h\not=k$.
\end{enumerate}

In particular, the quantity in Lemma~\ref{le:mar_mon} can be rewritten as follows:
$$
s_i-\sum_{j\not=i}M_{ij}\alpha_js_j
=s_i-\sum_{j\not=i,k}R_{ij}\alpha_js_j
-\frac{\alpha_kR_{ik}}{1+\alpha_kR_{kk}-\alpha_k}
(s_k-\sum_{j\not=i,k}R_{kj}\alpha_js_j).
$$
\end{lemma}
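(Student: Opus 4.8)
The plan is to exploit that $M$ and $R$ differ only in the resistance value at the single coordinate~$k$, so that $M^{-1}$ is a rank-one perturbation of $R^{-1}$ and the Sherman--Morrison formula applies. Concretely, $A_{-\{i\}}$ and $A_{-\{i,k\}}$ agree everywhere except at coordinate~$k$, where the former has entry $\alpha_k$ and the latter has $0$; hence $A_{-\{i\}} = A_{-\{i,k\}} + \alpha_k \mathbf{e}_k \mathbf{e}_k^\top$. Substituting this into the definition of $M$ gives
\[
M^{-1} = I - (I - A_{-\{i,k\}})P + \alpha_k \mathbf{e}_k \mathbf{e}_k^\top P = R^{-1} + \alpha_k \mathbf{e}_k (\mathbf{e}_k^\top P),
\]
which is precisely a rank-one update of $R^{-1}$.

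Applying Sherman--Morrison with $u = \alpha_k \mathbf{e}_k$ and $v^\top = \mathbf{e}_k^\top P$ yields
\[
M = R - \frac{\alpha_k\, R \mathbf{e}_k \mathbf{e}_k^\top P R}{1 + \alpha_k (PR)_{kk}}.
\]
The crucial simplification is to evaluate the entries of $PR$ appearing here via Lemma~\ref{le:PM_M} with $K = \{i,k\}$: since $k \in K$ forces $a_k = 0$, that lemma gives $(PR)_{kk} = R_{kk} - 1$ and $(PR)_{kh} = R_{kh}$ for $h \neq k$. In particular the denominator becomes $1 + \alpha_k(R_{kk}-1) = 1 + \alpha_k R_{kk} - \alpha_k$, matching the claimed form. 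Reading off the $(j,h)$ entry of the correction term as $R_{jk}(PR)_{kh}$, the case $h \neq k$ gives formula~(2) immediately. For $h = k$, substituting $(PR)_{kk} = R_{kk}-1$ produces $M_{jk} = R_{jk} - \frac{\alpha_k R_{jk}(R_{kk}-1)}{1+\alpha_k R_{kk}-\alpha_k}$, whose numerator telescopes so that the bracketed factor collapses to $\frac{1}{1+\alpha_k R_{kk}-\alpha_k}$, yielding formula~(1).

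For the ``in particular'' identity, I would split $\sum_{j\neq i} M_{ij}\alpha_j s_j$ into the single term $j = k$ and the remaining terms $j \neq i,k$, substituting formula~(1) into the former and formula~(2) into the latter. Pulling the common factor $\frac{\alpha_k R_{ik}}{1+\alpha_k R_{kk}-\alpha_k}$ out of the correction terms and grouping the resulting $s_k$ contribution with $\sum_{j\neq i,k} R_{kj}\alpha_j s_j$ produces the factor $\bigl(s_k - \sum_{j\neq i,k} R_{kj}\alpha_j s_j\bigr)$, which is exactly the stated expression. I do not anticipate a genuine obstacle here: the only delicate points are tracking the sign in this regrouping and verifying the telescoping cancellation in the $h=k$ case, both of which are mechanical once Lemma~\ref{le:PM_M} has supplied the entries of $PR$.
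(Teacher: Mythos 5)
Your proposal is correct and follows essentially the same route as the paper's proof: the rank-one decomposition $A_{-\{i\}} = A_{-\{i,k\}} + \alpha_k \mathbf{e}_k\mathbf{e}_k^\top$, the Sherman--Morrison update of $R$, and the evaluation of $(PR)_{kk}$ and $(PR)_{kh}$ via Lemma~\ref{le:PM_M} with $K=\{i,k\}$ are exactly the paper's steps, including the telescoping in the $h=k$ case. The concluding substitution for the ``in particular'' identity is likewise the intended (and omitted-as-routine) computation.
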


\begin{proof}
Using the Sherman-Morrison formula, we have
\begin{align}
\begin{split}
M&=[I-(I-A_{-\{i,k\}})P+\alpha_k\mathbf{e}_k\mathbf{e}_k^T P]^{-1}
\\&
=R-\frac{\alpha_k}{1+\alpha_k\mathbf{e}_k^T PR\mathbf{e}_k}
R\mathbf{e}_k\mathbf{e}_k^T PR
\end{split}
\nonumber
\end{align}
We can compute that
$\mathbf{e}_k^T PR\mathbf{e}_k=(PR)_{kk}$
and
$(R\mathbf{e}_k\mathbf{e}_k^T PR)_{jh}=R_{jk}(PR)_{kh}$ for $j,h\in V$.
Then we have
$$
M_{jh}=R_{jh}-\frac{\alpha_kR_{jk}(PR)_{kh}}{1+\alpha_k(PR)_{kk}}.
$$
By Lemma \ref{le:PM_M}, we obtain
$$
M_{jh}=R_{jh}-\frac{\alpha_{k}R_{jk}R_{kh}}{1+\alpha_kR_{kk}-\alpha_k}
\text{  for $j,h\in V$ and $h\not=k$,}
$$
and
$$M_{jk}
=R_{jk}-\frac{\alpha_{k}R_{jk}(R_{kk}-1)}{1+\alpha_kR_{kk}-\alpha_k}
=\frac{R_{jk}}{1+\alpha_kR_{kk}-\alpha_k}
\text{   for $j\in V$.}
$$
as required.
\end{proof}

\begin{lemma}(Lemma~\ref{le:col_max} restated)
Suppose $\alpha \in (0,1)^V$,
recall that $A_{-\{i,k\}} := \Diag(\alpha_{-\{i,k\}})$,
and $P$ corresponds to an irreducible interaction matrix.
For any $i, k\in V$ such that $i\not=k$, let $R=[I-(I-A_{-\{i,k\}})P]^{-1}$,
then $R_{ii}=\max_{j\in V}R_{ji}$.
Moreover, $R_{ii}=R_{ki}$ if and only if
$P_{kk} + P_{ki}=1$.
\end{lemma}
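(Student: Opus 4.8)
The plan is to work directly from the defining matrix identity $[I-(I-A_{-\{i,k\}})P]R = I$ and read off scalar equations for the $i$-th column of $R$. Recall that in $A_{-\{i,k\}}$ the diagonal entries corresponding to $i$ and $k$ are zeroed out, while $a_j = \alpha_j > 0$ for all other $j$. Taking the dot product of each row of $I-(I-A_{-\{i,k\}})P$ with the $i$-th column of $R$, and separating the three cases by which diagonal coefficient appears, I would obtain after rearrangement the system $R_{ii} = 1 + \sum_{h} P_{ih}R_{hi}$, $R_{ki} = \sum_{h} P_{kh}R_{hi}$, and $R_{ji} = (1-\alpha_j)\sum_{h} P_{jh}R_{hi}$ for $j \neq i,k$. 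By Lemma~\ref{le:PM_M} (applied with $K=\{i,k\}$) every entry of $R$ is positive, so $\max_{h} R_{hi} > 0$, which keeps the subsequent strict inequalities meaningful.

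Next I would establish $R_{ii}=\max_{h}R_{hi}$ in two moves. First, for any $j \neq i,k$, since $P_{j\cdot}$ is a probability vector and $\alpha_j > 0$, the equation $R_{ji} = (1-\alpha_j)\sum_{h}P_{jh}R_{hi} \leq (1-\alpha_j)\max_h R_{hi} < \max_h R_{hi}$ shows such a $j$ can never attain the column maximum, so the maximum must sit at $i$ or $k$. Second, I would rule out $R_{ki} > R_{ii}$ by contradiction: since $R_{ki}=\sum_h P_{kh}R_{hi}$ is a convex combination of the column entries, if $R_{ki}$ were the strict unique maximum then all the probability mass would have to lie on the single index $h=k$, forcing $P_{kk}=1$; but that makes $k$ an absorbing state, contradicting the irreducibility of $P$. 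Hence $R_{ki}\leq R_{ii}$ and the claim $R_{ii}=\max_h R_{hi}$ follows.

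For the equality characterization I would argue both directions from the same convex-combination relation $R_{ki}=\sum_h P_{kh}R_{hi}$. For the forward direction, having already shown $R_{hi}<R_{ii}$ strictly for all $h\neq i,k$, the only way the convex combination $R_{ki}$ can equal the maximum $R_{ii}$ is for $P_{kh}=0$ whenever $h\neq i,k$, i.e.\ $P_{kk}+P_{ki}=1$. For the converse, substituting $P_{kk}+P_{ki}=1$ into the relation gives $R_{ki}=P_{kk}R_{ki}+P_{ki}R_{ii}$; since irreducibility again forbids $P_{kk}=1$, we may solve $(1-P_{kk})R_{ki}=P_{ki}R_{ii}$ with $P_{ki}=1-P_{kk}$ to conclude $R_{ki}=R_{ii}$. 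The main obstacle is the delicate interplay in the second and third steps: one must carefully combine the positivity of all $R_{hi}$ with the strictness of the convex-combination argument and the invocation of irreducibility to exclude $P_{kk}=1$, since these are precisely the places where a careless bound would lose the strict inequality needed to pin down the equality condition.
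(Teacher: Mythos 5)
Your proof is correct and follows essentially the same route as the paper's: the same three scalar equations for the $i$-th column of $R$, the same $(1-\alpha_j)$ contraction argument to exclude $j\neq i,k$ from the maximum, the same irreducibility-based contradiction to rule out $R_{ki}>R_{ii}$, and the same convex-combination reasoning for both directions of the equality characterization. The only (harmless) addition is your explicit invocation of Lemma~\ref{le:PM_M} for positivity of the entries of $R$, which the paper leaves implicit.
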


\begin{proof}
We have $[I-(I-A_{-\{i,k\}})P]R=I$.
By considering the dot product between
each row of $[I-(I-A_{-\{i,k\}})P]$ and column $i$ of $R$,
we have
$$
R_{ii}-\sum_{h \in V}P_{ih}R_{hi}=1,
R_{ki}-\sum_{h \in V}P_{kh}R_{hi}=0,
$$
$$
\text{and }R_{ji}-\sum_{h \in V}(1-\alpha_j)P_{jh}R_{hi}=0,
\text{   for } j\not=i,k.
$$
After rearranging,
we have
$$
R_{ii}=1+\sum_{h \in V}P_{ih}R_{hi},
R_{ki}=\sum_{h \in V}P_{kh}R_{hi},
$$
$$
\text{and }R_{ji}=(1-\alpha_j)\sum_{h \in V}P_{jh}R_{hi},
\text{   for } j\not=i,k.
$$
Now it suffices to show that for $j \neq i,k$, the above $R_{ji}$
cannot be the maximum among them, and $R_{ki}\leq R_{ii}$.

First, we show that $R_{ji}$
cannot be the maximum.
Since $\sum_{h \in V}P_{jh}=1$ and $\alpha_j\in (0,1)$,
we have
$$
R_{ji}=(1-\alpha_j)\sum_{h \in V} P_{jh}R_{hi}
\leq(1-\alpha_j)\max_{h \in V} R_{hi}
<\max_{h \in V}R_{hi}.
$$
Thus, $R_{ji}$ cannot be the maximum.

Next, we show that $R_{ki}\leq R_{ii}$ by contradiction.
Suppose $R_{ki}>R_{ii}$,
then $R_{ki}$ is the unique maximum in the $i$-th column of $R$.
Since $\sum_{h \in V} P_{kh} = 1$ and
$R_{ki}=\sum_{h \in V}P_{kh}R_{hi}$,
it must be the case that $P_{kk} = 1$.
This means $P$ corresponds to a random walk with absorbing state~$k$,
which contradicts that $P$ is irreducible.
Therefore, we have $R_{ki}\leq R_{ii}$, and hence $R_{ii}=\max_{h\in V}R_{hi}$.

Observe that we already know $R_{ji} < R_{ii}$ for $j \neq i,k$,
and $R_{ki}=\sum_{h \in V}P_{kh}R_{hi}$.
Hence, $R_{ki} = R_{ii}$ implies that $P_{kk} + P_{ki} = 1$.

Conversely, $P_{kk} + P_{ki} = 1$ implies that
$R_{ki} = P_{kk} R_{ki} + P_{ki} R_{ii}$.
As argued above, we must have $P_{kk} \neq 1$, which implies
$R_{ki} = R_{ii}$.
\end{proof}
\section{Conclusion and Future Work}
\label{sec:conclusion}

In this work we have introduced a novel formulation of social influence,
that focuses on interventions at the level of susceptibility using a well-established opinion dynamics model.  We  give a solid theoretical analysis
of the unbudgeted variant of the opinion susceptibility problem,
and designed scalable local search algorithms that
can solve the problem optimally on graphs with millions of nodes. We also prove that the budgeted variant is NP-hard, and provide scalable heuristics that we evaluate experimentally.  We believe that our techniques for the unbudgeted variant will lead to insights for  the analysis of the budgeted variant of the problem.  We leave the task of providing theoretical guarantees for greedy algorithms on the budgeted variant as future work.

{
\bibliographystyle{abbrv}
\bibliography{opinion,dihyper}
}

\end{document}